\documentclass[11pt]{article}

\usepackage{fullpage}
\usepackage{graphicx,amsmath,amssymb,mathtools}
\usepackage{lmodern}
\usepackage{todonotes}

\usepackage{xspace}

\usepackage{algorithm}
\usepackage{algorithmic}
\usepackage{tikz}
\usepackage{ifthen}

\let\oldbibliography\thebibliography
\renewcommand{\thebibliography}[1]{%
  \oldbibliography{#1}%
  \setlength{\itemsep}{0pt}%
}

\title{Testing Formula Satisfaction\thanks{Research supported in part by an ERC-2007-StG grant number 202405.} \thanks{A preliminary version of this work appeared in the Proceedings of 13th Scandinavian Symposium and Workshops on Algorithm Theory (SWAT 2012)}}

\author{Eldar Fischer\thanks{Department of Computer Science, Technion, Haifa 32000, Israel. \mbox{eldar@cs.technion.ac.il}} \and Yonatan Goldhirsh\thanks{Department of Computer Science, Technion, Haifa 32000, Israel. \mbox{jongold@cs.technion.ac.il}} \and Oded Lachish\thanks{Birkbeck, University of London, London, UK. \mbox{oded@dcs.bbk.ac.uk}}}

\date{}

\date{}

\newtheorem{theorem}{Theorem}[section]

\newtheorem{lemma}[theorem]{Lemma}
\newtheorem{corollary}[theorem]{Corollary}

\newtheorem{definition}[theorem]{Definition}
\newtheorem{observation}[theorem]{Observation}


\newcommand{\qed}{\rule{2mm}{2mm}}
\newenvironment{proof}{\par\noindent{\bf Proof.}~}{\hfill  $\qed$}

\newcommand{\SAT}{\textit{SAT}}


\newcommand{\ef}{$\epsilon$-far\xspace}

\newcommand{\et}{$\epsilon$-test\xspace}
\newcommand{\A}{\mathcal{A}}

\newcommand{\rof}{read-once-formula\xspace}

\newcommand{\CH}[1]{\mbox{\tt Children}(#1)\xspace} %
\newcommand{\LC}[1]{h(#1)\xspace} %
\newcommand{\CTL}[2]{\mbox{\tt Critical}_{#1,#2}\xspace} %
\newcommand{\depth}[2]{\mbox{\tt depth}_{#1}(#2)\xspace} %
\newcommand{\outd}[1]{\mbox{\tt out-deg}(#1)\xspace} %
\newcommand{\dist}[2]{\mbox{\tt dist}(#1,#2)\xspace} %
\newcommand{\farness}[3]{\mbox{\tt farness}_{#3}(#1,#2)\xspace} %

\newcommand{\localdist}[1]{2{#1}/3\xspace} %
\newcommand{\twicelocaldist}[1]{4{#1}/3\xspace} %
\newcommand{\orconst}[1]{\lceil 20 {#1}^{-1}\log{{#1}^{-1}}\rceil\xspace} %
\newcommand{\mdepth}[1]{{#1}^{-1}\log{(2{#1}^{-1})}\xspace} 
\newcommand{\numrel}[1]{{#1}^{-2}\log{(2{#1}^{-1})}\xspace} 

\newcommand{\qpower}[1]{-16+16\log{#1}}
\newcommand{\rpower}[1]{-20+16\log{#1}}
\newcommand{\qcomplexity}[1]{{#1}^{\qpower{#1}}\xspace} %
\newcommand{\rqcomplexity}[1]{{#1}^{\rpower{#1}}\xspace} %
\newcommand{\hitprob}[1]{#1/4\xspace} %
\newcommand{\hhitprob}[1]{#1/8\xspace} %
 %



\newcommand{\mDNF}{\mathrm{mDNF}}





\newcommand{\andreduction}[2]{{#1}(1-(2{#2}/{#1})^{-{#2}}/16)}
\newcommand{\andsize}[2]{((2{#2}/{#1})^{-2{#2}}/16)}
\newcommand{\kandsample}[1]{32(2k/{#1})^{2k}}

\newcommand{\genandcnst}{\kandsample{\epsilon}\log(\delta^{-1})}
\newcommand{\slightlysmall}[1]{\andreduction{#1}{k}}
\newcommand{\slightlybig}[1]{{#1}(1+{#1})}
\newcommand{\recurseps}[1]{{#1}(1+(4k/{#1})^{-k})}
\newcommand{\mgqc}[2]{{#1}^{-480(4k/{#1})^{k+3}\log\log({#2}^{-1})}}

\newcommand{\poly}{\mathrm{poly}}

\begin{document}

\maketitle

\setcounter{page}{1}

\begin{abstract}
We study the query complexity of testing for properties defined by read once formulas, as instances of {\em massively parametrized properties}, and prove several testability and non-testability results. First we prove the testability of any property accepted by a Boolean read-once formula involving any bounded arity gates, with a number of queries exponential in $\epsilon$, doubly exponential in the arity, and independent of all other parameters. When the gates are limited to being monotone, we prove that there is an {\em estimation} algorithm, that outputs an approximation of the distance of the input from satisfying the property. For formulas only involving And/Or gates, we provide a more efficient test whose query complexity is only quasipolynomial in $\epsilon$. On the other hand, we show that such testability results do not hold in general for formulas over non-Boolean alphabets; specifically we construct a property defined by a read-once arity $2$ (non-Boolean) formula over an alphabet of size $4$, such that any $1/4$-test for it requires a number of queries depending on the formula size. We also present such a formula over an alphabet of size $5$ that additionally satisfies a strong monotonicity condition.
\end{abstract}

\thispagestyle{empty}
\pagenumbering{arabic}

\section{Introduction}

{\em Property Testing} deals with randomized approximation algorithms that operate under low information situations. The definition of a property testing algorithm uses the following components: A set of {\em objects}, usually the set of strings $\Sigma^*$ over some alphabet $\Sigma$; a notion of a single {\em query} to the input object $w=(w_1,\ldots,w_n)\in\Sigma^*$, which in our case would consist of either retrieving the length $|w|$ or the $i$'th letter $w_i$ for any $i$ specified by the algorithm; and finally a notion of {\em farness}, a normalized distance, which in our case will be the Hamming distance --- $\farness{w}{v}{}$ is defined to be $\infty$ if $|w|\neq|v|$ and otherwise it is $|\{i:w_i\neq v_i\}|/|v|$.

Given a {\em property} $P$, that is a set of objects $P\subseteq \Sigma^*$, an integer $q$, and a farness parameter $\epsilon>0$, an {\em $\epsilon$-test for $P$ with query complexity $q$} is an algorithm that is allowed access to an input object only through queries, and distinguishes between inputs that satisfy $P$ and inputs that are $\epsilon$-far from satisfying $P$ (that is, inputs whose farness from any object of $P$ is more than $\epsilon$), while using at most $q$ queries. By their nature the only possible testing algorithms are probabilistic, with either $1$-sided or $2$-sided error ($1$-sided error algorithms must accept objects from $P$ with probability $1$). Traditionally the query ``what is $|w|$'' is not counted towards the $q$ query limit.

The ultimate goal of Property-Testing research is to classify properties according to their optimal $\epsilon$-test query-complexity. In particular, a property whose optimal query complexity depends on $\epsilon$ alone and not on the length $|w|$ is called {\em testable}. In many (but not all) cases a ``query-efficient'' property test will also be efficient in other computational resources, such as running time (usually it will be the time it takes to retrieve a query multiplied by some function of the number of queries) and space complexity (outside the space used to store the input itself).

Property-Testing was first addressed by Blum, Luby and Rubinfeld~\cite{BlumLR93}, and most of its general notions were first formulated by Rubinfeld and Sudan~\cite{RubinfeldS96}, where the investigated properties are mostly of an algebraic nature, such as the property of a Boolean function being linear. The first excursion to combinatorial properties and the formal definition of testability were by Goldreich, Goldwasser and Ron~\cite{GGR}. Since then Property-Testing has attracted significant attention leading to many results. For surveys see \cite{Fischer01theart}, \cite{Goldreich10}, \cite{Ron2008}, \cite{Ron2010}.

Many times {\em families} of properties are investigated rather than individual properties, and one way to express such families is through the use of parameters. For example, $k$-colorability (as investigated in \cite{GGR}) has an integer parameter, and the more general partition properties investigated there have the sequence of density constraints as parameters. In early investigations the parameters were considered ``constant'' with regards to the query complexity bounds, which were allowed to depend on them arbitrarily. However, later investigations involved properties whose ``parameter'' has in fact a description size comparable to the input itself. Probably the earliest example of this is \cite{Newman02}, where properties accepted by a general read-once oblivious branching program are investigated. In such a setting a general dependency of the query complexity on the parameter is inadmissible, and indeed in \cite{Newman02} the dependency is only on the maximum width of the branching program, which may be thought of as a complexity parameter of the stated problem.

A fitting name for such families of properties is {\em massively parametrized properties}. A good way to formalize this setting is to consider an input to be divided to two parts. One part is the {\em parameter}, the branching program in the example above, to which the testing algorithm is allowed full access without counting queries. The other part is the {\em tested input}, to which the algorithm is allowed only a limited number of queries as above. Also, in the definition of farness only changes to the tested input are allowed, and not to the parameter. In other words, two ``inputs'' that differ on the parameter part are considered to be $\infty$-far from each other. In this setting also other computational measures commonly come into play, such as the running time it takes to plan which queries will be made to the tested input.

Recently, a number of results concerning a massively parametrized setting (though at first not under this name) have appeared. See for example \cite{SOIL,ChakrabortyFLMN07,FischerLNMY08,FischerY11} and the survey~\cite{newman2010}, as well as~\cite{PCPP}, where such an $\epsilon$-test was used as part of a larger mechanism.

A central area of research in Property-Testing in general and Massively-Parametrized Testing in particular is to associate the query complexity of problems to their other measures of complexity. There are a number of results in this direction, to name some examples see~\cite{AlonKNS00,Newman02,FischerNS04}. In~\cite{Ben-SassonHR05} the study of formula satisfiability was initiated. There it was shown that there exists a property that is defined by a $3$-CNF formula and yet has a query complexity that is linear in the size of the input. This implies that knowing that a specific property is accepted by a $3$-CNF formula does not give any information about its query complexity. In~\cite{HalevyLNT07} it was shown that if a property is accepted by a read-twice CNF formula, then the property is testable. Here we continue this line of research.

In this paper we study the query complexity of properties that are accepted by read once formulas. These can be described as computational trees, with the tested input values at the leaves and logic gates at the other nodes, where for an input to be in the property a certain value must result when the calculation is concluded at the root. 

We prove a number of results. Section \ref{sec:prelim} contains preliminaries. First we define the properties we test, and then we introduce numerous definitions and lemmas about bringing the formulas whose satisfaction is tested into a normalized ``basic form''. These are important and in fact implicitly form a preprocessing part of our algorithms. Once the formula is put in a basic form, testing an assignment to the formula becomes manageable.

In Section~\ref{sec:mos-gen} we show the testability of properties defined by formulas involving arbitrary Boolean gates of bounded arity. For such formula involving only monotone gates, we provide an {\em estimation} algorithm in Section~\ref{sec:kestim}, that is an algorithm that not only tests for the property, but with high probability outputs a real number $\eta$ such that the true farness of the tested input from the property is between $\eta-\epsilon$ and $\eta+\epsilon$. In Section \ref{sec:QuasiPoly} we show that when restricted to And/Or gates, we can provide a test whose query complexity is quasipolynomial in $\epsilon$. We supply a brief analysis of the running times of the algorithms in Section~\ref{sec:running}.

On the other hand, we prove in Section~\ref{sec:untest} that these results can not be generalized to alphabets that have at least four different letters. We construct a formula utilizing only one (symmetric and binary) gate type over an alphabet of size $4$, such that the resulting property requires a number of queries depending on the formula (and input) size for a $1/4$-test. We also prove that for the cost of one additional alphabet symbol, we can construct a non-testable explicitly monotone property (both the gate used and the acceptance condition are monotone).

Results such as these might have interesting applications in computational complexity. One interesting implication of the testability results here is that any read-once formula accepting an untestable Boolean property must use unbounded arity gates other than And/Or. By proving that properties defined by formulas of a simple form admit efficient property testers, one also paves a path for proving that certain properties cannot be defined by formulas of a simple form --- just show that these properties cannot be efficiently testable. Since property testing lower bounds are in general easier to prove than computational complexity lower bounds, we hope that this can be a useful approach.

\subsubsection*{Acknowledgment}
We thank Prajakta Nimbhorkar for the helpful discussion during the early stages of this work.

\section{Preliminaries}\label{sec:prelim}

We use $[k]$ to denote the set $\{1,\dots,k\}$.
A {\em digraph} $G$ is a pair $(V,E)$ such that $E\subseteq V\times V$.
For every $v\in V$ we set $\outd{v} = \left|\{u\in V\mid (v,u)\in E\}\right|$.
A {\em path} is a tuple $(u_1,\dots,u_k)\in |V|^k$ such that $u_1,\dots,u_k$ are all distinct 
and $(u_i,u_{i+1})\in E$ for every $i\in[k-1]$.
The {\em length} of a path $(u_1,\dots,u_k)\in |V|^k$ is $k-1$.
We say that there is a path from $u$ to $v$ if there exists a path 
 $(u_1,\dots,u_k)$  in $G$ such that $u_1 = u$, and $u_k=v$.
The {\em distance} from $u\in V$ to $v\in V$, denoted $\dist{u}{v}$, is the length of the shortest path from $u$ to $v$ if one exists
and infinity otherwise.

We use the standard terminology for outward-directed rooted trees. A {\em rooted directed tree} is a tuple $(V,E,r)$, where $(V,E)$ is a digraph, $r\in V$ and for every $v\in V$ there is a unique path from $r$ to $v$. Let $u,v\in V$.
If $\outd{v}=0$ then we call $v$ a leaf.
We say that $u$ is an {\em ancestor} of $v$  and $v$ is a {\em descendant} of $u$
if there is a path from $u$ to $v$.
We say that $u$ is a {\em child} of $v$ and $v$ is a {\em parent} of $u$
if $(v,u)\in E$,  and set $\CH{v} = \{w\in V \mid w~\mbox{ is a child of}~v\}$.

\subsection{Formulas, evaluations and testing}

With the terminology of rooted trees we now define our properties; first we define what is a formula and then we define what it means to satisfy one.

\begin{definition}[Formula]~\label{def:formula}
A {\em Read-Once Formula} is a tuple $\Phi = (V,E,r,X,\kappa,B,\Sigma)$, where $(V,E,r)$ is a rooted directed tree, $\Sigma$ is an alphabet, $X$ is a set of variables (later on they will take values in $\Sigma$), $B\subseteq \bigcup_{k<\infty}\{\Sigma^{k}\mapsto \Sigma\}$ a set of functions over $\Sigma$, and $\kappa:V\rightarrow B\cup X\cup \Sigma$ satisfies the following (we abuse notation somewhat by writing $\kappa_v$ for $\kappa(v)$).
\begin{itemize}
\item
For every leaf $v\in V$ we have that $\kappa_v\in X\cup\Sigma$.
\item
For every $v$ that is not a leaf $\kappa_v \in B$ is a function whose arity is $|\CH{v}|$.
\end{itemize}
In the case where $B$ contains functions that are not symmetric, we additionally assume that for every $v\in V$ there is an ordering of $\CH{v}=(u_1,\ldots,u_k)$.
\end{definition}
In the special case where $\Sigma$ is the binary alphabet $\{0,1\}$, we say that $\Phi$ is {\em Boolean}.
Unless stated otherwise $\Sigma = \{0,1\}$, in which case we shall omit $\Sigma$ from the definition of formulas.
A formula $\Phi = (V,E,r,X,\kappa,B,\Sigma)$ is called {\em read $k$-times} 
if for every $x \in X$ there are at most $k$ vertices $v\in V$, where $\kappa_v \equiv x$.
We call $\Phi$ a {\em \rof} if it is read $1$-times.
A formula $\Phi = (V,E,r,X,\kappa,B,\Sigma)$ is called {\em $k$-ary} if the arity
(number of children) of all its vertices is at most $k$. If a formula is $2$-ary we then call it {\em binary}.
A function $f:\{0,1\}^n\to\{0,1\}$ is {\em monotone} if whenever $x\in \{0,1\}^n$ is such that $f(x)=1$, then for every $y\in \{0,1\}^n$ such that $x\leq y$ (coordinate-wise) we have $f(y)=1$ as well. If all the functions in $B$ are monotone then we say that $\Phi$ is (explicitly) {\em monotone}. We denote $|\Phi|=|X|$ and call it the {\em formula size} (this makes sense for read-once formulas).

\begin{definition}[Sub-Formula]
Let $\Phi = (V,E,r,X,\kappa,B)$ be a formula and $u\in V$.
The formula $\Phi_u = (V_u,E_u,u,X_u,\kappa,B)$, is such that
 $V_u\subseteq V$, with $v\in V_u$ if and only if $\dist{u}{v}$ is finite,
and $(v,w)\in E_u$ if and only if $v,w\in V_u$ and $(v,w)\in E$.
$X_u$ is the set of all $\kappa_v\in X$ such that $v \in V_u$.
If $u\neq r$ then we call $\Phi_u$ a {\em strict sub-formula}.
We define $|\Phi_u|$ to be the number of variables in $V_u$, that is $|\Phi_u| = |X_u|$,
and the {\em weight} of $u$ with respect to its parent $v$ is defined as $|\Phi_u|/|\Phi_v|$.
\end{definition}

\begin{definition}[assignment to and evaluation of a formula]~\label{def:assignment}
An {\em assignment} $\sigma$ to a formula $\Phi = (V,E,r,X,\kappa,B,\Sigma)$ is
a mapping from $X$ to $\Sigma$.
The {\em evaluation} of $\Phi$ given $\sigma$, denoted (abusing notation somewhat) by $\sigma(\Phi)$, is defined as $\sigma(r)$ where $\sigma:V\to\Sigma$ is recursively defined as follows.
\begin{itemize}
\item If $\kappa_v\in\Sigma$ then $\sigma(v)=\kappa_v$.
\item If $\kappa_v\in X$ then $\sigma(v)=\sigma(\kappa_v)$.
\item Otherwise ($\kappa_v\in B$) denote the members of the set $\CH{v}$ by $(u_1,\ldots,u_k)$ and set $\sigma(v)=\kappa_v(\sigma(u_1),\ldots,\sigma(u_k))$.
\end{itemize}
\end{definition}

Given an assignment $\sigma:X\to\Sigma$ and $u\in V$, we let $\sigma_u$ denote its restriction to $X_u$, but whenever there is no confusion we just use $\sigma$ also for the restriction (as an assignment to $\Phi_u$).

For Boolean formulas, we set $\SAT(\Phi=b)$ to be all the assignments $\sigma$ to $\Phi$ such that $\sigma(\Phi) = b$. When $b=1$ and we do not consider the case $b=0$ in that context, we simply denote these assignments by $\SAT(\Phi)$.
If $\sigma \in \SAT(\Phi)$ then we say that $\sigma$ {\em satisfies} $\Phi$.
Let $\sigma_1,\sigma_2$ be assignments to $\Phi$.
We define $\farness{\sigma_1}{\sigma_2}{\Phi}$ to be the relative Hamming distance between the two assignments. That is, $\farness{\sigma_1}{\sigma_2}{\Phi} = |\{x\in X\mid \sigma_1(x)\neq \sigma_2(x)\}|/|\Phi|$.
For every assignment $\sigma$ to $\Phi$ and every subset $S$ of assignments to $\Phi$
we define $\farness{\sigma}{S}{\Phi} = \min \{\farness{\sigma}{\sigma'}{\Phi}\mid \sigma'\in S\}$.
If $\farness{\sigma}{S}{\Phi} > \epsilon$ then $\sigma$ is {\em $\epsilon$-far} from $S$ and otherwise it is {\em $\epsilon$-close} to $S$.

We now have the ingredients to define testing of assignments to formulas in a massively parametrized model. Namely, the formula $\Phi$ is the parameter that is known to the algorithm in advance and may not change, while the assignment $\sigma:X\to\Sigma$ must be queried with as few queries as possible, and farness is measured with respect to the fraction of alterations it requires.

\begin{definition} {\bf[$(\epsilon,q)$-test]}
An {\em $(\epsilon,q)$-test} for $\SAT(\Phi)$ is a randomized algorithm $\A$  with free access to $\Phi$, 
 that given oracle access to an assignment $\sigma$ to $\Phi$ operates as follows.
\begin{itemize}
\item  $\A$ makes at most $q$ queries to $\sigma$
(where on a query $x \in X$ it receives $\sigma_x$ as the answer).
\item If $\sigma \in \SAT(\Phi)$, then $\A$ accepts (returns $1$) with probability at least $2/3$.
\item If $\sigma$ is $\epsilon$-far from $\SAT(\Phi)$, then $\A$ rejects (returns $0$) with probability
at least $2/3$. Recall that $\sigma$ is $\epsilon$-far from $\SAT(\Phi)$ if its relative Hamming distance from every assignment in $\SAT(\Phi)$ is at least $\epsilon$.
\end{itemize}
We say that $\A$ is {\em non-adaptive} if its choice of queries is independent of their values (and may depend only on $\Phi$).
We say that $\A$ has {\em $1$-sided error} if given oracle access to $\sigma \in \SAT(\Phi)$, it accepts (returns $1$) with probability $1$.
We say that $\A$ is an {\em $(\epsilon,q)$-estimator} if it returns a value $\eta$ such that with probability at least $2/3$, $\sigma$ is both $(\eta+\epsilon)$-close and $(\eta-\epsilon)$-far from $\SAT(\Phi)$.
\end{definition}

We can now summarize the contributions of the paper in the following theorem:

\begin{theorem}[Main Theorem]
The following statements all hold for all constant $k$:
\begin{itemize}
\item For any read-once formula $\Phi$ where $B$ is the set of all functions of arity at most $k$ there exists a $1$-sided $(\epsilon,q)$-test for $\SAT(\Phi)$ with $q=\exp(\poly(\epsilon^{-1}))$ (Theorem \ref{thm:AlgMosGen}).
\item For any read-once formula $\Phi$ where $B$ is the set of all monotone functions of arity at most $k$ there exists an $(\epsilon,q)$-estimator for $\SAT(\Phi)$ with $q=\exp(\poly(\epsilon^{-1}))$ (Theorem \ref{thm:AlgGenEst}).
\item For any read-once formula $\Phi$ where $B$ is the set of all conjunctions and disjunctions of any arity there exists an $(\epsilon,q)$-test for $\SAT(\Phi)$ with $q=\epsilon^{O(\log\epsilon)}$ (Corollary \ref{cor:read-once} of Theorem \ref{thm:read-once}).
\item There exists an infinite family of $4$-valued read-once formulas $\Phi$, where $B$ contains one binary function, and an appropriate $b\in\Sigma$, such that there is no non-adaptive $(\epsilon,q)$-test for $\SAT(\Phi=b)$ with $q=O(\depth{}{\Phi})$, and no adaptive $(\epsilon,q)$-test for $\SAT(\Phi)$ with $q=O(\log(\depth{}{\Phi}))$; there also exists such a family of $5$-valued read-once formulas whose gates and acceptance condition are monotone with respect to a fixed order of the alphabet.
(Theorem \ref{thm:untest} and Theorem \ref{thm:untestmon} respectively).
\end{itemize}
\end{theorem}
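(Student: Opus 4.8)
This theorem collects the four main results of the paper, proved in Sections~\ref{sec:mos-gen}, \ref{sec:kestim}, \ref{sec:QuasiPoly} and \ref{sec:untest}; we only sketch how each part is obtained. The three positive results share a common skeleton. First one brings the input formula into a ``basic form'' (the purpose of the lemmas of Section~\ref{sec:prelim}): any subformula whose weight in $\Phi$ is below a threshold polynomial in $\epsilon$ may be frozen to an arbitrary value, changing farness by at most $\epsilon$, and degenerate gates are collapsed, so that what remains is a formula of depth $\poly(\epsilon^{-1})$ in which every internal node genuinely uses a non-negligible fraction of its weight. On such a normalized formula one runs a recursive test: at a node $v$ whose target value is $b$, one considers the tuples $(b_1,\dots,b_k)$ of children-values with $\kappa_v(b_1,\dots,b_k)=b$ and recurses into each child with the corresponding target, splitting the query budget according to the children weights and rejecting only when the queried variables already witness $\sigma\notin\SAT(\Phi)$ --- which yields $1$-sided error automatically. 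The point that makes this correct is that the distance of $\sigma$ from $\SAT(\Phi)$ decomposes additively over the variable-disjoint subformulas rooted at the children of a node, so an assignment of $\SAT(\Phi)$ close to $\sigma$ exists precisely when some accepting tuple admits cheap recursive completions. With recursion depth $\poly(\epsilon^{-1})$ and branching at most $\exp(k)$ (the number of accepting tuples of a $k$-ary gate) the query count is $\exp(\poly(\epsilon^{-1}))$, proving Theorem~\ref{thm:AlgMosGen}. The main obstacle is that for an arbitrary gate, flipping one child-value may change several things at once, so one cannot simply pick ``the'' cheapest accepting tuple; instead one samples to approximate, for every child and every candidate target, its recursive farness, and the real work is controlling how these estimation errors accumulate over all $\poly(\epsilon^{-1})$ levels.

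For formulas with monotone gates (Theorem~\ref{thm:AlgGenEst}) the plan is to upgrade this test into an estimator. When every gate is monotone, the cheapest modification toward a target value only ever changes variables in one direction, and the recursive farness becomes a well-behaved monotone quantity --- a weighted sum of children farnesses at an And-like node and a weighted minimum at an Or-like node --- so that the value the recursion computes actually approximates the true farness rather than merely distinguishing zero from large. Sampling enough at each node to get every recursive estimate within $\epsilon$ times its weight and then summing up gives the desired estimator, and monotonicity is exactly what excludes the cancellations that otherwise prevent estimation.

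For read-once And/Or formulas (Theorem~\ref{thm:read-once} and its Corollary~\ref{cor:read-once}) the plan is to use the sharper structure available there to shrink the recursion. After normalization one may assume the And and Or levels alternate and that the effective depth is only $O(\log\epsilon^{-1})$: a child contributing too small a weight is discarded, and alternating And/Or gates cannot keep a fixed amount of weight ``hidden'' for more than logarithmically many levels. Hence the recursion tree has size $\epsilon^{-O(\log\epsilon^{-1})}$, and since the farness recursion here is just weighted sums at And nodes and weighted minima at Or nodes, a per-node cost polynomial in $\epsilon^{-1}$ suffices, giving total query complexity $q=\epsilon^{O(\log\epsilon)}$.

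Finally, for the non-testability results (Theorem~\ref{thm:untest} and Theorem~\ref{thm:untestmon}) the plan is a path-chasing construction. Over a complete binary tree of depth $d$ one designs a single binary gate on a $4$-letter alphabet so that the value at the root is determined by walking a root-to-leaf path whose successive turns are dictated by the variable values met along the way; one then picks $b$ so that deciding ``$\sigma\in\SAT(\Phi=b)$'' versus ``$\sigma$ is $1/4$-far'' amounts to tracing such a path. Applying Yao's principle with one distribution supported on satisfying assignments and one on $1/4$-far assignments, the two are statistically indistinguishable on any set of $o(d)$ coordinates chosen non-adaptively (respectively $o(\log d)$ chosen adaptively), because an algorithm that has not yet followed the path sees the same marginals in both cases. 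For the monotone version one linearly orders a $5$-letter alphabet and uses the extra symbol as slack, so that the same path-chasing behaviour is realized by a gate and an acceptance predicate that are monotone in that order. Here the crux --- and the main obstacle of the whole lower-bound part --- is to fit this mechanism into an alphabet as small as $4$ (respectively $5$ in the monotone case) and to prove the indistinguishability claim rigorously.
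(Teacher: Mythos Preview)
Your sketch diverges from the paper in several places, and for the lower bound it describes a different construction that does not match what is actually proved.

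\textbf{Lower bounds.} The construction in Section~\ref{sec:untest} is not ``path-chasing''. The balancing gate sends $(0,0)\mapsto 0$, $(1,1)\mapsto 1$, $(0,1),(1,0),(P,P)\mapsto P$, and everything else to $F$; by Lemma~\ref{lem:baliff} an assignment is accepted iff every dyadic interval has all-$0$, all-$1$, or exactly half of each. The hard distributions $D_Y,D_N$ pick a random \emph{level} $k$ and set length-$2^{k-1}$ (resp.\ length-$2^{k-2}$) blocks so that at level $k$ every node is balanced (resp.\ has a $1{:}3$ ratio). Indistinguishability (Lemma~\ref{lem:missdiff}) is because, unless the set of lowest common ancestors of the queries hits level $k$ or $k{-}1$, the queried bits are i.i.d.\ fair coins under both distributions. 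There is no root-to-leaf path being traced, and the acceptance condition is not ``value equals $b$'' but ``value $\neq F$''. Your proposed mechanism would need an entirely different gate and analysis; as written it does not establish Theorems~\ref{thm:untest} and~\ref{thm:untestmon}.

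\textbf{And/Or formulas.} Algorithm~\ref{alg:MonQuasi} is not a generic bounded-depth recursion. It samples a \emph{random leaf} $s$, collects the children of the $\vee$-ancestors of $s$ that lie off the $r$--$s$ path (the ``special relatives''), and recurses only on those, with a larger $\epsilon$. Correctness hinges on the notion of an $(\epsilon,\sigma)$-critical leaf (Definition~\ref{def:important-critical}): Lemma~\ref{lem:many-critical} shows that if $\sigma$ is $\epsilon$-far then an $\Omega(\epsilon)$ fraction of leaves are critical, and for such $s$ the number of special relatives is $O(\epsilon^{-2}\log\epsilon^{-1})$ (Observation~\ref{ob:nottoomany}). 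Your description (``effective depth $O(\log\epsilon^{-1})$, weighted sums at And and minima at Or'') is a plausible heuristic but is not the argument that yields the $\epsilon^{O(\log\epsilon)}$ bound here, and you have not shown how to bound the recursion size without the critical-vertex machinery.

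\textbf{General bounded arity.} The preprocessing (Lemma~\ref{lem:make-kx-basic}) does \emph{not} truncate small subformulas or bound the formula depth; in $k$-$x$-basic form, $\wedge$ and $\vee$ may have unbounded arity and the tree may be arbitrarily deep. What bounds the recursion in Algorithm~\ref{alg:MosGen} is that the farness parameter \emph{increases} along recursive calls (Lemma~\ref{lem:mosgen-qc-depth}): Observations~\ref{ob:gen-and}--\ref{ob:gen-or} and the heavy/light split at unforceable gates guarantee that two levels of recursion multiply $\epsilon$ by at least $1+(4k/\epsilon)^{-k}/16$, so the recursion terminates once $\epsilon>1$. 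Your ``freeze every subformula of weight below a threshold'' step is not sound as stated (a node can have many children each of tiny weight whose union is the whole formula), and in any case is not how the query bound is obtained.
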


Note that for the first two items, the degree of the polynomial is linear in $k$.

\subsection{Basic formula simplification and handling}

In the following, unless stated otherwise, our formulas will all be read-once and Boolean. For our algorithms to work, we will need a somewhat ``canonical'' form of such formulas. We say that two formulas $\Phi$ and $\Phi'$ are {\em equivalent} if $\sigma(\Phi)=\sigma(\Phi')$ for every assignment $\sigma:X\to\Sigma$. 

\begin{definition}
A $1$-witness for a boolean function $f:\{0,1\}^n\to \{0,1\}$ is a subset of coordinates $W\subseteq [n]$ which is minimal (by inclusion) amongst subsets for which there exists an assignment $\sigma:W\to\{0,1\}$ such that for every $x\in \{0,1\}^n$ which agrees with $\sigma$ (that is, for all $i\in W$, we have that $x_i=\sigma(i)$) we have that $f(x)=1$.
\end{definition}

Note that a function can have several $1$-witnesses and that a $1$-witness for a monotone function can always use the assignment $\sigma$ that maps all coordinates to $1$.

\begin{definition}\label{def:mDNF}
The $\mDNF$ (monotone disjunctive normal form) of a monotone boolean function $f:\{0,1\}^n\to \{0,1\}$ is a set of terms $T$ where each term in $T$ is a $1$-witness for $f$ and for every $x\in \{0,1\}^n$, $f(x)=1$ if and only if there exists a term $T_j\in T$ such that for all $i\in T_j$, we have that $x_i=1$.
\end{definition}

\begin{observation}
Any monotone boolean function $f:\{0,1\}^n\to \{0,1\}$ has a unique $\mDNF$ $T$.
\end{observation}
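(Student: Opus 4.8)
The plan is to establish existence and uniqueness separately, in each case working directly from Definition~\ref{def:mDNF} and the remark that a $1$-witness of a monotone function may always be taken with the all-ones assignment.

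\textbf{Existence.} I would simply take $T$ to be the set of \emph{all} $1$-witnesses of $f$, and verify that it satisfies Definition~\ref{def:mDNF}. By the remark following the definition of $1$-witness, each $W\in T$ may be taken with the all-ones assignment on $W$; in particular $f(\mathbf{1}_W)=1$, where $\mathbf{1}_W\in\{0,1\}^n$ is the characteristic vector of $W$. Hence for any $x$, if some $W\in T$ has $W\subseteq\mathrm{supp}(x)$, then $x\geq\mathbf{1}_W$ and monotonicity gives $f(x)=1$. Conversely, if $f(x)=1$, then $\mathrm{supp}(x)$ is a set that admits a certifying assignment (the all-ones assignment, since $\mathbf{1}_{\mathrm{supp}(x)}=x$), so it contains an inclusion-minimal such set, i.e.\ a $1$-witness $W\in T$ with $W\subseteq\mathrm{supp}(x)$. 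This is exactly the required equivalence, so $T$ is an $\mDNF$ of $f$.

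\textbf{Uniqueness.} Let $T$ and $T'$ both be $\mDNF$s of $f$; I would prove $T\subseteq T'$, the reverse inclusion being symmetric. Fix $S\in T$. Since $S\subseteq S=\mathrm{supp}(\mathbf{1}_S)$, the equivalence for $T$ gives $f(\mathbf{1}_S)=1$, so the equivalence for $T'$ yields some $S'\in T'$ with $S'\subseteq S$. Applying the same reasoning to $S'$ and then back to $T$, we get $f(\mathbf{1}_{S'})=1$ and hence some $S''\in T$ with $S''\subseteq S'\subseteq S$. Now both $S$ and $S''$ are $1$-witnesses of $f$ (as members of $T$); since $1$-witnesses are inclusion-minimal among sets admitting a certifying assignment and $S''$ is such a set contained in $S$, minimality of $S$ forces $S''=S$. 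Then $S=S''\subseteq S'\subseteq S$ gives $S=S'\in T'$.

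\textbf{Main obstacle.} There is no substantive obstacle; the only points that need care are the use of the remark after the $1$-witness definition to pass from an arbitrary certifying assignment to the all-ones assignment (needed both to identify the terms with minimal ``true'' sets and to forbid comparabilities between distinct terms), and keeping straight, in the uniqueness argument, which of $T$ and $T'$ each containment comes from.
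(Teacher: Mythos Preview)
Your proof is correct. The paper itself states this as an observation without proof, treating it as a standard fact about monotone Boolean functions (it is essentially the well-known identification of the mDNF terms with the minimal $1$-inputs, i.e.\ the minterms), so there is nothing to compare against; your existence-and-uniqueness argument is exactly the expected one.
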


\begin{definition}\label{def:forceful}
For $u\in V$, $v\in\CH{u}$ is called \emph{(a,b)-forceful} if $\sigma(v)=a$ implies $\sigma(u)=b$. $v$ is \emph{forceful} if it is {(a,b)-forceful} for some $a,b\in\{0,1\}$.
\end{definition}

For example, for $\wedge$ all children are $(0,0)$-forceful, and for $\vee$ all children are $(1,1)$-forceful. Forceful variables are variables that cause ``Or-like'' or ``And-like'' behavior in the gate.

\begin{definition}\label{def:unforceable}
A vertex $v\in V$ is called \emph{unforceable} if no child of $v$ is forceful.
\end{definition}

\begin{definition}[$k$-$x$-Basic formula]\label{def:kxbasic}
A read-once formula $\Phi$ is {\em $k$-$x$-basic} if it is Boolean, all the functions in $B$ have arity at least $2$ apart from possible negations, the functions of $B$ are either of arity at most $k$ and unforceable, or $\wedge$ or $\vee$ of arity at least $2$, and $\Phi$ satisfies the following. The negations may only have leaves as children, and there is no leaf $v\in V$ such that $\kappa_v\in \{0,1\}$ (i.e. all leaves are variables). No $\wedge$ is a child of a $\wedge$ and no $\vee$ is a child of a $\vee$. Any variable may appear at most once in a leaf, either positively or negated (with a negation as parent).
\end{definition}

The set of variables that appear negated will be denoted by $\neg X$.

\begin{definition}[$k$-Basic formula]\label{def:kbasic}~\label{def:basic}
A read-once formula $\Phi$ is a {\em $k$-basic} formula if it is $k$-$x$-basic and furthermore, all unforceable functions in $B$ are also monotone. If $B$ contains only conjunctions and disjunctions we abbreviate and call the formula {\em basic}.
\end{definition}

\begin{lemma}\label{lem:make-kx-basic}
Every read-once formula $\Phi$ with gates of arity at most $k$ has an equivalent $k$-$x$-basic formula $\Phi '$, possibly over a different set of functions $B$.
\end{lemma}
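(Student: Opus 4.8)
The plan is to transform $\Phi$ into an equivalent $k$-$x$-basic formula by a sequence of local rewriting steps, each preserving equivalence and read-onceness, applied bottom-up (or to fixpoint), and to argue termination by a suitable potential function on the tree. I would proceed roughly as follows. First, \emph{eliminate constant leaves}: if a leaf $v$ has $\kappa_v\in\{0,1\}$, then its parent gate $u$ computes a function of fewer genuine inputs; replace $\kappa_u$ by the restricted function $\kappa_u(\cdot,\ldots,\kappa_v,\ldots)$ on the remaining children and delete $v$. Iterating, either all constants disappear or the whole formula collapses to a constant, in which case we can represent it by a trivial basic formula (e.g.\ $x\vee\neg x$ for the constant $1$ on a fresh-looking reuse — but since read-once forbids reuse, instead use a two-leaf gate computing the constant; this edge case needs a line of care). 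Second, \emph{push negations to the leaves}: using De~Morgan-type identities, if $\kappa_v$ is a negation whose child $w$ is an internal gate $g$, rewrite as the gate $\neg g(\neg\cdot,\ldots)$, i.e.\ replace $v,w$ by a single gate computing $\bar g(y_1,\ldots,y_m)=\neg g(\neg y_1,\ldots,\neg y_m)$ composed with negations on those children that were themselves internal; negations stack on leaves collapse (two negations cancel, via $\neg\neg=\mathrm{id}$), so eventually every negation has a leaf child. Third, \emph{remove arity-$\le 1$ gates that are not negations}: an arity-$1$ gate is either the identity (contract it) or a negation (already handled); an arity-$0$ gate is a constant (handled in step one).

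Fourth, and this is the substantive classification step, \emph{separate forceful from unforceable behavior}. For each internal gate $u$ with function $f=\kappa_u$ of arity $m\le k$, examine which children are forceful. If \emph{no} child is forceful, $u$ is already unforceable and of arity $\le k$, so it is of the allowed type (monotonicity is \emph{not} required for $k$-$x$-basic, only for $k$-basic, so we do nothing further here). If \emph{some} child is forceful, I want to peel off the forceful children into $\wedge$/$\vee$ layers. Concretely, suppose child $v_i$ is $(a,b)$-forceful. The key observation is that after conditioning on ``$\sigma(v_i)\ne a$'', $f$ becomes a function of the remaining children (with $v_i$ possibly still relevant via its other value), while conditioning on ``$\sigma(v_i)=a$'' forces the output to $b$; this is exactly the structure of a $\wedge$ or $\vee$ (depending on $a,b$ and whether we track the value or its negation) applied to a literal on $v_i$ (or to $v_i$ combined appropriately) and to a smaller formula. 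Repeatedly extracting all forceful children in this way replaces $u$ by a bounded-depth cap of $\wedge$'s and $\vee$'s over (a) literals, and (b) one residual gate that is either unforceable of arity $\le k$ or has itself become a constant/smaller gate to recurse on. Finally, once every gate is either unforceable-of-arity-$\le k$ or an $\wedge$/$\vee$ of arity $\ge 2$, \emph{merge same-type chains}: an $\wedge$ child of an $\wedge$ (resp.\ $\vee$ of $\vee$) is flattened by associativity into a single gate, and any resulting arity-$1$ $\wedge$/$\vee$ is contracted. Read-onceness is preserved throughout because no variable is ever duplicated — we only delete, contract, restrict, or re-partition existing children — and each leaf retains at most one (possibly negated) occurrence.

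For \textbf{termination}, I would use a lexicographic potential: the primary component counts the number of internal gates that are ``bad'' (constant-valued, arity $\le 1$ and non-negation, a negation with a non-leaf child, an $\wedge$/$\vee$ whose parent is the same type, or a gate with a forceful child of unforceable-or-mixed type); each rewrite above strictly decreases this count or decreases a secondary measure such as the total tree size or the sum of depths of negations, without increasing the primary one. Since these are non-negative integers, the process halts, and by construction the result satisfies every clause of Definition~\ref{def:kxbasic}. \textbf{The main obstacle} I anticipate is making the forceful-child extraction (the fourth step) genuinely clean: one must verify that after peeling off \emph{all} forceful children, the residual gate is truly unforceable (a child that was non-forceful in $f$ could conceivably become forceful in the restriction, so the extraction may need to be re-run, and one must check this still terminates), and one must handle the bookkeeping of negations generated by the De~Morgan rewrites so that they always end up adjacent to leaves and never stack into a parent-child $\wedge/\vee$ violation. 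A careful invariant — e.g.\ ``every negation currently has either a leaf child or an internal child, and the number of the latter strictly drops'' — combined with processing gates from the leaves upward should make all of this go through, but it is exactly where the proof needs to be written with care rather than waved through.
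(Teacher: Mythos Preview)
Your proposal is correct and follows essentially the same approach as the paper: extract forceful children into $\wedge$/$\vee$ layers (your step four is exactly the paper's first step), push/absorb negations so they sit only above leaves, and flatten same-type $\wedge$/$\vee$ chains. The paper's proof is terser and orders the steps differently (forceful extraction first, then negation handling, then flattening), and it does not introduce a potential function---it simply asserts ``after performing this transformation sufficiently many times'' the process ends---so your explicit termination argument and your worry that a previously non-forceful child may become forceful after restriction are genuine refinements rather than deviations. Your observation about the constant-formula edge case is also a real wrinkle that the paper's proof silently ignores.
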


\begin{proof}
Suppose that for some $u$ that $v\in\CH{u}$ is (a,b)-forceful. If $b=1$ then $\kappa_u$ can be replaced with an $\vee$ gate, where one input of the $\vee$ gate is $v$ if $a=1$ or the negation of $v$ if $a=0$, and the other input is the result of $u$ when fixing $\sigma(\kappa_v)=1-a$. If $b=0$ then $\kappa_u$ can be replaced with an $\wedge$ gate, where one input of the $\wedge$ gate is $v$ if $a=0$ or the negation of $v$ if $a=1$, and the other input is the negation of the gate $u$ when it is assumed that $\sigma(\kappa_v)=a$. After performing this transformation sufficiently many times we have no forceable gates left.

We will now eliminate $\neg$ gates. Any $\neg$ gate in the input or output of a gate which is not $\wedge$ or $\vee$ can be assimilated into the gate. Otherwise, a $\neg$ on the output of an $\vee$ gate can be replaced with an $\wedge$ gate with $\neg$'s on all of its inputs, according to De-Morgan's laws. Also by De-Morgan's laws, a $\neg$ on the output of an $\wedge$ gate can be replaced with an $\vee$ gate with $\neg$'s on all of its inputs.

Finally, any $\vee$ gates that have $\vee$ children can be merged with them, and the same goes for $\wedge$ gates. Now we have achieved an equivalent $k$-$x$-basic formula.
\end{proof}

Note that $\vee$ and $\wedge$ gates are very much forceable.

\begin{observation}
Any formula $\Phi$ which is comprised of only monotone $k$-arity gates has an equivalent $k$-basic formula $\Phi '$.
\end{observation}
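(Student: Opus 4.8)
The plan is to re-run the proof of Lemma~\ref{lem:make-kx-basic} while keeping track of monotonicity, and to observe that when $\Phi$ is monotone the rewriting never needs to introduce a negation, so that the $k$-$x$-basic formula it produces is automatically $k$-basic.

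The key structural fact is this: if $g$ is a non-constant monotone gate and $v$ is an $(a,b)$-forceful child of $g$, then $(a,b)\in\{(0,0),(1,1)\}$. Indeed, a $(1,0)$-forceful child would mean that fixing $v=1$ forces $g=0$; since the all-ones input dominates this restriction and $g$ is monotone, we would get $g\equiv 0$, a contradiction, and the $(0,1)$ case is symmetric. Hence in the forceful-child elimination of Lemma~\ref{lem:make-kx-basic}, a $(1,1)$-forceful child $v$ replaces $\kappa_u$ by $v\vee(\kappa_u|_{v=0})$ and a $(0,0)$-forceful child $v$ replaces it by $v\wedge(\kappa_u|_{v=1})$, with $v$ appearing un-negated in both cases. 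Since any restriction of a monotone function is again monotone and has strictly smaller arity, this step produces only $\wedge$, $\vee$, and monotone gates, and creates no $\neg$ gate.

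I would then carry out the rewriting as an induction on a size measure (say the sum of arities of the non-$\wedge$/$\vee$ gates, with ties broken by formula size), at each stage doing one of: eliminating a forceful child of a non-$\wedge$/$\vee$ gate as above; merging an $\wedge$ into a parent $\wedge$, or a $\vee$ into a parent $\vee$; or cleaning up a degeneracy produced by a restriction --- a gate of arity $1$ is the identity and its edge is contracted, and a gate that has collapsed to a constant is propagated upward (a constant input to $\wedge$ or $\vee$ is absorbed in the obvious way, and a constant input to a monotone gate is removed via the monotone restriction, which stays monotone). Each such operation preserves equivalence and the read-once property, since it only deletes or relocates whole subformulas and never duplicates a variable. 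When no operation applies, no non-$\wedge$/$\vee$ gate has a forceful child, no $\wedge$ sits under an $\wedge$ and no $\vee$ under a $\vee$, every remaining gate has arity at least $2$, and every gate is $\wedge$, $\vee$, or a monotone unforceable gate of arity at most $k$; since the cleanup also removed any leaf forced to a constant, the formula is $k$-$x$-basic with all unforceable gates monotone, i.e.\ $k$-basic. (The only genuinely degenerate outcome, $\Phi$ itself being constant, is handled by the trivial formula.)

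I expect the only delicate point to be the bookkeeping around the degenerate cases spawned by repeated restriction: one must check that constant-propagation and identity-contraction terminate, preserve equivalence and read-once-ness, and do not loop by reintroducing forceful children at non-$\wedge$/$\vee$ gates. Termination is immediate because every operation strictly decreases the size measure, and preservation of monotonicity is immediate from closure of monotone functions under restriction, so this is routine once set up carefully.
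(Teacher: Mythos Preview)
Your proposal is correct and takes essentially the same approach as the paper: the paper's proof is the single sentence ``This observation follows by inspecting the above proof, and noticing that monotone gates will never produce negations in the process described,'' and your argument is exactly a fleshed-out version of that inspection. Your key observation that a non-constant monotone gate can only have $(0,0)$- or $(1,1)$-forceful children, so that the forceful-child elimination step of Lemma~\ref{lem:make-kx-basic} never introduces a negation and the residual restricted gate stays monotone, is precisely the content the paper is pointing at; the additional care you take with termination and constant-propagation is more than the paper itself spells out.
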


This observation follows by inspecting the above proof, and noticing that monotone gates will never produce negations in the process described.

\subsection{Observations about subformulas and farness}

\begin{definition}[heaviest child $\LC{v}$]~\label{def:children}
Let $\Phi = (V,E,r,X,\kappa,B)$ be a formula.
For every $v\in V$ we define $\LC{v}$ to be $v$ if $\CH{v}=\emptyset$, and otherwise to be an arbitrarily selected vertex $u\in \CH{v}$, such that
 $|\Phi_u| = \max\{|\Phi_w|\mid w\in \CH{v}\}$.
\end{definition}

\begin{definition}[vertex depth $\depth{\Phi}{v}$]~\label{def:depth}
Let $\Phi = (V,E,r,X,\kappa,B)$ be a formula.
For every $v\in V$ we define $\depth{\Phi}{v} = \dist{r}{v}$  and
$\depth{}{\Phi} = \max\{\depth{\Phi}{u}\mid u\in V\}$.
\end{definition}

\begin{observation}~\label{ob:kand}\label{ob:gen-and}
Let $v\in V$ be such that either $\kappa_v\equiv \vee$ and $b=0$ or $\kappa_v\equiv\wedge$ and $b=1$, and $\farness{\sigma}{\SAT(\Phi_v=b)}{}\geq \epsilon$. For every $1>\alpha>0$ there exists $S\subseteq\CH{v}$ such that $\sum_{s\in S} |\Phi_s|\geq \epsilon\alpha^2|\Phi|$
and $\farness{\sigma}{\SAT(\Phi_w=b)}{}\geq \epsilon(1-\alpha)$ for every $w \in S$. Furthermore, there exists a child $u\in\CH{v}$ such that $\farness{\sigma}{\SAT(\Phi_u=b)}{}\geq \epsilon$.
\end{observation}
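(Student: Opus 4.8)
The plan is to handle the two cases ($\kappa_v\equiv\vee$ with $b=0$, and $\kappa_v\equiv\wedge$ with $b=1$) symmetrically; I will write the argument for the $\wedge$-case with target value $b=1$, the other being dual under De Morgan. Observe first that if $\kappa_v\equiv\wedge$, then $\SAT(\Phi_v=1)$ is exactly the set of assignments $\sigma$ all of whose restrictions $\sigma_w$ lie in $\SAT(\Phi_w=1)$ for every $w\in\CH{v}$. Hence the cheapest way to repair $\sigma$ into $\SAT(\Phi_v=1)$ is to repair each child subformula independently and optimally, which gives the exact identity
\[
\farness{\sigma}{\SAT(\Phi_v=1)}{}\cdot|\Phi_v| \;=\; \sum_{w\in\CH{v}} \farness{\sigma}{\SAT(\Phi_w=1)}{}\cdot|\Phi_w|.
\]
This decomposition is the engine of the whole proof, so the first step is to state and verify it (it follows since the variable sets $X_w$ are disjoint and $|\Phi_v|=\sum_w|\Phi_w|$ because $\Phi$ is read-once). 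From the hypothesis $\farness{\sigma}{\SAT(\Phi_v=1)}{}\ge\epsilon$ we thus get $\sum_{w}\farness{\sigma}{\SAT(\Phi_w=1)}{}\,|\Phi_w|\ge\epsilon|\Phi_v|$, and since $|\Phi_v|\le|\Phi|$ actually $\ge\epsilon|\Phi_v|$; I'll keep everything in terms of $|\Phi_v|$ and only at the end note $|\Phi_v|\le|\Phi|$ if needed (here $v$ could be the root, so $|\Phi_v|=|\Phi|$ in the intended application; I should double-check the statement wants $|\Phi|$ and not $|\Phi_v|$ — it writes $|\Phi|$, so I will use the trivial bound $|\Phi_w|\le|\Phi|$ only where it does not hurt, and otherwise keep $|\Phi_v|$, which is $\le|\Phi|$... wait, that's the wrong direction; I should instead just prove the bound with $|\Phi_v|$ and observe $|\Phi_v|\le|\Phi|$ makes the claimed inequality with $\alpha^2\epsilon|\Phi|$ \emph{harder}, so I will need $|\Phi_v|$ on the right — on reflection, in the paper's usage $v$ is reached after peeling down the heaviest path so one works inside $\Phi_v$ throughout, and a careful reading shows the intended reading is with $|\Phi_v|$; I will prove it with $|\Phi_v|$ and remark that if one insists on $|\Phi|$ one simply restricts attention to the subformula).

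The second step is the averaging/thresholding argument producing the set $S$. Let $f_w:=\farness{\sigma}{\SAT(\Phi_w=1)}{}$ and $p_w:=|\Phi_w|/|\Phi_v|$, so $\sum_w p_w=1$ and $\sum_w f_w p_w\ge\epsilon$. Each $f_w\le 1$. I want $S=\{w: f_w\ge\epsilon(1-\alpha)\}$ and to lower-bound $\sum_{w\in S}p_w$. Split the sum: $\epsilon\le\sum_{w\in S}f_w p_w+\sum_{w\notin S}f_w p_w\le\sum_{w\in S}p_w+\epsilon(1-\alpha)\sum_{w\notin S}p_w\le\sum_{w\in S}p_w+\epsilon(1-\alpha)$, which gives $\sum_{w\in S}p_w\ge\epsilon-\epsilon(1-\alpha)=\epsilon\alpha$, i.e. $\sum_{w\in S}|\Phi_w|\ge\epsilon\alpha|\Phi_v|$. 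This is slightly stronger than the claimed $\epsilon\alpha^2|\Phi_v|$ (since $\alpha<1$), so it certainly suffices; I'll present the clean $\epsilon\alpha$ bound and note it implies the stated one. For the "furthermore" clause, I want a single child $u$ with $f_u\ge\epsilon$. If no such child existed then $f_w<\epsilon$ for all $w$, whence $\sum_w f_w p_w<\epsilon\sum_w p_w=\epsilon$, contradicting $\sum_w f_w p_w\ge\epsilon$. (Here I should be careful about strict vs. non-strict: the hypothesis is $\ge\epsilon$, so I get $\ge\epsilon$ for at least one child, which is exactly what is claimed.)

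The third step is just to record the $\vee$/$b=0$ case. There $\SAT(\Phi_v=0)=\{\sigma:\sigma_w\in\SAT(\Phi_w=0)\ \forall w\}$, and the identical decomposition $\farness{\sigma}{\SAT(\Phi_v=0)}{}\,|\Phi_v|=\sum_w\farness{\sigma}{\SAT(\Phi_w=0)}{}\,|\Phi_w|$ holds, so the same averaging applies verbatim with $b=0$ in place of $b=1$. I do not expect any real obstacle; the only thing requiring a moment of care is justifying the exact additive decomposition of farness across children — it uses crucially that the formula is read-once (disjoint variable sets) and that for an $\wedge$-gate (resp. $\vee$-gate) evaluating to $1$ (resp. $0$) the acceptance condition factors as a conjunction over children, so optimal independent repairs combine to an optimal global repair with no interaction. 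I would state this as a short lemma or inline observation before the averaging. A secondary minor point is making sure the $\alpha^2$ versus $\alpha$ discrepancy is acknowledged (we prove more than stated) and that all inequalities track the non-strict "$\ge\epsilon$" in the hypothesis correctly, especially for the single-child "furthermore" claim.
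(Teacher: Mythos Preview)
Your proof is correct and follows essentially the same averaging argument as the paper: define $S$ as the set of children with farness at least $\epsilon(1-\alpha)$, then bound the total repair cost by splitting into $S$ and its complement. Your version, which uses the exact additive decomposition of the farness across children of an $\wedge$ (resp.\ $\vee$) gate, is in fact slightly cleaner and yields the stronger bound $\sum_{w\in S}|\Phi_w|\ge\epsilon\alpha\,|\Phi_v|$ rather than $\epsilon\alpha^2|\Phi_v|$; you are also right that the statement should be read with $|\Phi_v|$ in place of $|\Phi|$ (in the paper it is only applied with $v$ equal to the root of the current subformula, where the two coincide).
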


\begin{proof}
Let $T$ be the maximum subset of $\CH{v}$ such that $\Phi_w$ is $\epsilon(1-\alpha)$-far from being evaluated to $b$ for every $w\in T$. If $\sum_{t\in T} |\Phi_t| < \epsilon\alpha^2|\Phi|$ then the distance from having $\Phi_v$ evaluate to $b$ is at most $\epsilon\alpha^2+\epsilon(1-\alpha)(1-\alpha^2)< \epsilon$, which contradicts the assumption.

For the last part, note that if no such child exists then $\Phi_v$ is $\epsilon$-close to being evaluated to $b$.
\end{proof}

\begin{observation}~\label{ob:or}\label{ob:gen-or}
Let $v\in V$ be such that either $\kappa_v\equiv \vee$ and $b=1$ or $\kappa_v\equiv\wedge$ and $b=0$, and $\farness{\sigma}{\SAT(\Phi_v=b)}{}\geq \epsilon$. Further assume that $\Phi$ is $k$-$x$-basic. For every child $u\in\CH{v}$, $|\Phi_u|\geq |\Phi|\epsilon$ and $\farness{\sigma}{\SAT(\Phi_u=b)}{}\geq\slightlybig{\epsilon}$. Furthermore, $\epsilon\leq 1/2$, and for any $u\in \CH{v}\setminus \{\LC{v}\}$, $\farness{\sigma}{\SAT(\Phi_u=b)}{}\geq 2\epsilon$.
\end{observation}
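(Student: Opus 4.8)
I need to prove Observation~\ref{ob:gen-or}, which concerns a vertex $v$ that behaves like an "Or" with respect to the target value $b$ (either $\kappa_v\equiv\vee$ with $b=1$, or $\kappa_v\equiv\wedge$ with $b=0$), under the assumption that $\sigma$ is $\epsilon$-far from $\SAT(\Phi_v=b)$ and that $\Phi$ is $k$-$x$-basic. The assertions to establish are: (i) every child $u\in\CH{v}$ satisfies $|\Phi_u|\ge\epsilon|\Phi|$; (ii) every child $u$ satisfies $\farness{\sigma}{\SAT(\Phi_u=b)}{}\ge\slightlybig{\epsilon}=\epsilon(1+\epsilon)$; (iii) $\epsilon\le 1/2$; and (iv) for non-heaviest children $u\in\CH{v}\setminus\{\LC{v}\}$, the farness bound improves to $2\epsilon$.

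**Key steps.** First I would record the crucial structural fact: when $v$ is "Or-like" for $b$, to make $\Phi_v$ evaluate to $b$ it suffices to fix a \emph{single} child $u$ so that $\Phi_u$ evaluates to $b$ (for $\vee$, one TRUE child forces TRUE; for $\wedge$, one FALSE child forces FALSE), leaving all other children's variables untouched. Hence $\farness{\sigma}{\SAT(\Phi_v=b)}{}\le \farness{\sigma}{\SAT(\Phi_u=b)}{}\cdot\frac{|\Phi_u|}{|\Phi|}$ for every child $u$, because an optimal modification inside $\Phi_u$ costs at most $\farness{\sigma}{\SAT(\Phi_u=b)}{}|\Phi_u|$ variable changes out of $|\Phi|$ total. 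Combining with the hypothesis $\farness{\sigma}{\SAT(\Phi_v=b)}{}\ge\epsilon$ gives $\epsilon\le \farness{\sigma}{\SAT(\Phi_u=b)}{}\cdot|\Phi_u|/|\Phi|$. Since $\farness{\sigma}{\SAT(\Phi_u=b)}{}\le 1$ always, this immediately yields (i): $|\Phi_u|\ge\epsilon|\Phi|$. For (iii), note $v$ has at least two children (since $\Phi$ is $k$-$x$-basic, gates have arity $\ge2$), so $|\Phi_{\LC{v}}|\le|\Phi|/2$ is false in general, but rather: summing $|\Phi_u|\ge\epsilon|\Phi|$ over \emph{two} distinct children gives $2\epsilon|\Phi|\le|\Phi|$ (the children's variable sets are disjoint and contained in $X$), hence $\epsilon\le 1/2$.

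For (ii), I would sharpen the single-child inequality. The point is that changing $\Phi_u$ alone is not the only way to reach $\SAT(\Phi_v=b)$, but any modification must, on the child that is "responsible," pay the full local farness; more carefully, I'd argue that an optimal global modification $\sigma'\in\SAT(\Phi_v=b)$ restricted to the heaviest child $u^*=\LC{v}$ that it modifies at all must already make $\Phi_{u^*}$ evaluate to $b$ if $\sigma'$ only touches that one child — but if $\sigma'$ spreads changes across several children, I can instead take the \emph{cheapest single child} to fix. The clean route: let $u$ be any child; modifying only within $\Phi_u$ we get $\epsilon\le\farness{\sigma}{\SAT(\Phi_v=b)}{}$ and also for the specific cheapest child $u_0$ we have $\farness{\sigma}{\SAT(\Phi_v=b)}{} = \min_w \farness{\sigma}{\SAT(\Phi_w=b)}{}|\Phi_w|/|\Phi|$. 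Since $\sum_w|\Phi_w|\le|\Phi|$, a weighted-minimum argument shows $\min_w \farness{\sigma}{\SAT(\Phi_w=b)}{}\ge\epsilon$ would only give $\ge\epsilon$, not $\epsilon(1+\epsilon)$; so the extra factor must come from the disjointness slack: the children other than $w$ contribute at least $\epsilon|\Phi|$ more un-usable mass, forcing $|\Phi_w|\le(1-\epsilon)|\Phi|$, hence $\farness{\sigma}{\SAT(\Phi_w=b)}{}\ge\epsilon/(1-\epsilon)\ge\epsilon(1+\epsilon)$. The same computation with $|\Phi_u|\le|\Phi|/2$ for $u\ne\LC{v}$ (the heaviest child absorbs at least half... actually at least $|\Phi_{\LC v}|\ge|\Phi_u|$, and the remaining mass is $\ge\epsilon|\Phi|$ again) gives $|\Phi_u|\le|\Phi|/2$ and thus $\farness{\sigma}{\SAT(\Phi_u=b)}{}\ge 2\epsilon$, which is (iv).

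**Main obstacle.** The delicate point is (ii) — extracting the precise constant $\epsilon(1+\epsilon)$ rather than just $\epsilon$. The naive bound $\farness{\sigma}{\SAT(\Phi_v=b)}{}\le\farness{\sigma}{\SAT(\Phi_u=b)}{}|\Phi_u|/|\Phi|$ only gives farness $\ge\epsilon$ at each child. The gain comes from the fact that $v$ has at least one \emph{other} child, whose variables (at least $\epsilon|\Phi|$ of them by part (i)) are "wasted" from the point of view of fixing $\Phi_u$, so $|\Phi_u|/|\Phi|\le 1-\epsilon$; then $\epsilon\le\farness{\sigma}{\SAT(\Phi_u=b)}{}(1-\epsilon)$ rearranges to the claim since $1/(1-\epsilon)\ge 1+\epsilon$. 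I'd need to be careful that "at least one other child" is genuinely available (arity $\ge 2$ from $k$-$x$-basic-ness) and that part~(i) applied to that other child is legitimate. For (iv) the analogous slack is "everything except $\LC{v}$," which has mass $\ge|\Phi_u|$ plus possibly more, and since $\LC v$ alone has $\ge|\Phi_u|$, we get $|\Phi_u|\le|\Phi|/2$. Once these disjointness bookkeeping points are nailed down, the rest is routine arithmetic.
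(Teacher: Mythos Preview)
Your proposal is correct and follows essentially the same approach as the paper: both arguments hinge on the single inequality $\epsilon\le \farness{\sigma}{\SAT(\Phi_u=b)}{}\cdot|\Phi_u|/|\Phi_v|$ (valid because fixing one child to $b$ suffices), from which (i) follows via $\farness{}{}{}\le 1$, (ii) via $|\Phi_u|\le(1-\epsilon)|\Phi_v|$ (using part~(i) on another child and arity $\ge 2$), (iv) via $|\Phi_u|\le|\Phi_v|/2$ for $u\ne\LC{v}$, and (iii) from the existence of a child of weight $\le 1/2$. Your write-up is more meandering than the paper's (the digression about an ``optimal global modification spread across several children'' is unnecessary), but the core reasoning is identical.
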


\begin{proof}
First suppose that the weight of some child $u$ is less than $\epsilon$. In this case setting $u$ to $b$ makes the formula $\Phi_v$ evaluate to $b$ by changing less than an $\epsilon$ fraction of inputs, a contradiction.

Since there are at least two children, every child $u$ is of weight at most $1-\epsilon$ and since setting it to $b$ would make $\Phi_v$ evaluate to $b$, it is at least $\slightlybig{\epsilon}$-far from being evaluated to $b$.

For the last part, note that since Since $|\CH{v}| > 1$, there exists $u\in \CH{v}$ such that 
$|\Phi_u| \leq |\Phi_v|/2$. Thus every assignment to $\Phi_v$ is $1/2$-close to an assignment $\sigma'$ by which $\Phi_v$ evaluates to $b$. Also note that any $u\in \CH{v}\setminus \{\LC{v}\}$ satisfies $|\Phi_u|\leq |\Phi_v|/2$, and therefore if $\Phi_u$ were $2\epsilon$-close to being evaluated to $b$, $\Phi_v$ would be $\epsilon$-close to being evaluated to $b$.
\end{proof}

\subsection{Heavy and Light Children in General Gates}

\begin{definition}
Given a $k$-$x$-basic formula $\Phi$, a parameter $\epsilon$ and a vertex $u$, we let $\ell=\ell(u,\epsilon)$ be the smallest integer such that the size of the $\ell$'th largest child of $u$ is less than $|\Phi|(4k/\epsilon)^{-\ell}$ if it exists, and set $\ell=k+1$ otherwise. The {\em heavy} children of $u$ are the $\ell-1$ largest children of $u$, and the rest of the children of $u$ are its {\em light} children.
\end{definition}

\begin{lemma}\label{lem:heavy-bound}
If an unforceable vertex $v$ has a child $u$ such that $|\Phi_v|(1-\epsilon)\leq |\Phi_u|$, then $\sigma$ is both $\epsilon$-close to $\SAT(\Phi_v=1)$ and $\epsilon$-close to $\SAT(\Phi_v=0)$.
\end{lemma}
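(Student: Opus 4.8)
The plan is to exploit the fact that $v$ is unforceable, i.e. no child is forceful, which means that fixing any single child's value cannot by itself determine the value of $\kappa_v$. The key observation is that since $u$ is a child with $|\Phi_u| \geq (1-\epsilon)|\Phi_v|$, the combined weight of all the \emph{other} children of $v$ is at most $\epsilon|\Phi_v|$; so if we can make $\Phi_v$ evaluate to a desired bit $b$ by modifying only the assignment to the variables lying outside $\Phi_u$, we pay at most an $\epsilon$ fraction and we are done for that value of $b$.

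First I would argue that for \emph{each} $b\in\{0,1\}$ there is an assignment $\tau$ to the children-values of $v$ with $\kappa_v(\tau)=b$; this is because $\kappa_v$ is a genuine $k$-ary gate of arity at least $2$ (in a $k$-$x$-basic formula, and unforceable gates there have arity $\geq 2$), and it cannot be a constant function — a constant function would make every child vacuously forceful. Next, fix the target bit $b$ and such a $\tau$. I want to realize the output value $\tau$ on the children by changing only variables outside $X_u$. For every child $w\neq u$, the subformula $\Phi_w$ is itself a read-once formula containing at least one variable, and one can always set its variables so that $\sigma(w)$ equals any prescribed bit $\tau(w)$ — a read-once formula with at least one variable is non-constant (this needs a one-line induction, or one can invoke that every gate in the basic form is non-constant and has arity $\geq 2$, so $\SAT(\Phi_w=0)$ and $\SAT(\Phi_w=1)$ are both nonempty). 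Leaving the value of child $u$ untouched, we have changed only variables in $\bigcup_{w\neq u} X_w$, whose total size is $|\Phi_v|-|\Phi_u| \leq \epsilon|\Phi_v|$. Hence $\farness{\sigma}{\SAT(\Phi_v=b)}{} \leq \epsilon$ for both $b=0$ and $b=1$.

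The one subtlety — and the step I expect to be the main obstacle — is handling the value already taken by child $u$: after we fix the other children to match $\tau$ on their coordinates, the actual output $\kappa_v(\sigma(u_1),\dots,\sigma(u_k))$ depends on $\sigma(u)$, which we are not allowed to cheaply change. This is exactly where unforceability is used: since $u$ is not forceful, for \emph{whatever} value $\sigma(u)$ currently has, there exists a completion of the remaining children that drives $\kappa_v$ to $0$, and (by the symmetric argument on $(1-\sigma(u),\cdot)$ forcefulness) also one that drives it to $1$ — more precisely, not being $(a,b)$-forceful for the specific $a=\sigma(u)$ and each $b$ gives us the freedom we need. I would state this cleanly as: for an unforceable gate and any fixed value of one input, both output values $0$ and $1$ are attainable by suitable values of the remaining inputs. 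Combining this with the weight bound on the non-$u$ children gives the lemma. I should double-check the edge cases where $v$ has a negation as a child or $u$ is a leaf, but in a $k$-$x$-basic formula an unforceable $v$ is a genuine arity-$\geq 2$ monotone-or-arbitrary gate with all children being variables or proper subformulas, so these do not arise.
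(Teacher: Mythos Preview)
Your proposal is correct and follows essentially the same approach as the paper's proof. The paper's argument is a one-liner---``The child is unforceful, and therefore it is possible to change the remaining children to obtain any output value''---and what you call the ``main obstacle'' (that for the fixed current value $a=\sigma(u)$, unforceability of $u$ in both $(a,0)$ and $(a,1)$ senses gives completions of the other children achieving each output bit) is exactly the content of that sentence; your expansion of why each $\Phi_w$ is non-constant and the edge-case check are details the paper leaves implicit.
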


\begin{proof}
The child is unforceful, and therefore it is possible to change the remaining children to obtain any output value.
\end{proof}

\begin{observation}\label{ob:two-heavies}
If $\kappa_u\not\equiv\wedge$, $\kappa_u\not\equiv\vee$, $\kappa_u\not\in X$ and $\sigma$ is $\epsilon$-far from $\SAT(\Phi_u=b)$, then it must have at least two heavy children.
\end{observation}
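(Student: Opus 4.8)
The plan is to argue by contradiction: assume $\kappa_u$ is neither $\wedge$, $\vee$, nor a variable (so in particular $\kappa_u$ is an unforceable gate of arity at most $k$, by the definition of $k$-$x$-basic formulas), and assume $u$ has at most one heavy child; we will deduce that $\sigma$ is $\epsilon$-close to $\SAT(\Phi_u=b)$, contradicting the hypothesis. The key quantitative fact to extract is that the light children of $u$ together carry only a small fraction of the weight of $\Phi_u$. Recall from the definition of heavy/light children that $\ell=\ell(u,\epsilon)$ is the least index for which the $\ell$'th largest child has size below $|\Phi_u|(4k/\epsilon)^{-\ell}$, and the heavy children are the $\ell-1$ largest. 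If $u$ has at most one heavy child, then $\ell-1\le 1$, so every child from the second-largest onward has size at most $|\Phi_u|(4k/\epsilon)^{-2}$, and since $\kappa_u$ has arity at most $k$ there are at most $k-1$ such light children; hence the total weight of the light children is at most $(k-1)|\Phi_u|(4k/\epsilon)^{-2}\le |\Phi_u|(4k/\epsilon)^{-1}\le |\Phi_u|\epsilon/4 < |\Phi_u|\epsilon$.

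Next I would use unforceability. Since $\kappa_u$ is an unforceable gate, no single child determines the value of $\Phi_u$; in particular, after fixing the value on the (at most one) heavy child arbitrarily to whatever $\sigma$ already gives it, the remaining children still admit a setting making $\Phi_u$ evaluate to $b$ — this is exactly the kind of reasoning used in the proof of Lemma~\ref{lem:heavy-bound}. Concretely: keep $\sigma$ unchanged on the heavy child's subformula, and on each light child's subformula change the assignment to a satisfying-the-required-value assignment. The cost of these changes is at most the total weight of the light children, which we bounded above by $|\Phi_u|\epsilon/4 < |\Phi_u|\epsilon$. Therefore $\sigma$ restricted to $\Phi_u$ is $\epsilon$-close to an assignment in $\SAT(\Phi_u=b)$.

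The one point requiring care — and the main (mild) obstacle — is the edge case where $u$ has \emph{zero} heavy children, i.e.\ $\ell-1=0$, meaning even the largest child of $u$ has size below $|\Phi_u|(4k/\epsilon)^{-1}$. Then all children are light, there are at most $k$ of them, and their combined weight is at most $k|\Phi_u|(4k/\epsilon)^{-1}=|\Phi_u|\epsilon/4<|\Phi_u|\epsilon$; by the same unforceability argument (we can freely set all children) $\sigma$ is $\epsilon$-close to $\SAT(\Phi_u=b)$. Either way we reach the desired contradiction, so $u$ must have at least two heavy children. A secondary subtlety is making sure the unforceability of $\kappa_u$ really lets us hit the target value $b$ after the heavy child is frozen: unforceability says no child is $(a,b')$-forceful, so no child's value alone pins down the output, and since all but the heavy child are being set freely, the gate output ranges over all of $\Sigma=\{0,1\}$ — in particular it can be made equal to $b$. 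This is the same observation underlying Lemma~\ref{lem:heavy-bound}, just applied with the heavy child playing the role of the near-full-weight child.
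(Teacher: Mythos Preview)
Your proof is correct and follows essentially the same approach as the paper: bound the total weight of the light children by something strictly less than $\epsilon$, then use unforceability to conclude that $\sigma$ is $\epsilon$-close to $\SAT(\Phi_u=b)$. The paper's version is terser only in that it cites Lemma~\ref{lem:heavy-bound} directly rather than re-deriving its content. One small remark: your edge case $\ell=1$ is actually vacuous, since the children of $u$ partition $X_u$ and hence the largest child has size at least $|\Phi_u|/k > |\Phi_u|(4k/\epsilon)^{-1}$ whenever $\epsilon\le 1$; the paper's proof implicitly relies on this and treats only $\ell=2$.
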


\begin{proof}
By the definition of $\ell$, if there is just one heavy child, then $\ell=2$ and the total weight of the light children is strictly smaller than $\epsilon$. Therefore by Lemma~\ref{lem:heavy-bound} there must be more than one heavy child, as otherwise the gate is $\epsilon$-close to both $0$ and $1$.
\end{proof}

\section{Upper Bound for General Bounded Arity Formula}\label{sec:mos-gen}

Algorithm~\ref{alg:MosGen} tests whether the input is $\epsilon$-close to having output $b$ with $1$-sided error, and also receives a confidence parameter $\delta$. The explicit confidence parameter makes the inductive arguments easier and clearer. The algorithm operates by recursively checking the conditions in Observations~\ref{ob:gen-and} and~\ref{ob:gen-or}.

\begin{theorem}\label{thm:AlgMosGen}
$\mbox{Algorithm~\ref{alg:MosGen}}(\Phi,\epsilon,\delta,\sigma)$ always accepts any input that satisfies the formula $\Phi$, and rejects any input far from satisfying $\Phi$ with probability at least $1-\delta$. Its query complexity (treating $k$ and $\delta$ as constant) is always $O(\exp(poly(\epsilon^{-1})))$.
\end{theorem}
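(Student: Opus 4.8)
The plan is to analyze Algorithm~\ref{alg:MosGen}, which is not shown in the excerpt but is clearly a recursive procedure that, at a vertex $v$, branches according to the gate type $\kappa_v$: for $\wedge$ with target $b=1$ or $\vee$ with target $b=0$ it uses Observation~\ref{ob:gen-and} to recurse into a random sample of children weighted by $|\Phi_w|$; for $\vee$ with $b=1$ or $\wedge$ with $b=0$ it uses Observation~\ref{ob:gen-or} to recurse into a small number of the heavy children (there are at most $k$ of them, each of weight $\geq\epsilon$); for a general unforceable bounded-arity gate it uses Observation~\ref{ob:two-heavies} to recurse into all (at most $k$) heavy children with an appropriately chosen target; and at a leaf it simply queries the variable. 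The $1$-sided error claim (always accepting a satisfying assignment) is the easy direction: I would argue by induction on the structure that every query-path the algorithm explores, when the input does satisfy $\Phi_v=b$, is consistent, so no rejecting condition is ever triggered --- the algorithm only rejects when it reaches a leaf whose value contradicts the target propagated down, which cannot happen on a satisfying assignment.

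For the soundness direction, I would set up an induction on $\depth{}{\Phi}$ (or on $|\Phi|$) with the inductive hypothesis quantified over a ``current farness'' parameter $\epsilon$ and confidence $\delta$: if $\sigma$ is $\epsilon$-far from $\SAT(\Phi_v=b)$ then the recursive call rejects with probability at least $1-\delta$. The key point in each case is that the Observations guarantee a child (or a weighted-random child, or one of the two guaranteed heavy children) that is far from being set to the required value, but with a \emph{reduced} farness parameter: in the $\vee/b=1$ case the surviving heavy child is $2\epsilon$-far (Observation~\ref{ob:or}, for non-heaviest children) or at least $\epsilon(1+\epsilon)$-far, and in the $\wedge/b=1$ case a random child weighted by size is, with probability proportional to $\epsilon\alpha^2$, at least $\epsilon(1-\alpha)$-far. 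So the farness parameter degrades multiplicatively by roughly $(1-\alpha)$ at And-like gates, while at the general bounded-arity gates it degrades by a factor depending only on $k$ and $\epsilon$ (because heavy children have weight bounded below in terms of $(4k/\epsilon)$). I would choose the recursion's ``$\alpha$'' (e.g.\ $\alpha \approx c/\log(1/\epsilon)$ or similar, matching the macros \texttt{expRD}, \texttt{andcnst} hinted in the preamble) so that the total multiplicative loss over the whole recursion keeps the effective farness bounded below by some $\epsilon' = \poly(\epsilon)$, and at the same time the recursion depth along And-like chains before the parameter would drop below $\epsilon'$ is $O(\epsilon^{-1}\log(1/\epsilon))$; at each And-like node one pays a factor of roughly $O(\epsilon^{-2})$ in the number of children sampled (to hit, with good probability, one of the heavy-enough children promised by Observation~\ref{ob:gen-and}) and a $\log(1/\delta)$-type factor for amplification, and at each Or-like or general node one pays only a factor of $k$. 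Multiplying these out gives a query bound of the shape $(\textrm{const}(k,\epsilon))^{O(\epsilon^{-1}\log(1/\epsilon))} = \exp(\poly(\epsilon^{-1}))$ with the polynomial degree linear in $k$, as claimed.

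The main obstacle, I expect, is the bookkeeping of the confidence parameter across the weighted sampling at And-like gates and the way $\epsilon$ changes along the recursion. For the And-like case one cannot simply recurse into \emph{one} random child; Observation~\ref{ob:gen-and} only says a set $S$ of total weight $\geq \epsilon\alpha^2|\Phi|$ of children is each $\epsilon(1-\alpha)$-far, so a size-weighted random child lands in $S$ with probability only $\geq\epsilon\alpha^2$, and one must repeat the sampling $\Theta(\epsilon^{-1}\alpha^{-2}\log\delta^{-1})$ times and take, say, an OR of the sub-tests' rejections --- being careful that a \emph{satisfying} assignment still never causes rejection (which holds because each individual sub-test has $1$-sided error). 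One must also verify the base case and the leaf/negation handling, check that the ``$\epsilon\leq 1/2$'' and ``weight $\geq\epsilon$'' side conditions from Observations~\ref{ob:gen-and},~\ref{ob:gen-or} are maintained (so that the recursion is well-defined and does not recurse on a vacuously-satisfiable subformula), and confirm that once the effective farness would drop below the chosen threshold $\epsilon'$ the algorithm can afford to read the entire relevant subformula --- but in fact the structure of the bound shows we never need to, since the subtree reached after that many And-like steps already has $\poly(\epsilon^{-1})$-bounded... no: rather, one argues the recursion simply terminates with the stated query count because the branching factor and depth are both controlled. Finally I would assemble the top-level statement by invoking the induction at the root $r$ with the input farness parameter $\epsilon$ and confidence $\delta$, and note that treating $k,\delta$ as constants collapses the bound to $O(\exp(\poly(\epsilon^{-1})))$.
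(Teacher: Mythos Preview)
Your overall decomposition --- completeness by structural induction, soundness by induction on depth using Observations~\ref{ob:gen-and} and~\ref{ob:gen-or} to locate far children, and query complexity as (branching factor)$^{(\text{depth})}$ --- is exactly the paper's decomposition. But there is one genuine gap: you do not have a working termination/recursion-depth argument.

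The algorithm terminates not because $\epsilon$ stays bounded \emph{below}, but because $\epsilon$ eventually \emph{exceeds}~$1$ (the first line of Algorithm~\ref{alg:MosGen} is ``if $\epsilon>1$ return true''). The crucial structural fact --- which you never invoke --- is that in a $k$-$x$-basic formula no $\wedge$ is a child of a $\wedge$ and no $\vee$ is a child of a $\vee$ (Definition~\ref{def:kxbasic}). Hence every $\epsilon$-\emph{decreasing} And-like step (recursing with $\epsilon(1-(2k/\epsilon)^{-k}/16)$) is immediately followed by an $\epsilon$-\emph{increasing} Or-like or unforceable step with the same target $b$ (recursing with at least $\epsilon(1+(4k/\epsilon)^{-k})$). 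Two consecutive levels therefore multiply $\epsilon$ by at least $1+(4k/\epsilon)^{-k}/16$, so after $O((4k/\epsilon)^k\log(1/\epsilon))$ levels $\epsilon>1$ and the recursion halts (Lemma~\ref{lem:mosgen-qc-depth}). Your proposal instead tries to ``keep the effective farness bounded below by some $\epsilon'=\poly(\epsilon)$'' and then asserts that ``the branching factor and depth are both controlled'' --- but a lower bound on $\epsilon$ does not by itself bound recursion depth (the formula may be arbitrarily deep), and you visibly waver between ``read the entire relevant subformula'' and ``we never need to''. Your suggested $\alpha\approx c/\log(1/\epsilon)$ is tuned for the wrong goal; the paper's choice $\alpha=(2k/\epsilon)^{-k}/16$ is what makes the two-level product exceed~$1$.

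Two smaller corrections. In the Or-like case ($\vee$ with $b=1$ or $\wedge$ with $b=0$) the gate may have unbounded arity; the number of children is at most $1/\epsilon$ (from the weight bound in Observation~\ref{ob:gen-or}), not at most $k$, and the algorithm recurses into \emph{all} of them. For the unforceable case the algorithm recurses into every child with \emph{both} target values and then searches for a consistent vector $x$; the soundness argument there (Lemma~\ref{lem:mosgen-soundness}) hinges on the heavy/light distinction --- one heavy child can be left unchanged while the others are corrected within the $\epsilon$ budget --- and is more delicate than ``recurse into heavy children with an appropriately chosen target''.
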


\begin{proof}
Follows from Lemma \ref{lem:mosgen-completeness}, Lemma \ref{lem:mosgen-soundness} and Lemma \ref{lem:mosgen-qc} (in that order) below.
\end{proof}

\begin{algorithm}[H]
  \caption{}
  \label{alg:MosGen}
  \textbf{Input:} read-once $k$-$x$-basic formula $\Phi = (V,E,r,X,\kappa)$, parameters $\epsilon,\delta>0,b\in\{0,1\}$, oracle to $\sigma$.\\
  \textbf{Output:} ``true'' or ``false''.

  \begin{algorithmic}[1]
	\STATE {\bf if} $\epsilon>1$ {\bf then return} ``true'' \label{line:MGBigEps}
	\STATE {\bf if} $\kappa_r\in X$ {\bf then return} the truth value of $\sigma(r)=b$\label{line:MGIsVar}
	\STATE {\bf if} $\kappa_r\in \neg X$ {\bf then return} the truth value of $\sigma(r)=1-b$\label{line:MGIsNeg}
	\IF{($\kappa_r\equiv \wedge$ and $b=1$) or ($\kappa_r\equiv\vee$ and $b=0$)}\label{line:MGIsAnd1}
		\STATE $y\longleftarrow\mbox{``true''}$
		\FOR{$i=1$ to $l=\genandcnst$}\label{line:MGIsNegloop}
			\STATE $u \longleftarrow$ a vertex in $\CH{r}$ selected independently at random, where the probability that $w\in \CH{r}$ is selected is $|\Phi_w|/|\Phi|$ \label{line:SmplAnd1}
			\STATE $y \longleftarrow y \wedge \mbox{Algorithm~\ref{alg:MosGen}}(\Phi_u, (\slightlysmall{\epsilon}),\sigma, \delta/2,b)$\label{line:MGRec1}
		\ENDFOR
		\RETURN $y$
	\ENDIF
	\IF{($\kappa_r\equiv\wedge$ and $b=0$) or ($\kappa_r\equiv\vee$ and $b=1$)}\label{line:MGIsAnd0}
		\STATE {\bf if} there exists a child of weight less than $\epsilon$ {\bf then return} ``true'' \label{line:MGAnd0BigKid}\label{line:MGOr1BigKid}
		\STATE $y\longleftarrow\mbox{``false''}$
		\STATE {\bf for all} {$u\in\CH{r}$} {\bf do} $y \longleftarrow y\vee \mbox{Algorithm~\ref{alg:MosGen}}(\Phi_u, (\slightlybig{\epsilon}),\sigma, \epsilon\delta/2,b)$\label{line:MGRec2}
		\RETURN $y$
	\ENDIF
	\STATE {\bf if} there is a child of weight at least $1-\epsilon$ {\bf then return} ``true'' \label{line:MGUnfBigKid}
	\FORALL{$u\in\CH{r}$}
		\STATE $y_u^0 \longleftarrow \mbox{Algorithm~\ref{alg:MosGen}}(\Phi_u, (\recurseps{\epsilon}),\sigma, \delta/2k,0)$\label{line:MGRec3}
		\STATE $y_u^1 \longleftarrow \mbox{Algorithm~\ref{alg:MosGen}}(\Phi_u, (\recurseps{\epsilon}),\sigma, \delta/2k,1)$\label{line:MGRec4}
	\ENDFOR
	\STATE {\bf if} There exists a string $x\in\{0,1\}^k$ such that $\kappa_r$ on $x$ would evaluate to $b$ and for all $u\in\CH{r}$ we have $y_u^{x_u}$ equal to ``true'' {\bf then return} ``true'' {\bf else return} ``false'' \label{line:MGFindAss}
 \end{algorithmic}
\end{algorithm}

\begin{lemma}\label{lem:mosgen-qc-depth}
The depth of recursion in Algorithm~\ref{alg:MosGen} is at most $16(4k/\epsilon)^{k}\log(\epsilon^{-1})$.
\end{lemma}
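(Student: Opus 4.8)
The plan is to follow one root-to-leaf branch of the recursion tree of Algorithm~\ref{alg:MosGen} and control two quantities along it: the parameter $\epsilon$ passed to the current recursive call and the size $|\Phi_v|$ of its subformula. Since every branch terminates at a base case --- $\epsilon>1$ (line~\ref{line:MGBigEps}) or a leaf or negation (lines~\ref{line:MGIsVar}--\ref{line:MGIsNeg}) --- it suffices to bound the number of recursive calls on a branch. First I would classify every call, according to $\kappa_r$ and the target bit $b$, into three kinds. A \emph{conjunctive} call ($\wedge$ with $b=1$, or $\vee$ with $b=0$; line~\ref{line:MGIsAnd1}) recurses with $\slightlysmall{\epsilon}$, so $\epsilon$ is multiplied by $1-(2k/\epsilon)^{-k}/16$, while $|\Phi_v|$ need not drop much (the branch may descend into the heaviest child). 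A \emph{disjunctive} call ($\wedge$ with $b=0$, or $\vee$ with $b=1$; line~\ref{line:MGIsAnd0}) recurses with $\slightlybig{\epsilon}=\epsilon(1+\epsilon)$; since the guard on line~\ref{line:MGAnd0BigKid} did not fire and the gate has at least two children, every child has weight in $[\epsilon,1-\epsilon]$, so $|\Phi_v|$ shrinks by a factor of at least $1-\epsilon$. A \emph{general} call (an unforceable non-$\wedge/\vee$ gate; lines~\ref{line:MGRec3}--\ref{line:MGRec4}) recurses with $\recurseps{\epsilon}=\epsilon(1+(4k/\epsilon)^{-k})$, and since the guard on line~\ref{line:MGUnfBigKid} did not fire, $|\Phi_v|$ again shrinks by a factor of at least $1-\epsilon$. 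In short, disjunctive and general calls enlarge $\epsilon$ and shrink the subformula, whereas conjunctive calls shrink $\epsilon$ by only a minuscule factor and need not shrink the subformula.

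The second ingredient is the $k$-$x$-basic structure (Definition~\ref{def:kxbasic}): no $\wedge$ is a child of a $\wedge$ and no $\vee$ is a child of a $\vee$. Hence a conjunctive call is never the parent or the child (on the branch) of another conjunctive call: a $\wedge$ gate evaluated with $b=1$, for instance, can only sit below a $\vee$ gate or a general gate, and its own relevant child is again a $\vee$ gate, a general gate, or a leaf. So the conjunctive calls along the branch are isolated, each immediately preceded on the branch by a non-conjunctive call, the only possible exception being a conjunctive call that is itself the first call of the branch. It follows that the number of conjunctive calls is at most one more than the number of non-conjunctive calls, so the branch has length at most $2\,(\#\text{non-conjunctive calls})+1$.

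It then remains to bound the number of disjunctive and general calls. For this I would telescope on $\log(1/\epsilon)$: disjunctive calls decrease it by $\Omega(\epsilon)$, general calls by $\Omega((4k/\epsilon)^{-k})$, and conjunctive calls increase it by only $O((2k/\epsilon)^{-k})$; since the conjunctive calls are isolated, one amortizes each such increase against the adjacent non-conjunctive decrease (and, where this per-step comparison is too weak, against the factor-$(1-\epsilon)$ shrinkage of $|\Phi_v|$ guaranteed by the guards on lines~\ref{line:MGAnd0BigKid} and~\ref{line:MGUnfBigKid}), so that $\epsilon$ makes net upward progress over runs of consecutive non-conjunctive calls. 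Partitioning the branch into phases according to the dyadic interval containing $\log_2(1/\epsilon)$, one shows each phase contains at most $O((4k/\epsilon)^k)$ calls --- the first phase, where $\epsilon$ is still close to its initial value and about $(2k/\epsilon)^k$ general calls are needed before $\epsilon$ doubles, is the most expensive, and this is what produces the $(4k/\epsilon)^k$ factor and the slack constant $16$ --- and that there are at most $\log_2(\epsilon^{-1})$ phases before $\epsilon$ exceeds $1$. Multiplying yields the bound $16(4k/\epsilon)^k\log(\epsilon^{-1})$ on the branch length, hence on the recursion depth.

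The main obstacle is exactly this amortization, and it is sharpest for large $k$: a general call perturbs $\epsilon$ upward only by the factor $1+(4k/\epsilon)^{-k}$, which is \emph{smaller} than the factor $1-(2k/\epsilon)^{-k}/16$ by which a following conjunctive call can perturb it downward, so one cannot show $\epsilon$ is non-decreasing step by step. The accounting must therefore be carried out over whole runs of consecutive non-conjunctive calls, and/or must simultaneously spend the geometric decay of $|\Phi_v|$ at every disjunctive and general call, folding that a priori $|\Phi|$-dependent decay back into the $\log(1/\epsilon)$-phase argument so that the final bound stays $|\Phi|$-free. Once these contributions are balanced against the constant $16$ and the exponent $k$, the remaining work is routine bookkeeping.
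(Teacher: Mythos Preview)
The paper's argument is far simpler than your plan and never looks at $|\Phi_v|$. It makes exactly your structural observation---a conjunctive call cannot be immediately followed by another conjunctive call, because $\wedge$ and $\vee$ recurse with the \emph{same} $b$ and in a $k$-$x$-basic formula no $\wedge$ is a child of $\wedge$ (nor $\vee$ of $\vee$)---and then simply claims that over any two consecutive levels the worst possible product of the two multiplicative factors on $\epsilon$, namely $\bigl(1-(2k/\epsilon)^{-k}/16\bigr)\bigl(1+(4k/\epsilon)^{-k}\bigr)$, is at least $1+(4k/\epsilon)^{-k}/16$. The depth bound drops out in one line. There are no phases and no second potential.

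Your plan to track $|\Phi_v|$ is a genuine gap, not merely a detour. The guards on lines~\ref{line:MGAnd0BigKid} and~\ref{line:MGUnfBigKid} give only $|\Phi_u|\le(1-\epsilon)|\Phi_v|$ at each non-conjunctive step, which bounds the number of such steps by $O(\epsilon^{-1}\log|\Phi|)$. There is no way to ``fold this back'' into something independent of $|\Phi|$: on a long alternating chain of unforceable gates and $\wedge$'s the branch descends $\Theta(\depth{}{\Phi})$ levels, so subformula-size accounting cannot deliver the stated $|\Phi|$-free bound.

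The obstacle you single out is, however, real: the paper's two-step inequality in fact fails for $k\ge 4$, since $(2k/\epsilon)^{-k}/16=2^{\,k-4}(4k/\epsilon)^{-k}\ge(4k/\epsilon)^{-k}$ there, so a general--conjunctive pair need not increase $\epsilon$. Along an alternating unforceable/$\wedge$ path (taking the $b=1$ branch at each general call) $\epsilon$ then never exceeds $1$, and neither your amortization nor the paper's one-line computation proves the lemma as stated for large $k$. The repair is to adjust the constants in the recursive $\epsilon$-updates of the algorithm (so that the general increment dominates the conjunctive decrement for every $k$), not to introduce a second potential function.
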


\begin{proof}
If $\epsilon>1$ then the condition in Line~\ref{line:MGBigEps} is satisfied and the algorithm returns without making any queries.

All recursive calls occur in Lines~\ref{line:MGRec1}, \ref{line:MGRec2}, \ref{line:MGRec3} and \ref{line:MGRec4}. Since $\Phi$ is $k$-$x$-basic, any call with a subformula whose root is labeled by $\wedge$ results in calls to subformulas, each with a root labeled either by $\vee$ or an unforceable gate, and with the same $b$ value (this is crucial since the $b$ value for which $\wedge$ recurses with a smaller $\epsilon$ is the $b$ value for which $\vee$ recurses with a bigger $\epsilon$, and vice-versa). Similarly, any call with a subformula whose root is labeled by $\vee$ results in calls to subformulas, each with a root labeled either by $\wedge$ or an unforceable gate, and with the same $b$ value. Therefore, an increase of two in the depth results in an increase of the farness parameter from $\epsilon$ to at least $(\slightlysmall{\epsilon})(\recurseps{\epsilon})\geq\epsilon (1+(4k/\epsilon)^{-k}/16)$. Thus in recursive calls of depth $16(4k/\epsilon)^{k}\log(\epsilon^{-1})$ the farness parameter exceeds $1$ and the call returns without making any further calls.
\end{proof}

\begin{lemma}\label{lem:mosgen-qc}
Algorithm~\ref{alg:MosGen} uses at most $\mgqc{\epsilon}{\delta}$ queries.
\end{lemma}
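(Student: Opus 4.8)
\textbf{Proof proposal for Lemma~\ref{lem:mosgen-qc}.}

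The plan is to bound the total query count by tracking how the branching factor and the recursion depth interact. By Lemma~\ref{lem:mosgen-qc-depth}, the depth of recursion is at most $D := 16(4k/\epsilon)^{k}\log(\epsilon^{-1})$, so it suffices to bound the branching factor at each node of the recursion tree and raise it to the power $D$ (each leaf of the recursion tree contributes $O(1)$ queries, since the base cases in Lines~\ref{line:MGIsVar}--\ref{line:MGIsNeg} make at most one query). There are three kinds of internal recursion nodes, corresponding to the three \texttt{if} blocks that recurse. In the $\wedge/b{=}1$ (equivalently $\vee/b{=}0$) case of Line~\ref{line:MGIsAnd1}, the \texttt{for} loop makes $\ell = \genandcnst$ recursive calls, i.e.\ $\kandsample{\epsilon}\log(\delta^{-1})$ calls. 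In the $\wedge/b{=}0$ (equivalently $\vee/b{=}1$) case of Line~\ref{line:MGIsAnd0}, there is one recursive call per child; here the key point is that we may assume every child has weight at least $\epsilon$ (otherwise Line~\ref{line:MGAnd0BigKid} returns without recursing), so there are at most $\epsilon^{-1}$ children. In the unforceable-gate case, there are at most $2k$ recursive calls (two per child, $b=0$ and $b=1$, with at most $k$ children).

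The next step is to observe that the $\wedge/\vee$ layers alternate: as noted in the proof of Lemma~\ref{lem:mosgen-qc-depth}, a $\wedge$ node's children are all $\vee$ nodes or unforceable nodes, and vice versa, and unforceable nodes have only $\wedge$, $\vee$, or leaf children. So along any root-to-leaf path in the recursion tree, consecutive levels cannot both be ``large-fanout $\wedge$'' steps with the same $b$. More importantly, the dangerous fanout is the $\kandsample{\epsilon}\log(\delta^{-1})$ one from Line~\ref{line:MGIsNegloop}; the fanouts $\epsilon^{-1}$ and $2k$ are comparatively small and get absorbed. The clean way to package this: the branching factor at \emph{every} recursion node is at most $\max\{\kandsample{\epsilon}\log(\delta^{-1}),\ \epsilon^{-1},\ 2k\}$, and for small $\epsilon$ and constant $k$ this is dominated by $\kandsample{\epsilon}\log(\delta^{-1}) = 32(2k/\epsilon)^{2k}\log(\delta^{-1})$. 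Hence the total number of recursion-tree leaves, and thus the total query count, is at most $\bigl(32(2k/\epsilon)^{2k}\log(\delta^{-1})\bigr)^{D}$ with $D = 16(4k/\epsilon)^{k}\log(\epsilon^{-1})$. Taking logarithms, $\log(\text{query count}) \le D \cdot \log\bigl(32(2k/\epsilon)^{2k}\log(\delta^{-1})\bigr) = O\bigl((4k/\epsilon)^{k}\log(\epsilon^{-1})\cdot(k\log(k/\epsilon)+\log\log(\delta^{-1}))\bigr)$, which, treating $k$ as constant and folding the polylogarithmic factors into the exponent, matches the claimed bound $\mgqc{\epsilon}{\delta} = {\epsilon}^{-480(4k/{\epsilon})^{k+3}\log\log({\delta}^{-1})}$ (the generous constant $480$ and the extra $(4k/\epsilon)^{3}$ factor give ample room to absorb the $\log(1/\epsilon)$ from $D$, the $\log(1/\epsilon)$ from $\log(k/\epsilon)$, and the base change from $\epsilon$ to the expression inside the exponent).

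I would carry this out in the order: (1) cite Lemma~\ref{lem:mosgen-qc-depth} for the depth bound $D$; (2) case-analyze the three recursing \texttt{if}-blocks to extract the per-node fanout bounds, being careful to invoke Line~\ref{line:MGAnd0BigKid} to cap the number of children in the $\wedge/b{=}0$ case by $\epsilon^{-1}$; (3) take the uniform upper bound $F := \max\{\kandsample{\epsilon}\log(\delta^{-1}), \epsilon^{-1}, 2k\}$ on the fanout; (4) conclude the recursion tree has at most $F^{D}$ leaves, each costing $O(1)$ queries; (5) a short calculation converting $F^{D}$ into the stated closed form. The main obstacle is step (5): it is purely a matter of verifying that the exponent $480(4k/\epsilon)^{k+3}\log\log(\delta^{-1})$ (base $\epsilon^{-1}$) genuinely dominates $D\log F$; the subtlety is that $D\log F$ has a $\log^2(1/\epsilon)$-type term (one $\log(1/\epsilon)$ from $D$, one from $\log F$) and a $\log(1/\epsilon)\log\log(1/\delta)$ cross term, so one must check the extra $(4k/\epsilon)^{3}$ polynomial slack and the change of logarithm base from $\epsilon^{-1}$ back up absorbs all of these — which it does, since $(4k/\epsilon)^{3}$ beats any fixed power of $\log(1/\epsilon)$ for $\epsilon$ small enough, and for $\epsilon$ not small the statement is trivial as the query complexity is bounded by the (constant-sized) formula anyway. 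This is routine but must be written carefully to make the inequality transparent rather than asserted.
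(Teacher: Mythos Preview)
Your outline matches the paper's (bound the recursion depth via Lemma~\ref{lem:mosgen-qc-depth}, bound the per-node fanout, multiply), but there is a genuine gap in step~(3). You take the uniform fanout bound $F = 32(2k/\epsilon)^{2k}\log(\delta^{-1})$ using the \emph{top-level} $\epsilon$ and $\delta$. However, the loop bound $l=\genandcnst$ in Line~\ref{line:MGIsNegloop} is computed from the \emph{local} parameters at that instantiation, and the confidence parameter strictly shrinks at every recursive call (by a factor of $1/2$ in Line~\ref{line:MGRec1}, $\epsilon/2$ in Line~\ref{line:MGRec2}, and $1/2k$ in Lines~\ref{line:MGRec3}--\ref{line:MGRec4}). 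After $D$ levels the local $\delta'$ can be as small as $\delta\cdot(\epsilon/2k)^{D}$, so the actual fanout at a deep node is governed by $\log((\delta')^{-1}) \approx \log(\delta^{-1}) + D\log(2k/\epsilon)$, which can exceed your $F$ by a factor polynomial in $1/\epsilon$. Your bound $F^{D}$ therefore undercounts.

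The paper's proof contains precisely the step you are missing: before writing down the fanout bound, it first lower-bounds the local confidence parameter uniformly over all levels (using the depth bound $D$ to get $\delta' \ge \delta\cdot(\epsilon/2k)^{D}$), and only then substitutes this worst-case $\delta'$ into the $\wedge$-case fanout. Once you insert that step, your step~(5) calculation must also absorb the additional $D\log(2k/\epsilon)$ term inside the logarithm; this is exactly where the $\log\log(\delta^{-1})$ factor in the target exponent genuinely interacts with the $(4k/\epsilon)^{k+3}$ slack, and it does not follow from the ``$\log^2(1/\epsilon)$ cross-term'' accounting you sketch. (A smaller point: the local $\epsilon$ can also decrease at Line~\ref{line:MGRec1}, though only by a factor close to~$1$, so this is not the main issue.)
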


\begin{proof}
If $\epsilon>1$ then the condition in Line~\ref{line:MGBigEps} is satisfied and no queries are made. Therefore assume $\epsilon \leq 1$. Observe that in a specific instantiation at most one query is used, either in Line~\ref{line:MGIsVar} or Line~\ref{line:MGIsNeg}. Therefore the number of queries is upper bounded by the number of instantiations of Algorithm~\ref{alg:MosGen}.

In a specific instantiation at most $\genandcnst$ recursive calls are made in total (note that by Line~\ref{line:MGOr1BigKid} there are at most $1/\epsilon$ children in the case of the condition in Line~\ref{line:MGIsAnd0}, and in the case of an unforceable gate there are at most $2k$ recursive calls). Recall that by Lemma~\ref{lem:mosgen-qc-depth} the depth of the recursion is at most $16(4k/\epsilon)^{k}\log(\epsilon^{-1})$.

To conclude, we note that the value of the confidence parameter in all these calls is lower bounded by $\delta\cdot (\epsilon/2k)^{16(4k/\epsilon)^{k}\log(\epsilon^{-1})}\geq\delta\cdot \epsilon^{32(4k/\epsilon)^{k}\log(k\epsilon^{-1})}$. Therefore at most $(32(4k/\epsilon)^{2k}\log(\delta\cdot \epsilon^{-32(2k/\epsilon)^{k}\log(k\epsilon^{-1})})) ^{16(4k/\epsilon)^{k}\log(\epsilon^{-1})}=\epsilon^{-480(4k/\epsilon)^{k+3}\log\log(\delta^{-1})}$ queries are used.
\end{proof}

\begin{lemma}\label{lem:mosgen-completeness}
If $\Phi$ on $\sigma$ evaluates to $b$ then Algorithm~\ref{alg:MosGen} returns ``true'' with probability $1$.
\end{lemma}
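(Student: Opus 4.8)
The claim is the one-sided-error (completeness) direction, and I would prove it by induction, the natural induction parameter being the depth of the recursion tree, which by Lemma~\ref{lem:mosgen-qc-depth} is finite (alternatively, one may induct on the structure of $\Phi$, i.e.\ on $|V|$). The inductive hypothesis is: for every subformula $\Phi_u$ of $\Phi$, every $b'\in\{0,1\}$, every $\epsilon'>0$ and $\delta'>0$, if $\sigma(\Phi_u)=b'$ then $\mbox{Algorithm~\ref{alg:MosGen}}(\Phi_u,\epsilon',\delta',\sigma,b')$ returns ``true'' with probability $1$. The proof is then a case analysis on $\kappa_r$, which, since $\Phi$ is $k$-$x$-basic, falls into exactly the cases handled by the algorithm: a variable, a negated variable, $\wedge$, $\vee$, or an unforceable bounded-arity gate.

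For the base cases: if $\epsilon>1$ (Line~\ref{line:MGBigEps}) the algorithm returns ``true'' unconditionally; if $\kappa_r\in X$ (Line~\ref{line:MGIsVar}) then $\sigma(\Phi)=\sigma(r)$, so $\sigma(\Phi)=b$ forces the returned truth value $\sigma(r)=b$ to be true; the negated-variable case (Line~\ref{line:MGIsNeg}) is symmetric. Lines~\ref{line:MGAnd0BigKid} and~\ref{line:MGUnfBigKid} return ``true'' regardless, so they cause no problem. For the $\wedge$-with-$b=1$ case (and the symmetric $\vee$-with-$b=0$ case, Lines~\ref{line:MGIsAnd1}--\ref{line:MGRec1}): $\sigma(\Phi)=1$ means every child evaluates to $1$, so every sampled child $u$ has $\sigma(\Phi_u)=1$, and by the inductive hypothesis each recursive call in Line~\ref{line:MGRec1} returns ``true'' with probability $1$; hence $y$ stays ``true''. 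For the $\wedge$-with-$b=0$ case (and symmetric $\vee$-with-$b=1$, Lines~\ref{line:MGIsAnd0}--\ref{line:MGRec2}): $\sigma(\Phi)=0$ means at least one child $u$ evaluates to $0$; either Line~\ref{line:MGAnd0BigKid} already returns ``true'', or the recursive call on that $u$ in Line~\ref{line:MGRec2} returns ``true'' by the inductive hypothesis, making $y$ become ``true''.

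The only case with any content is the unforceable gate (Lines~\ref{line:MGUnfBigKid}--\ref{line:MGFindAss}). Here, if no child has weight $\geq 1-\epsilon$, let $x\in\{0,1\}^k$ be the vector of actual child evaluations, $x_u=\sigma(\Phi_u)$. Since $\sigma(\Phi)=b$ we have $\kappa_r(x)=\sigma(r)=b$, so $x$ is a valid witness for the condition in Line~\ref{line:MGFindAss}; and for each child $u$, $\sigma(\Phi_u)=x_u$, so by the inductive hypothesis the relevant recursive call (Line~\ref{line:MGRec3} if $x_u=0$, Line~\ref{line:MGRec4} if $x_u=1$) sets $y_u^{x_u}$ to ``true'' with probability $1$. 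Therefore the condition in Line~\ref{line:MGFindAss} is met and the algorithm returns ``true''. This exhausts all cases, completing the induction.

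\textbf{Main obstacle.} There is no real obstacle; the argument is routine. The one point that needs care is ensuring the case split is exhaustive and matches the algorithm's branches — this relies on $\Phi$ being $k$-$x$-basic (so the root is a variable, a negated variable, $\wedge$, $\vee$, or an unforceable gate, with no constant leaves and no negations except above variables) — and, in the unforceable case, observing that the ``guessed'' assignment $x$ in Line~\ref{line:MGFindAss} can be taken to be the true vector of child evaluations so that the inductive hypothesis applies to each child with the correct target bit.
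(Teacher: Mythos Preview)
Your proposal is correct and follows essentially the same approach as the paper: induction on the structure (depth) of $\Phi$, with the same case split on $\kappa_r$ and the same key observation in the unforceable case that the true vector of child evaluations serves as the witness $x$ in Line~\ref{line:MGFindAss}. The paper's proof is virtually identical, differing only in minor presentation details.
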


\begin{proof}
If $\epsilon>1$ then the condition of Line~\ref{line:MGBigEps} is satisfied and ``true'' is returned correctly. We proceed with induction over the depth of the formula. If $\depth{}{\Phi}=0$ then $\kappa_r\in X\cup \neg X$. If $\kappa_r\in X$ then since $\Phi$ evaluates to $b$, $\sigma(r)=b$, if $\kappa_r\in \neg X$ then $\sigma(r)=1-b$, and the algorithm returns ``true'' correctly.

Now assume that $\depth{}{\Phi}>0$. Obviously, for all $u\in\CH{r}$, we have that $\depth{}{\Phi}>\depth{}{\Phi_u}$ and therefore from the induction hypothesis any recursive call on a subformula that evaluates to $b'$ returns ``true'' with probability $1$.

If $\kappa_r\equiv\wedge$ and $b=1$ or $\kappa_r\equiv\vee$ and $b=0$, then it must be the case that for all $u\in\CH{r}$, $\Phi_u$ evaluates to $b$. By the induction hypothesis all recursive calls will return ``true'' and $y$ will get the value ``true'', which will be returned by the algorithm.

Now assume that $\kappa_r\equiv\wedge$ and $b=0$ or $\kappa_r\equiv\vee$ and $b=1$. Since $\Phi$ evaluates to $b$ then it must be the case that at least for one $u\in\CH{r}$, $\Phi_u$ evaluates to $b$. By the induction hypothesis, the recursive call on that $u$ will return ``true'', and $y$ will get the value ``true'' which will be returned by the algorithm (unless the algorithm already returned ``true'' for another reason).

Lastly, assume that $r$ is an unforceable gate. Since $\Phi$ evaluates to $b$, the children of $r$ evaluate to an assignment $x$ to $\kappa_r$ which evaluates to $b$. By the induction hypothesis, for every $u\in\CH{r}$ the recursive call on $\Phi_u$ with $x_u$ will return ``true'', and thus the assignment $x$ will, in particular, fill the condition in Line~\ref{line:MGFindAss} and the algorithm will return ``true''.
\end{proof}

\begin{lemma}\label{lem:mosgen-soundness}
If $\sigma$ is $\epsilon$-far from getting $\Phi$ to output $b$ then Algorithm~\ref{alg:MosGen} returns ``false'' with probability at least $1-\delta$.
\end{lemma}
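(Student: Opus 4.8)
The plan is to prove soundness by induction on the depth of the recursion (equivalently, on the structure of $\Phi$), following the same case split as in the completeness proof (Lemma~\ref{lem:mosgen-completeness}) and in the algorithm itself. The contrapositive formulation is cleanest: I will show that if Algorithm~\ref{alg:MosGen}$(\Phi,\epsilon,\delta,\sigma,b)$ returns ``true'' with probability greater than $\delta$, then $\sigma$ is $\epsilon$-close to $\SAT(\Phi=b)$. The base cases ($\epsilon>1$, and $\kappa_r\in X\cup\neg X$) are immediate since those return values are deterministically correct. For the inductive step I would treat the three gate cases separately, in each case invoking the relevant structural observation from the preliminaries to convert ``the recursive calls mostly succeeded'' into ``a small number of alterations suffice''.

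For the $\wedge$-with-$b=1$ / $\vee$-with-$b=0$ case (Lines~\ref{line:MGIsAnd1}--\ref{line:MGRec1}): here I would argue the contrapositive using Observation~\ref{ob:gen-and}. Suppose $\sigma$ is $\epsilon$-far from $\SAT(\Phi_r=b)$. By Observation~\ref{ob:gen-and} (applied with a suitable $\alpha$, matching the constant $\slightlysmall{\epsilon}=\andreduction{\epsilon}{k}$ used in Line~\ref{line:MGRec1}), a set $S$ of children of total weight at least $\epsilon\alpha^2|\Phi|$ each has $\farness{\sigma}{\SAT(\Phi_w=b)}{}\geq\slightlysmall{\epsilon}$, i.e.\ is itself far enough that, by the induction hypothesis, the recursive call on it returns ``true'' with probability at most $\delta/2$. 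The probability that a single iteration of the loop in Line~\ref{line:MGIsNegloop} samples such a ``bad'' child and still gets ``true'' is at most (weight of $S$)$\cdot\delta/2$; but more to the point, each iteration independently samples a child, hits $S$ with probability at least $\epsilon\alpha^2$, and conditioned on hitting $S$ the recursive call returns ``false'' with probability at least $1-\delta/2\geq 1/2$. Over $l=\genandcnst=\kandsample{\epsilon}\log(\delta^{-1})$ iterations the probability that every iteration returns ``true'' is at most $(1-\epsilon\alpha^2/2)^{l}\leq\delta$ for the right choice of $\alpha$ and the constant in $\kandsample{\cdot}$; hence $y$ becomes ``false'' with probability at least $1-\delta$, which is the contrapositive we want. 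I must be careful to match $\alpha$ with the exact reduction constant $\slightlysmall{\epsilon}$ and check that $\kandsample{\epsilon}=32(2k/\epsilon)^{2k}$ indeed dominates $(\epsilon\alpha^2/2)^{-1}$.

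For the $\wedge$-with-$b=0$ / $\vee$-with-$b=1$ case (Lines~\ref{line:MGIsAnd0}--\ref{line:MGRec2}): this is the ``Or-like'' case handled by Observation~\ref{ob:gen-or}. If the algorithm returned ``true'' via Line~\ref{line:MGAnd0BigKid} then there is a child of weight less than $\epsilon$, and setting it to $b$ brings $\sigma$ $\epsilon$-close to $\SAT(\Phi_r=b)$; so assume otherwise. If $\sigma$ is $\epsilon$-far from $\SAT(\Phi_r=b)$ then by Observation~\ref{ob:gen-or} \emph{every} child $u$ satisfies $\farness{\sigma}{\SAT(\Phi_u=b)}{}\geq\slightlybig{\epsilon}$, matching the parameter in Line~\ref{line:MGRec2}; hence by the induction hypothesis each of the at most $1/\epsilon$ recursive calls returns ``true'' with probability at most $\epsilon\delta/2$, so by a union bound $y$ becomes ``true'' with probability at most $(1/\epsilon)\cdot\epsilon\delta/2\le\delta/2<\delta$. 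For the unforceable-gate case (Lines~\ref{line:MGUnfBigKid}--\ref{line:MGFindAss}): if some child has weight $\geq 1-\epsilon$ then Lemma~\ref{lem:heavy-bound} gives $\epsilon$-closeness and ``true'' is correct. Otherwise, suppose $\sigma$ is $\epsilon$-far from $\SAT(\Phi_r=b)$ but the algorithm returns ``true''; then there is $x\in\{0,1\}^k$ with $\kappa_r(x)=b$ and $y_u^{x_u}=$``true'' for all $u$. Here I would argue that if for every child $u$ the value $y_u^{x_u}$ being ``true'' meant $\sigma$ were $\recurseps{\epsilon}$-close to $\SAT(\Phi_u=x_u)$ for all $u$ simultaneously, then we could patch $\sigma$ to satisfy $\kappa_r(x)=b$ at total cost at most $\max_u\farness{\sigma}{\SAT(\Phi_u=x_u)}{}$ — since the subformulas are variable-disjoint, the overall cost is a \emph{weighted average} bounded by $\recurseps{\epsilon}$, wait — one must use that disjoint subformulas' edits combine with their weights, so the cost is $\sum_u (\text{weight of }u)\cdot(\text{cost within }\Phi_u)\le\recurseps{\epsilon}<\epsilon$ once $\epsilon$ is small, giving a contradiction. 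So with probability at least $1-\delta$ some $y_u^{x_u}$ is ``false'' for every satisfying $x$: by the induction hypothesis each $y_u^{x_u}$ that should be ``false'' (because $\sigma$ is $\recurseps{\epsilon}$-far from $\SAT(\Phi_u=x_u)$) is ``true'' with probability at most $\delta/2k$, and a union bound over the $2k$ calls finishes it.

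The main obstacle I anticipate is the sampling case ($\wedge$/$\vee$ ``And-like'' gate): getting the probabilistic bound to come out right requires carefully tracking that (i) the $\alpha$ in Observation~\ref{ob:gen-and} is chosen consistently with the fixed multiplicative reduction $\slightlysmall{\epsilon}$ hard-coded in Line~\ref{line:MGRec1}, (ii) the number of samples $\genandcnst$ is large enough that $(1-\Omega(\epsilon\alpha^2))^{\genandcnst}$ beats $\delta$ \emph{uniformly} over the recursion — the subtlety being that the confidence $\delta$ itself shrinks down the recursion tree, so one should phrase the induction to carry the per-call $\delta$ explicitly (which the algorithm is designed to do, passing $\delta/2$). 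A secondary subtlety is the unforceable case: one needs $\recurseps{\epsilon}\le\epsilon$-closeness of all children simultaneously to imply $\epsilon$-closeness of the parent, and this requires the $\recurseps{\cdot}$ slack to be chosen small enough relative to the weighted-average argument; since $\recurseps{\epsilon}={\epsilon}(1+(4k/{\epsilon})^{-k}/16)$ barely exceeds $\epsilon$, the weighted combination of child-edits is still within $\epsilon$ and the contradiction goes through. Everything else is bookkeeping with union bounds and the induction hypothesis.
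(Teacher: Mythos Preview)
Your treatment of the base cases and the two forceable-gate cases is essentially the paper's argument: Observation~\ref{ob:gen-and} plus sampling for the ``And-like'' case, and Observation~\ref{ob:gen-or} plus a union bound for the ``Or-like'' case.

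The unforceable-gate case, however, has a genuine gap. You argue that if all $y_u^{x_u}$ are ``true'' (and the recursive calls are correct) then every child is $\recurseps{\epsilon}$-close to evaluating to $x_u$, and hence the total edit cost is the weighted average of these, which is ``within $\epsilon$''. But the weights of the children sum to $1$, so that weighted average is bounded only by $\recurseps{\epsilon}=\epsilon(1+(4k/\epsilon)^{-k})$, which is \emph{strictly larger} than $\epsilon$. Your claim that ``$\recurseps{\epsilon}<\epsilon$ once $\epsilon$ is small'' is false; the slack is there precisely so that the farness parameter grows along the recursion, and you cannot make it disappear by a straight averaging argument.

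The paper recovers the missing factor via the heavy/light structure: it locates one heavy child that requires \emph{no} editing, so that the edits are confined to children of total weight bounded away from $1$. Two subcases. If some heavy child $u$ has $y_u^{1-b}$ ``false'', then by completeness (Lemma~\ref{lem:mosgen-completeness}) $\Phi_u$ genuinely evaluates to $b$, so the witnessing $x$ must have $x_u=b$ and $u$ needs no change; the remaining children have total weight at most $1-(4k/\epsilon)^{-k}$, and $(\recurseps{\epsilon})(1-(4k/\epsilon)^{-k})\leq\epsilon$. Otherwise every heavy child has both $y_u^0$ and $y_u^1$ ``true''; then fix an arbitrary heavy child $u$, use unforceability to choose a (possibly different) $x$ with $\kappa_r(x)=b$ that matches the \emph{actual} value of $\Phi_u$, and again $u$ needs no editing while the other heavy children (total weight $\leq 1-(4k/\epsilon)^{-\ell+1}$) and the light children (total weight $\leq\frac{\epsilon}{4}(4k/\epsilon)^{-\ell}$) together cost at most $\epsilon$. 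Without isolating such a ``free'' heavy child, the argument does not yield $\epsilon$-closeness.
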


\begin{proof}
The proof is by induction over the tree structure, where we partition to cases according to $\kappa_r$ and $b$. Note that $\epsilon\leq 1$.

If $\kappa_r\in X$ or $\kappa_r\in\neg X$ then by Lines~\ref{line:MGIsVar} or \ref{line:MGIsNeg} the algorithm returns ``false'' whenever $\sigma$ does not make $\Phi$ output $b$.

If $\kappa_r\equiv\wedge$ and $b=1$ or $\kappa_r\equiv\vee$ and $b=0$, since $\sigma$ is $\epsilon$-far from getting $\Phi$ to output $b$ then by Observation~\ref{ob:gen-and} we get that there exists $T\subseteq\CH{r}$ for which it holds that $\sum_{t\in T} |\Phi_t|\geq |\Phi|\epsilon\andsize{\epsilon}{k}$ and each $\Phi_t$ is $\slightlysmall{\epsilon}$-far from being evaluated to $b$. Let $S$ be the set of all vertices selected in Line~\ref{line:SmplAnd1}. The probability of a vertex from $T$ being selected is at least $\epsilon\andsize{\epsilon}{k}$. Since this happens at least $\genandcnst$ times independently, with probability at least $1-\delta/2$ we have that $S\cap T\neq \emptyset$. Letting $w\in T\cap S$, the recursive call on it with parameter $\slightlysmall{\epsilon}$ will return ``false'' with probability at least $1-\delta/2$, which will eventually cause the returned value to be ``false'' as required. Thus the algorithm succeeds with probability at least $1-\delta$.

Now assume that $\kappa_r\equiv\wedge$ and $b=0$ or $\kappa_r\equiv\vee$ and $b=1$. Since $\Phi$ is $\epsilon$-far from being evaluated to $b$, Observation \ref{ob:gen-or} implies that all children are of weight at least $\epsilon$, and therefore the conditions of Line~\ref{line:MGOr1BigKid} would not be triggered. Every recursive call on a vertex $v\in\CH{r}$ is made with distance parameter $\slightlybig{\epsilon}$ and so it returns ``true'' with probability at most $\epsilon\delta/2$. Since there are at most $\epsilon^{-1}$ children of $r$, the probability that none returns ``true'' is at least $1-\delta/2$ and in that case the algorithm returns ``false'' successfully.

Now assume that $\kappa_r$ is some unforceable gate. By Observation \ref{lem:heavy-bound}, since $\Phi$ is $\epsilon$-far from being satisfied the condition in Line~\ref{line:MGUnfBigKid} is not triggered. If the algorithm returned ``true'' then it must be that the condition in Line~\ref{line:MGFindAss} is satisfied. If there exists some heavy child $u\in\CH{r}$ such that $y_u^{b}$ is ``true'' and $y_u^{1-b}$ is ``false'', then by Lemma~\ref{lem:mosgen-completeness} the formula $\Phi_{u}$ does evaluate to $b$ and the string in $x$ must be such that $x_u=b$. For the rest of the children of $r$, assuming the calls succeeded, the subformula rooted in each $v$ is $(\recurseps{\epsilon})$-close to evaluate to $x_v$. Since $u$ is heavy, the total weight of $\CH{r}\setminus\{u\}$ is at most $1-(4k/\epsilon)^{-k}$, and thus by changing at most a $(\recurseps{\epsilon})(1-(4k/\epsilon)^{-k})\leq\epsilon$ fraction of inputs we can get to an assignment where $\Phi$ evaluates to $b$.

If all heavy children $u$ are such that both $y_u^{b}$ and $y_u^{1-b}$ are ``true'', then pick some heavy child $u$ arbitrarily. Since $r$ is unforceable, there is an assignment that evaluates to $b$ no matter what the value of $\Phi_u$ is. Take such an assignment $x$ that fits the real value of $\Phi_u$. Note that for every heavy child $v$ we have that $y_v^{x_v}$ is ``true'', and therefore by changing at most an $(\recurseps{\epsilon})$-fraction of the variables in $\Phi_v$ we can get it to evaluate to $x_v$. The weight of $u$ is at least $(4k/\epsilon)^{-\ell+1}$, thus the total weight of the other heavy children is at most $1-(4k/\epsilon)^{-\ell+1}$ and the total weight of the light children is at most $\frac{\epsilon}{4}(4k/\epsilon)^{-\ell}$. So by changing all subformulas to evaluate to the value implied by $x$ we change at most an $(\recurseps{\epsilon})(1-(4k/\epsilon)^{-\ell+1})+ \frac{\epsilon}{4}(4k/\epsilon)^{-\ell}\leq\epsilon$ fraction of inputs and get an assignment where $\Phi$ evaluates to $b$. Note that this $x$ is not necessarily the one found in Line~\ref{line:MGFindAss}.

Thus we have found that finding an assignment $x$ in Line~\ref{line:MGFindAss}, assuming the calls are correct, implies that $\Phi$ is $\epsilon$-close to evaluate to $b$. The probability that all relevant calls to an assignment return ``true'' incorrectly is at most the probability that any of the $2k$ recursive calls errs, which by the union bound is at most $\delta$, and the algorithm will return ``false'' correctly with probability at least $1-\delta$.
\end{proof}

\section{Estimator for monotone formula of bounded arity}\label{sec:kestim}

Algorithm~\ref{alg:GenEst} operates in a recursive manner, and estimates the distance to satisfying the formula rooted in $r$ according to estimates for the subformula rooted in every child of $r$. The algorithm explicitly receives a confidence parameter $\delta$ as well as the approximation parameter $\epsilon$, and should with probability at least $1-\delta$ return a number $\eta$ such that the input is both $(\eta+\epsilon)$-close and $(\eta-\epsilon)$-far from satisfying the given formula. The explicit confidence parameter makes the inductive arguments easier and clearer.

\begin{algorithm}
  \caption{}
  \label{alg:GenEst}
  \textbf{Input:} read-once $k$-basic formula $\Phi = (V,E,r,X,\kappa)$, parameters $\epsilon,\delta>0$, oracle to $\sigma$ .\\
  \textbf{Output:} $\eta\in [0,1]$.

  \begin{algorithmic}[1]
	\STATE {\bf if} $\kappa_r\in X$ {\bf then return}  $1-\sigma(\kappa_r)$\label{line:GenEstQueryOne}
	\STATE {\bf if} $\epsilon > 1$  {\bf then return} $0$\label{line:GenEstEpsilon}
	\STATE {\bf if} $\kappa_r\equiv\vee$ and there exists $u\in\CH{r}$ with $|\Phi_u|<\epsilon|\Phi|$ {\bf then return} $0$\label{line:GenEstSmallOr}
  	\IF{$\kappa_r\equiv \wedge$}\label{line:GenEstAnd}
		\FOR{$i=1$ to $l=\lceil 1000\epsilon^{-2k-2}(4k)^{2k}\cdot\log(1/\delta)\rceil$}\label{line:GenEstSetSelect}
			\STATE $u \longleftarrow$ a vertex in $\CH{r}$ selected independently at random, where the probability that $w\in \CH{r}$ is selected is $|\Phi_w|/|\Phi|$\label{line:GenEstVertexSelect}
			\STATE $\alpha_i \longleftarrow \mbox{Algorithm~\ref{alg:GenEst}}(\Phi_u, \epsilon(1-(4k/\epsilon)^{-k}/8),\delta\epsilon(4k/\epsilon)^{-k}/16,\sigma)$\label{line:GenEstRecursive1}
		\ENDFOR	
		\RETURN $\sum_{i=1}^l\alpha_i/l$
  	\ELSE
		\STATE {\bf for} every light child $u$ of $r$ set $\alpha_u\longleftarrow 0$
		\STATE {\bf for} every heavy child $u$ of $r$ set $\alpha_u\longleftarrow \mbox{Algorithm~\ref{alg:GenEst}}(\Phi_u, \epsilon(1+(4k/\epsilon)^{-k}),\delta/\max\{k,1/\epsilon\},\sigma)$ \label{line:GenEstRecursive2}
		\STATE {\bf for} every term $C$ in the $\mDNF$ of $\kappa_r$ set $\alpha_C \longleftarrow \sum_{u\in C}  \alpha_u\cdot\frac{|\Phi_u|}{|\Phi|}$\label{line:GenTermEval}
		\RETURN $\min\{\alpha_C:C\in\mDNF(\kappa_r)\}$\label{line:GenDNFEval}
  \ENDIF
 \end{algorithmic}
\end{algorithm}

The following states that Algorithm \ref{alg:GenEst} indeed gives an estimation of the distance. While estimation algorithms cannot have $1$-sided error, there is an additional feature of this algorithm that makes it also useful as a $1$-sided test (by running it and accepting if it returns $\eta=0$).

\begin{theorem}~\label{thm:AlgGenEst}
With probability at least $1-\delta$, the output of $\mbox{Algorithm~\ref{alg:GenEst}}(\Phi,\epsilon,\delta,\sigma)$ is an $\eta$ such that the assignment $\sigma$ is both $(\eta+\epsilon)$-close to satisfying $\Phi$ and $(\eta-\epsilon)$-far from satisfying it. Additionaly, if the assignment $\sigma$ satisfies $\Phi$ then $\eta=0$ with probability $1$. Its query complexity (treating $k$ and $\delta$ as constant) is always $O(\exp(poly(\epsilon^{-1})))$.
\end{theorem}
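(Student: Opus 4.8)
The plan is to follow the recursive structure of Algorithm~\ref{alg:GenEst}, proving a two-sided estimation guarantee by induction on the depth of $\Phi$, much as Theorem~\ref{thm:AlgMosGen} was proved but now tracking an additive estimate on both sides rather than a one-sided decision. The conceptual core is an exact farness decomposition for monotone read-once formulas. Writing $w_u=|\Phi_u|/|\Phi|$ and $f_u=\farness{\sigma_u}{\SAT(\Phi_u)}{}$, I would first establish that if $\kappa_r\equiv\wedge$ then $\farness{\sigma}{\SAT(\Phi)}{}=\sum_{u\in\CH{r}}w_uf_u$, and that for a general monotone gate $\kappa_r$, $\farness{\sigma}{\SAT(\Phi)}{}=\min_{C\in\mDNF(\kappa_r)}\sum_{u\in C}w_uf_u$. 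Both follow from the read-once property (distinct children have disjoint variable sets, so repairs in different subformulas are independent) together with the monotone-DNF characterization of satisfaction: satisfying $\Phi$ means satisfying every child of an $\wedge$, or every child in some mDNF term of a general gate (and this suffices by monotonicity), and the cheapest simultaneous repair of disjoint subformulas is the sum of the cheapest individual repairs. Algorithm~\ref{alg:GenEst} is then exactly an estimator for these quantities: for $\wedge$, sample children with probability proportional to weight and average (an unbiased estimate of the weighted mean); for a monotone gate, recursively estimate the heavy children, treat each light child's contribution as $0$, form $\alpha_C$ per mDNF term, and output the minimum.

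The inductive claim to prove is: with probability at least $1-\delta$ the returned $\eta$ satisfies $\eta-\epsilon<\farness{\sigma}{\SAT(\Phi)}{}\le\eta+\epsilon$. The base cases ($\kappa_r\in X$, $\epsilon>1$, or a $\vee$ with a child of weight $<\epsilon$) are immediate from the decomposition. For the $\wedge$ case, combine a concentration bound on the average of the $l$ samples with the inductive error $\epsilon(1-(4k/\epsilon)^{-k}/8)$ of each recursive estimate. The delicate point is that $l$ is large while the per-call confidence $\delta\epsilon(4k/\epsilon)^{-k}/16$ is tiny, so one cannot afford a union bound over all $l$ calls succeeding; instead one works directly with the returned values, which always lie in $[0,1]$, so that each sample's conditional expectation is within $\epsilon(1-(4k/\epsilon)^{-k}/8)+\delta\epsilon(4k/\epsilon)^{-k}/16$ of the true child farness (the second term absorbing the at-most-$\delta'$ chance of a bad call), and Hoeffding on the $l$ bounded i.i.d.\ samples then closes the remaining slack of about $\epsilon(4k/\epsilon)^{-k}/16$ --- which is precisely what $l=\lceil 1000\epsilon^{-2k-2}(4k)^{2k}\log(1/\delta)\rceil$ provides. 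For the monotone-gate case, bound for each mDNF term $C$ the gap between $\alpha_C$ and $\sum_{u\in C}w_uf_u$ by (i) the total weight of the light children of $r$, which is $O(\epsilon^2/k)$ since each has weight below $(4k/\epsilon)^{-2}$, and (ii) the per-child inductive errors weighted by the heavy children of $C$; a union bound over the at most $k$ heavy children (confidence $\delta/\max\{k,1/\epsilon\}$ each) makes all these estimates simultaneously accurate, so $\min_C\alpha_C$ lands within $\epsilon$ of $\min_C\sum_{u\in C}w_uf_u$. Here one also uses Lemma~\ref{lem:heavy-bound} and Observation~\ref{ob:two-heavies} to handle the degenerate case of a child carrying almost all the weight, and the fact that, because an unforceable gate is not an $\wedge$, every mDNF term omits at least one child.

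The one-sided feature is proved by a separate parameter-free induction: if $\sigma\in\SAT(\Phi)$ then every relevant child farness is $0$, the leaf rule returns $1-\sigma(\kappa_r)=0$, the $\wedge$ rule averages zeros, and a monotone gate has an mDNF term $C$ whose children all evaluate to $1$, so each heavy $u\in C$ recursively returns $0$ and each light $u\in C$ is assigned $0$, making $\alpha_C=0$ and hence the output $0$; the early-return branches only ever return $0$. For the query complexity I would reuse the bookkeeping of Lemma~\ref{lem:mosgen-qc}: the query count is at most the number of leaf instantiations, which is at most $(\text{branching})^{\text{depth}}$. The recursion depth is $O((4k/\epsilon)^k\log\epsilon^{-1})$, since $k$-basicity forbids an $\wedge$ child of an $\wedge$ and a $\vee$ child of a $\vee$, so over any two consecutive recursion levels the parameter is multiplied by at least $1+(4k/\epsilon)^{-k}/2$ (the shrinkage $\epsilon\mapsto\epsilon(1-(4k/\epsilon)^{-k}/8)$ at an $\wedge$ is overwhelmed by the growth at the following monotone or $\vee$ level), after which the parameter exceeds $1$ and recursion halts; the branching factor is $l$ at an $\wedge$ and at most $k$ elsewhere (for $\vee$, either the early return fires or all children have weight $\ge\epsilon$ and are therefore heavy, so at most $k$ of them), and since $\delta$ only shrinks geometrically down the recursion, $\log\delta^{-1}$ and hence $l$ stay $\poly(\epsilon^{-1})+\log\delta^{-1}$. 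Multiplying gives $\exp(\poly(\epsilon^{-1}))$ for constant $k,\delta$, with the exponent a polynomial of degree $O(k)$ in $\epsilon^{-1}$.

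I expect the main obstacle to be the error accounting in the inductive estimation argument: pushing the $\wedge$-case concentration through with the prescribed large sample size and small confidence parameter while avoiding a naive union bound over the recursive calls, and simultaneously checking that the light-child truncation and the perturbed recursion parameters in the monotone-gate case keep everything inside the $\epsilon$ budget.
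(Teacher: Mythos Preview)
Your overall strategy matches the paper's: induction on depth, using the exact farness decompositions $\sum_u w_u f_u$ for $\wedge$ and $\min_C\sum_{u\in C}w_u f_u$ for a general monotone gate, followed by case analysis on $\kappa_r$. Your handling of the $\wedge$ case is a clean variant of the paper's Lemma~\ref{lem:GenEstAnd}: you absorb the per-call failure probability directly into the expectation of each bounded sample and apply a single Hoeffding bound, whereas the paper first applies a Chernoff-type bound to the true child farnesses $X_i=\farness{\sigma}{\SAT(\Phi_{u_i})}{}$ and then separately uses Markov's inequality to control the fraction of calls whose returned $\alpha_i$ deviates from $X_i$ by more than the recursive accuracy. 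Both routes close the budget with the stated $l$.

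There is, however, a genuine gap in your monotone-gate case. You justify the per-term error bound by ``every mDNF term omits at least one child,'' but the recursive accuracy at each heavy child is $\epsilon(1+(4k/\epsilon)^{-k})>\epsilon$, so if the heavy children appearing in $C$ have total weight close to $1$ the weighted inductive error already exceeds $\epsilon$; the child that $C$ omits may well be light, which yields no useful slack. What is actually needed (and what Lemma~\ref{lem:GenEstNotAnd} uses) is that the \emph{minimizer} among the $\alpha_C$ can always be taken to be a term that omits at least one \emph{heavy} child. This holds because any $C$ containing all heavy children has $\alpha_C=\sum_{u\ \mathrm{heavy}}\alpha_u w_u\ge\alpha_{C'}$ for every $C'$ (the light $\alpha_u$ are zeroed out), so such terms never strictly dominate the minimum, while unforceability guarantees some term omits a heavy child (if every term contained a fixed heavy $v$, setting $v=0$ would kill all terms, making $v$ $(0,0)$-forceful). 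Only then does the total heavy weight in the minimizing term sit below $1-(4k/\epsilon)^{1-\ell}$, which is precisely what compensates for the $(1+(4k/\epsilon)^{-k})$ blowup in the recursive parameter. A small separate correction: a $\vee$ node surviving Line~\ref{line:GenEstSmallOr} can have up to $1/\epsilon$ children (all heavy), not ``at most $k$''; the confidence split $\delta/\max\{k,1/\epsilon\}$ in Line~\ref{line:GenEstRecursive2} reflects this. It does not affect your query-complexity conclusion, since $1/\epsilon$ is still $\poly(\epsilon^{-1})$.
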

\begin{proof}
The bound on the number of queries is a direct result of Lemma~\ref{lem:GenEstQC} below. Given that, the correctness proof is done by induction on the height of the formula. The base case (for any $\epsilon$ and $\delta$) is the observation that an instantiation of the algorithm that makes no recursive calls (i.e.\ triggers the condition in Line \ref{line:GenEstQueryOne} or \ref{line:GenEstEpsilon}) always gives a value that satisfies the assertion.

The induction step uses Lemma~\ref{lem:GenEstNotAnd} and Lemma~\ref{lem:GenEstAnd} below. Given that the algorithm performs correctly (for any $\epsilon$ and $\delta$) for every formula $\Phi'$ of height smaller than $\Phi$, the assertions of the lemma corresponding to $\kappa_r$ (out of the two) are satisfied, and so the correctness for $\Phi$ itself follows.
\end{proof}

The dependency on $\delta$ can be made into a simple logarithm by a standard amplification technique: Algorithm~\ref{alg:GenEst} is run $O(1/\delta)$ independent times, each with a confidence parameter $2/3$, and then the median of the outputs is taken.

\begin{lemma}\label{lem:GenEstLR}
When called with $\Phi$, $\epsilon$, $\delta$, and oracle access to $\sigma$, Algorithm~\ref{alg:GenEst} goes down at most $2(4k/\epsilon)^k\log(1/\epsilon)=\mathrm{poly}(\epsilon)$ recursion levels. In those recursion levels, $\delta$ decreases by a factor of at most $(\epsilon(4k/\epsilon)^{-k}/16)^{2(4k/\epsilon)^k\log(1/\epsilon)}=\exp(\mathrm{poly}(1/\epsilon))$.
\end{lemma}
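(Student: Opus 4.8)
Lemma~\ref{lem:GenEstLR}: Algorithm~\ref{alg:GenEst}, when called with $\Phi,\epsilon,\delta$, recurses at most $2(4k/\epsilon)^k\log(1/\epsilon)$ levels, and across those levels $\delta$ shrinks by a factor of at most $(\epsilon(4k/\epsilon)^{-k}/16)^{2(4k/\epsilon)^k\log(1/\epsilon)}$.

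\textbf{Proof plan.} Fix any root-to-leaf branch of the recursion and let $\epsilon=\epsilon_0,\epsilon_1,\epsilon_2,\dots$ be the values of the farness parameter at successive recursion levels along this branch. The recursion stops at a level where either $\kappa_r\in X$ (Line~\ref{line:GenEstQueryOne}) or $\epsilon_j>1$ (Line~\ref{line:GenEstEpsilon}), so it suffices to bound the number of levels before $\epsilon_j$ exceeds $1$. Reading off Algorithm~\ref{alg:GenEst}: a level whose root is labelled $\wedge$ recurses (Line~\ref{line:GenEstRecursive1}) with parameter $\epsilon_{j+1}=\epsilon_j\bigl(1-(4k/\epsilon_j)^{-k}/8\bigr)$, while any other recursing level --- this is the ``else'' branch, where the root is $\vee$ or an unforceable monotone gate, and a recursion branch must pass through a \emph{heavy} child since light children are assigned $0$ without a recursive call --- recurses (Line~\ref{line:GenEstRecursive2}) with $\epsilon_{j+1}=\epsilon_j\bigl(1+(4k/\epsilon_j)^{-k}\bigr)$. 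In the same two cases $\delta$ is multiplied, respectively, by $\epsilon_j(4k/\epsilon_j)^{-k}/16$ and by $1/\max\{k,1/\epsilon_j\}$; a one-line check ($\epsilon_j^{k+1}/(16(4k)^k)<\min\{1/k,\epsilon_j\}$) shows the latter exceeds the former, so \emph{every} recursion level multiplies $\delta$ by a factor at least $\epsilon_j(4k/\epsilon_j)^{-k}/16$.

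The heart of the argument is the claim that over any two consecutive recursion levels the parameter is multiplied by at least $1+\tfrac12(4k/\epsilon_j)^{-k}$. Since $\Phi$ is $k$-basic, no $\wedge$-gate is the child of a $\wedge$-gate, so at most one of levels $j,j+1$ is labelled $\wedge$. Write $a=(4k/\epsilon_j)^{-k}$, and note $a\le(1/(4k))^{k}\le 1/4$. If neither level is a $\wedge$-level the product of the two factors exceeds $1+a$. If level $j$ is the $\wedge$-level the product is $(1-a/8)\bigl(1+(1-a/8)^{k}a\bigr)$, and the bound $(1-a/8)^{k+1}\ge 1/8$ (valid for all $k\ge1$ when $a\le1/4$, by a short estimate on $\ln(1-a/8)$) yields a product at least $1+\tfrac12 a$. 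If level $j+1$ is the $\wedge$-level the product is $(1+a)\bigl(1-(1+a)^{k}a/8\bigr)$, and the bound $(1+a)^{k+1}\le e^{(k+1)a}\le e^{(k+1)/(4k)}\le e^{1/2}<8$ yields a product at least $1+\tfrac34 a$. The only slightly delicate point is precisely these two uniform-in-$k$ estimates; everything else is direct substitution.

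In particular $\epsilon_{j+2}\ge\epsilon_j$, so the even-indexed subsequence is non-decreasing and the only possible decrease along the branch is a single leading $\wedge$-level; hence $\epsilon_j\ge(1-1/32)\epsilon=\tfrac{31}{32}\epsilon$ for all $j$, and therefore $(4k/\epsilon_j)^{-k}\ge c_k(4k/\epsilon)^{-k}$ for a constant $c_k$ close to $1$. Iterating the claim gives $\epsilon_{2i}\ge\epsilon\bigl(1+\tfrac{c_k}{2}(4k/\epsilon)^{-k}\bigr)^{i}$, which exceeds $1$ once $i\ge(4k/\epsilon)^{k}\log(1/\epsilon)$ (using $\log(1+x)=\Theta(x)$ for $x\in(0,1]$ and absorbing $c_k$ into the constant), so the recursion goes down at most $2(4k/\epsilon)^{k}\log(1/\epsilon)$ levels. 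For the second assertion, combine this depth bound $d\le 2(4k/\epsilon)^{k}\log(1/\epsilon)$ with the per-level lower bound on the $\delta$-factor: since $t\mapsto t^{k+1}/(4k)^{k}$ is increasing and $\epsilon_j\ge\tfrac{31}{32}\epsilon$, each level multiplies $\delta$ by at least $\epsilon(4k/\epsilon)^{-k}/16$ (up to the harmless constant), so over $d$ levels $\delta$ shrinks by at most $\bigl(\epsilon(4k/\epsilon)^{-k}/16\bigr)^{d}\le\bigl(\epsilon(4k/\epsilon)^{-k}/16\bigr)^{2(4k/\epsilon)^{k}\log(1/\epsilon)}=\exp(\poly(1/\epsilon))$, as claimed.

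\textbf{Main obstacle.} The one non-routine ingredient is the per-pair multiplicative increase: one must ensure the decrease factor $(1-(4k/\epsilon)^{-k}/8)$ caused by a $\wedge$-gate is always dominated by the increase contributed by the adjacent non-$\wedge$ level, uniformly over all arities $k$. This relies on the structural fact that $k$-basic formulas never stack $\wedge$-gates (nor $\vee$-gates), together with the elementary but necessary estimates $(1\pm\Theta((4k/\epsilon)^{-k}))^{k+1}=\Theta(1)$. Once these are in place the rest is bookkeeping, and the constant $2$ in the statement is comfortable.
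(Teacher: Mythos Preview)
Your proposal is correct and follows essentially the same approach as the paper: use that a $k$-basic formula cannot have two consecutive $\wedge$-levels, and show that over any pair of consecutive recursion levels the farness parameter grows by a factor at least $1+\Theta((4k/\epsilon)^{-k})$. The paper's proof is a two-line sketch that simply writes the product $(1+(4k/\epsilon)^{-k})(1-(4k/\epsilon)^{-k}/8)\ge 1+\tfrac34(4k/\epsilon)^{-k}$; you are more careful in distinguishing whether the $\wedge$-level comes first or second (so that the two factors involve slightly different $\epsilon$'s), which the paper glosses over, and you also spell out the $\delta$ bookkeeping that the paper leaves implicit.
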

\begin{proof}
Recursion can only happen on Line \ref{line:GenEstRecursive1} and Line \ref{line:GenEstRecursive2}. Moreover, because of the formula being $k$-basic, recursion cannot follow through Line \ref{line:GenEstRecursive1} two recursion levels in a row. Therefore, in every two consecutive levels of the recursion $\epsilon$ is increased by a factor of at least $$(1+(4k/\epsilon)^{-k})\cdot(1-(4k/\epsilon)^{-k}/8)\geq (1+\frac34(4k/\epsilon)^{-k}).$$
\end{proof}

\begin{lemma}\label{lem:GenEstQC}
When called with $\Phi$, $\epsilon$, $\delta$, and oracle access to $\sigma$, Algorithm~\ref{alg:GenEst} uses a total of at most $\exp(\mathrm{poly}(1/\epsilon))$ queries.
\end{lemma}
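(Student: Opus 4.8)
The plan is to bound the total number of queries by bounding the size of the entire recursion tree, since by inspection each instantiation of Algorithm~\ref{alg:GenEst} makes at most one query (only on Line~\ref{line:GenEstQueryOne}). First I would fix the two structural parameters controlling the recursion tree: its \emph{depth} and its \emph{branching factor}. The depth is already handled by Lemma~\ref{lem:GenEstLR}, which gives depth at most $D := 2(4k/\epsilon)^k\log(1/\epsilon) = \poly(1/\epsilon)$ (recall $k$ is constant). For the branching factor, I would case on $\kappa_r$: in the $\wedge$ case (Line~\ref{line:GenEstAnd}), the number of recursive calls is exactly the loop bound $l = \lceil 1000\epsilon^{-2k-2}(4k)^{2k}\log(1/\delta)\rceil$; in the $\vee$ case, the recursive calls are only to the \emph{heavy} children, of which there are at most $k-1$, and anyway at most $1/\epsilon$ many children total survive the test on Line~\ref{line:GenEstSmallOr}. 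So in all cases the branching factor is at most $b(\epsilon,\delta) := \max\{k, \lceil 1000\epsilon^{-2k-2}(4k)^{2k}\log(1/\delta)\rceil\}$.

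The subtlety is that both $\epsilon$ and $\delta$ change along each root-to-leaf path, so the branching factor is not uniform — along a path $\epsilon$ only increases (which helps) but $\delta$ only decreases (which hurts, since $b$ grows like $\log(1/\delta)$). Here I would invoke the second half of Lemma~\ref{lem:GenEstLR}: over the whole recursion, $\delta$ decreases by a factor of at most $(\epsilon(4k/\epsilon)^{-k}/16)^{D} = \exp(\poly(1/\epsilon))$, so the smallest confidence parameter encountered is at least $\delta' := \delta\cdot(\epsilon(4k/\epsilon)^{-k}/16)^{D}$. Consequently every node in the recursion tree has branching factor at most $b(\epsilon_0,\delta') $ where $\epsilon_0$ is the original $\epsilon$ (using monotonicity of $b$ in each argument and that $\epsilon$ only grows). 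The total number of instantiations is then at most $b(\epsilon_0,\delta')^{D}$, and substituting the bounds for $b$, $D$, and $\delta'$ gives
\[
b(\epsilon_0,\delta')^{D} \;\le\; \bigl(\poly(1/\epsilon)\cdot\log(1/\delta')\bigr)^{\poly(1/\epsilon)} \;=\; \exp(\poly(1/\epsilon)),
\]
since $\log(1/\delta') = \log(1/\delta) + D\log\bigl(16(4k/\epsilon)^k/\epsilon\bigr) = \poly(1/\epsilon)$ when $\delta$ and $k$ are treated as constants, and a $\poly(1/\epsilon)$ quantity raised to a $\poly(1/\epsilon)$ power is $\exp(\poly(1/\epsilon))$. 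This matches the claimed bound; one can read off a more explicit exponent (of the form $\epsilon^{-O((4k/\epsilon)^{k+c})}$) by tracking constants as in Lemma~\ref{lem:mosgen-qc}, but for the stated $\exp(\poly(1/\epsilon))$ claim the coarse counting suffices.

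The main obstacle I anticipate is purely bookkeeping: making sure the branching-factor bound used at \emph{every} node is legitimately $b(\epsilon_0, \delta')$ rather than the node's local $b(\epsilon,\delta)$, i.e.\ that I am allowed to replace the varying parameters by their worst-case values simultaneously. This is fine because $b(\epsilon,\delta)$ is non-increasing in $\epsilon$ and non-decreasing in $1/\delta$, $\epsilon$ never decreases down a path, and $\delta$ never drops below $\delta'$ — but it must be stated carefully. A secondary minor point is confirming that in the $\vee$ branch no recursion occurs on light children (they are assigned $\alpha_u \leftarrow 0$ directly), so the branching factor there really is at most $k-1$ and not the total number of children; this is immediate from the pseudocode but worth noting. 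No genuinely hard estimate is required beyond what Lemma~\ref{lem:GenEstLR} already supplies.
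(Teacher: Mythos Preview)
Your proof is correct and follows essentially the same approach as the paper's: bound the recursion depth $D$ via Lemma~\ref{lem:GenEstLR}, use the second half of that lemma to lower-bound the smallest confidence parameter $\delta'$ ever encountered, uniformly bound the per-node branching by the $\wedge$-case loop length evaluated at $\delta'$, and conclude that the number of instantiations (hence queries) is at most that branching raised to the power $D$. One small slip worth fixing: in the non-$\wedge$ branch a $\vee$ gate can have up to $1/\epsilon$ heavy children (not $k-1$; once Line~\ref{line:GenEstSmallOr} is passed all children are heavy), and unforceable gates---also handled in the \textsc{else}---contribute up to $k$; but since your $b(\epsilon,\delta)$ is anyway dominated by the $\wedge$-case term $l$, this does not affect the final bound.
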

\begin{proof}
Denoting by $\delta'$ the smallest value of $\delta$ in any recursive call, it holds that  $\delta'\geq\delta(\epsilon(4k/\epsilon)^{-k}/16)^{2(4k/\epsilon)^k\log(1/\epsilon)}$ by Lemma \ref{lem:GenEstLR}. The number of recursive calls per instantiation of the algorithm is thus at most $l'=\lceil 1000\epsilon^{-2k-2}(4k)^{2k}\cdot\log(1/\delta')\rceil=\mathrm{poly}(1/\epsilon)$. As the algorithm may make at most one query per instantiation, and this only in the case where a recursive call is not performed, the total number of queries is (bounding the recursion depth through Lemma \ref{lem:GenEstLR}) at most $(l')^{2(4k/\epsilon)^k\log(1/\epsilon)}=\exp(\mathrm{poly}(1/\epsilon))$.
\end{proof}

\begin{lemma}~\label{lem:GenEstNotAnd}
If $\kappa_r\not\equiv\wedge$ and all recursive calls satisfy the assertion of Theorem \ref{thm:AlgGenEst}, then with probability at least $1-\delta$ the current instantiation of Algorithm~\ref{alg:GenEst} provides a value $\eta$ such that $\sigma$ is both $(\eta+\epsilon)$-close to satisfying $\Phi$ and $(\eta-\epsilon)$-far from satisfying it. Furthermore, if $\sigma$ satisfies $\Phi$ then with probability $1$ the output is $\eta=0$.
\end{lemma}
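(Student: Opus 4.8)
The plan is to analyze the non-$\wedge$ case of Algorithm~\ref{alg:GenEst}, where $\kappa_r$ is either a variable, a $\vee$ gate, or a monotone unforceable gate (by $k$-basicness the only non-$\wedge$ options). The variable case and the ``$\vee$ with a tiny child'' case (Lines~\ref{line:GenEstQueryOne}, \ref{line:GenEstEpsilon}, \ref{line:GenEstSmallOr}) are immediate: for a tiny child $u$ with $|\Phi_u|<\epsilon|\Phi|$, setting $u$ alone to $1$ satisfies the $\vee$, so $\sigma$ is $\epsilon$-close and returning $0$ is legitimate both for the estimator assertion and for the ``$\sigma$ satisfies $\Phi \Rightarrow \eta=0$'' refinement. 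The main work is the general monotone gate. Here I would first use the $\mDNF$ representation of $\kappa_r$ from Definition~\ref{def:mDNF}: $\sigma$ satisfies $\Phi$ iff some term $C$ has all its children evaluating to $1$, and more quantitatively, the farness of $\sigma$ from $\SAT(\Phi_r)$ equals $\min_C \sum_{u\in C} (\text{farness of } \sigma \text{ from } \SAT(\Phi_u=1))\cdot |\Phi_u|/|\Phi|$ --- this is exactly the quantity the algorithm computes in Lines~\ref{line:GenTermEval}--\ref{line:GenDNFEval}, except that (a) heavy children's farnesses are replaced by recursive estimates $\alpha_u$, and (b) light children's farnesses are replaced by $0$.

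For the estimator assertion I would argue the two inequalities separately. For $(\eta-\epsilon)$-far: the output $\eta = \min_C \alpha_C$ is a lower bound (up to additive slack) on the true distance, because for the minimizing term $C$, each heavy $\alpha_u$ is by the recursive assertion at most (true farness of $\Phi_u$) $+ \epsilon(1+(4k/\epsilon)^{-k})$, i.e.\ an over-estimate only up to the inflated parameter, while dropping light children to $0$ only decreases $\alpha_C$; so $\eta \le \farness{\sigma}{\SAT(\Phi)}{} + \epsilon$, hence $\sigma$ is $(\eta-\epsilon)$-far. For $(\eta+\epsilon)$-close: take the term $C^*$ achieving the true minimum distance; then $\alpha_{C^*} \le \sum_{u\in C^*,\,u\text{ heavy}} \big(\text{farness}(\Phi_u) + \epsilon(1+(4k/\epsilon)^{-k})\big)\frac{|\Phi_u|}{|\Phi|} + \sum_{u\in C^*,\,u\text{ light}} 0$. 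The heavy sum contributes at most $\farness{\sigma}{\SAT(\Phi)}{} + \epsilon(1+(4k/\epsilon)^{-k})$; the light children of $C^*$ that were zeroed out contribute a missing term bounded by the total weight of light children of $r$, which by the definition of $\ell(u,\epsilon)$ is at most $\frac{\epsilon}{4}(4k/\epsilon)^{-\ell}\cdot k \le$ a negligible multiple of $\epsilon$. Combining, $\eta \le \alpha_{C^*} \le \farness{\sigma}{\SAT(\Phi)}{} + \epsilon$ after checking that $\epsilon(4k/\epsilon)^{-k} + (\text{light slack}) \le 0$ relative to the $\pm\epsilon$ budget --- here I need the recursion-parameter inflation and the light-child weight bound to together stay within the allotted $\epsilon$, using the gap between $\epsilon$ and $\epsilon(1+(4k/\epsilon)^{-k})$ absorbed across levels as in Lemma~\ref{lem:GenEstLR}. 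The confidence: there are at most $\max\{k,1/\epsilon\}$ heavy children (at most $k$), each recursive call fails with probability at most $\delta/\max\{k,1/\epsilon\}$, so by a union bound all estimates are simultaneously good with probability at least $1-\delta$.

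For the ``furthermore'' clause: if $\sigma$ satisfies $\Phi$, then for a monotone gate some term $C$ has every $u\in C$ with $\Phi_u$ satisfied; by the recursive ``furthermore'' clause each such $\alpha_u = 0$ with probability $1$ (heavy ones by recursion, light ones by fiat), so $\alpha_C = 0$ and hence $\eta = \min_C \alpha_C = 0$ with probability $1$ --- and similarly in the $\vee$-with-tiny-child and variable cases. The main obstacle I anticipate is the bookkeeping in the $(\eta+\epsilon)$-close direction: one must verify that the \emph{combined} error from (i) inflating $\epsilon$ to $\epsilon(1+(4k/\epsilon)^{-k})$ in the recursive calls and (ii) discarding the light children's true (possibly nonzero) farness contributions does not overshoot the $\epsilon$ budget. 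The resolution is quantitative: the light children's total weight is $<\epsilon/4 \cdot (4k/\epsilon)^{-\ell+1}$ by the heavy/light threshold, which is $o(\epsilon(4k/\epsilon)^{-k})$, and the inflation $(4k/\epsilon)^{-k}\epsilon$ is exactly the ``room'' created so that over two recursion levels $\epsilon$ genuinely grows (Lemma~\ref{lem:GenEstLR}); so each individual level's overshoot is strictly less than what is recovered, and the telescoping works out. This is the step where I would be most careful with the constants.
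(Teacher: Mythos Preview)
Your two ``directions'' both establish the same inequality. In the paragraph labeled ``$(\eta+\epsilon)$-close'' you take the true-minimizing term $C^*$, bound $\alpha_{C^*}$ from above, and conclude $\eta\le\alpha_{C^*}\le\farness{\sigma}{\SAT(\Phi)}{}+\epsilon$; but that is exactly the statement that $\sigma$ is $(\eta-\epsilon)$-far, which you already (tried to) prove in the previous paragraph. Nowhere do you establish $\farness{\sigma}{\SAT(\Phi)}{}\le\eta+\epsilon$, i.e.\ that the algorithm does not \emph{over}-estimate.

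More importantly, even the inequality you do write down does not follow from your bookkeeping. You obtain $\eta\le\farness{\sigma}{\SAT(\Phi)}{}+\epsilon'\cdot W$, where $\epsilon'=\epsilon(1+(4k/\epsilon)^{-k})>\epsilon$ and $W$ is the total weight of the heavy children in the relevant term. You then try to absorb the overshoot $\epsilon'-\epsilon$ into the light-children weight and into ``telescoping across levels'' via Lemma~\ref{lem:GenEstLR}. Neither works: the light-children weight has nothing to do with bounding $\epsilon' W$ (when all children are heavy it is zero yet $W$ can be $1$), and Lemma~\ref{lem:GenEstLR} is a recursion-depth bound, not a per-level error budget---the present lemma must close on a single level with the given $\epsilon$.

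What actually makes both inequalities go through in the paper is a structural fact you never use: because $\kappa_r$ is unforceable, no child lies in \emph{every} term of $\mDNF(\kappa_r)$ (a child in every term would be $(0,0)$-forceful). Hence there exist terms that omit at least one heavy child, and since $\alpha_u\ge 0$ those terms have strictly smaller $\alpha_C$ than any term containing all heavy children. So the $\alpha$-minimizer $C_j$ has $\sum_{u\in D_j}|\Phi_u|/|\Phi|\le 1-(4k/\epsilon)^{1-\ell}$, and it is \emph{this} deficit of weight---coming from a missing heavy child, not from light children---that swallows the inflation $\epsilon\to\epsilon(1+(4k/\epsilon)^{-k})$ with room to spare for the light-child contribution $k(4k/\epsilon)^{-\ell}$. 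The paper uses this for the $(\eta+\epsilon)$-close direction by comparing $\farness{\sigma}{\SAT(\Phi)}{}\le\text{(true cost of }C_j)$ against $\eta=\alpha_{C_j}$, and for the $(\eta-\epsilon)$-far direction it splits into cases according to whether the true minimizer $C_i$ also omits a heavy child.
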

\begin{proof}
First we note that Step \ref{line:GenEstSmallOr}, if triggered, gives a correct value for $\eta$ (as the $\sigma$ can be made into a satisfying assignment by changing possibly all variables of the smallest child of $r$). We also note that if $\kappa_r\equiv\vee$ and Step \ref{line:GenEstSmallOr} was not triggered, then by definition all of $r$'s children are heavy, and there are no more than $1/\epsilon$ of them.

The true farness of $\sigma$ from $\Phi$ is the minimum over all terms $C$ in $\kappa_r$ of the adjusted cost of making all children of $C$ evaluate to $1$, which is $\sum_{u\in C}\farness{\sigma}{\SAT(\Phi_u)}{}\cdot\frac{|\Phi_u|}{|\Phi|}$. Now in this case there are clearly no more than $\max\{k,\epsilon^{-1}\}$ children, and so by the union bound, with probability at least $1-\delta$, every call done through Line \ref{line:GenEstRecursive1} gave a value $\eta_u$ so that indeed $\sigma$ is $(\eta_u+\epsilon(1+(4k/\epsilon)^{-k}))$-close and $(\eta_u-\epsilon(1+(4k/\epsilon)^{-k}))$-far from $\Phi_u$.

Now let $D_i$ denote $C_i$ minus any light children that it may contain. It may be that some $D_i$'s contain all heavy children, but as there are no forcing children (and there are heavy children) it must be the case that some $D_i$'s do not contain all heavy children, and in Line \ref{line:GenDNFEval} these will dominate. Note that $\sum_{u\in D_i}|\Phi_{u}|\leq (1-(4k/\epsilon)^{1-\ell})|\Phi|$ for any $D_i$ not containing a heavy child. This implies by bounding $(1+(4k/\epsilon)^{-k}))\cdot(1-(4k/\epsilon)^{1-\ell})$:

$$\sum_{u\in D_i}\!\! \farness{\sigma}{\SAT(\Phi_u)}{}\cdot \frac{|\Phi_u|}{|\Phi|}-\epsilon < \frac{\sum_{u\in D_i}\eta_u|\Phi_{u}|}{|\Phi|}$$ 
$$ < \sum_{u\in D_i}\!\! \farness{\sigma}{\SAT(\Phi_u)}{}\cdot \frac{|\Phi_u|}{|\Phi|} +\epsilon -2k(4k/\epsilon)^{-\ell}$$

Now the true farness of $C_i$ not containing all heavy children is at least that of $D_i$, and at most that of $D_i$ plus with the added farness of making all light children evaluate to $1$, which is bounded by $k(4k/\epsilon)^{-\ell}$. This means that for such a $C_i$ we have:

$$\sum_{u\in C_i}\!\! \farness{\sigma}{\SAT(\Phi_u)}{}\cdot \frac{|\Phi_u|}{|\Phi|}-\epsilon < \frac{\sum_{u\in D_i}\eta_u|\Phi_{u}|}{|\Phi|}$$
$$ < \sum_{u\in C_i}\!\! \farness{\sigma}{\SAT(\Phi_u)}{}\cdot \frac{|\Phi_u|}{|\Phi|} +\epsilon -k(4k/\epsilon)^{-\ell}$$

The value returned as $\eta$ is the minimum over terms $C_i$ in $\kappa_r$ of $\eta_u \cdot \frac{\sum_{u\in D_i}|\Phi_{u}|}{|\Phi|}$. We also know that this minimum is reached by some $C_j$ which does not contain all heavy children, but it may be that in fact $\farness{\sigma}{\SAT(\Phi)}{}=\sum_{u\in C_i}\!\! \farness{\sigma}{\SAT(\Phi_u)}{}\cdot \frac{|\Phi_u|}{|\Phi|}$ for some $i\neq j$ (the true farness is the minimum of the total farness of each clause, but it may be reached by a different clause).

By our assumptions
$$\farness{\sigma}{\SAT(\Phi)}{}-\epsilon= \sum_{u\in C_i}\!\! \farness{\sigma}{\SAT(\Phi_u)}{}\cdot \frac{|\Phi_u|}{|\Phi|}-\epsilon$$
$$\leq \sum_{u\in C_j}\!\! \farness{\sigma}{\SAT(\Phi_u)}{}\cdot \frac{|\Phi_u|}{|\Phi|}-\epsilon<\eta$$
so we have one side of the required bound. For the other side, we split into cases. If $C_i$ also does not contain all heavy children then we use the way we calculated $\eta$ as the minimum over the corresponding sums:
$$\eta=\frac{\sum_{u\in D_j}\eta_u|\Phi_{u}|}{|\Phi|}\leq\frac{\sum_{u\in D_i}\eta_u|\Phi_{u}|}{|\Phi|}<\farness{\sigma}{\SAT(\Phi)}{}+\epsilon$$
In the final case, we note that by the assumptions on the light children we will always have (recalling that $C_i$ will in particular have all heavy children of $C_j$):
$$\eta=\frac{\sum_{u\in D_j}\eta_u|\Phi_{u}|}{|\Phi|}<\sum_{u\in C_j}\!\! \farness{\sigma}{\SAT(\Phi_u)}{}\cdot \frac{|\Phi_u|}{|\Phi|} +\epsilon -k(4k/\epsilon)^{-\ell}$$ 
$$\leq \sum_{u\in C_i}\!\! \farness{\sigma}{\SAT(\Phi_u)}{}\cdot \frac{|\Phi_u|}{|\Phi|} +\epsilon$$

where the rightmost term equals $\farness{\sigma}{\SAT(\Phi)}{}+\epsilon$ as required.

For the last part of the claim, note that if $\sigma$ satisfies $\Phi$, then in particular, one of the terms $C$ of $\kappa_r$ must be satisfied. By the induction hypothesis, for all $u\in C$ we would have $\alpha_u=0$ and therefore $\alpha_C=0$, and since $\alpha$ is taken as a minimum over all terms we would have $\alpha=0$.
\end{proof}

\begin{lemma}~\label{lem:GenEstAnd}
If $\kappa_r\equiv\wedge$ and all recursive calls satisfy the assertion of Theorem \ref{thm:AlgGenEst}, then with probability at least $1-\delta$ the current instantiation of Algorithm~\ref{alg:GenEst} provides a value $\eta$ such that $\sigma$ is both $(\eta+\epsilon)$-close to satisfying $\Phi$ and $(\eta-\epsilon)$-far from satisfying it. If $\sigma$ satisfies $\Phi$ then with probability $1$ the output is $\eta=0$.
\end{lemma}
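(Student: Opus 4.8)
The plan is to mirror the structure of the proof of Lemma~\ref{lem:mosgen-soundness} for the $\wedge$ case, but tracking a numerical estimate rather than a Boolean verdict. The key structural fact is Observation~\ref{ob:gen-and}: when $\kappa_r\equiv\wedge$ and the target is $b=1$, the true farness $\farness{\sigma}{\SAT(\Phi)}{}$ equals the $|\Phi|$-weighted average over all children of their individual farnesses, $\sum_{u\in\CH{r}}\farness{\sigma}{\SAT(\Phi_u)}{}\cdot\frac{|\Phi_u|}{|\Phi|}$, since to satisfy an AND one must satisfy every child and the blocks of variables are disjoint (read-once). So the quantity the algorithm wants to estimate is exactly an expectation of a $[0,1]$-valued random variable $\alpha_u$ where $u$ is drawn with probability $|\Phi_u|/|\Phi|$, and the algorithm estimates it by the empirical average $\sum_{i=1}^{l}\alpha_i/l$ of $l=\lceil 1000\epsilon^{-2k-2}(4k)^{2k}\log(1/\delta)\rceil$ independent samples.

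First I would set up the two sources of error and show each contributes at most roughly half of the slack $\epsilon$. The \emph{sampling error}: by a Hoeffding/Chernoff bound, with $l$ as chosen, the empirical mean of $l$ independent $[0,1]$ samples is within an additive $\epsilon(4k/\epsilon)^{-k}/8$ (say) of the true expectation $E_u[\farness{\sigma}{\SAT(\Phi_u)}{}]$ with probability at least $1-\delta/2$; here $l$ is comfortably large enough because $(4k/\epsilon)^{-k}\geq\epsilon^{-2k-2}\cdot(\text{stuff})^{-1}$ up to the $(4k)^{2k}$ factor, which is exactly why that exponent on $\epsilon$ appears in the definition of $l$. The \emph{recursion error}: each recursive call in Line~\ref{line:GenEstRecursive1} is made with parameters $\epsilon(1-(4k/\epsilon)^{-k}/8)$ and confidence $\delta\epsilon(4k/\epsilon)^{-k}/16$; by the inductive hypothesis each returns $\alpha_i$ with $|\alpha_i-\farness{\sigma}{\SAT(\Phi_{u_i})}{}|\leq\epsilon(1-(4k/\epsilon)^{-k}/8)$, and a union bound over the $l$ calls keeps the total failure probability from the recursion below $\delta/2$ (this is why the per-call confidence carries the extra $\epsilon(4k/\epsilon)^{-k}/16$ factor, which beats $1/l$). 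Combining: with probability $\geq1-\delta$ the returned value $\eta=\sum\alpha_i/l$ satisfies $|\eta-\farness{\sigma}{\SAT(\Phi)}{}|\leq\epsilon(1-(4k/\epsilon)^{-k}/8)+\epsilon(4k/\epsilon)^{-k}/8=\epsilon$, giving both the $(\eta+\epsilon)$-close and $(\eta-\epsilon)$-far assertions.

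Then I would handle the two degenerate early-exit cases and the perfect-completeness claim. If $\kappa_r\in X$ (Line~\ref{line:GenEstQueryOne}) the formula is a single variable and $1-\sigma(\kappa_r)$ is the exact farness, so $\eta=\farness{\sigma}{\SAT(\Phi)}{}$ and both inequalities hold trivially; if $\epsilon>1$ (Line~\ref{line:GenEstEpsilon}) then $\eta=0$ and since every farness is at most $1<\epsilon$ the bound $\farness{\sigma}{\SAT(\Phi)}{}\leq\eta+\epsilon$ holds and $\eta-\epsilon<0$ makes the far-side vacuous. For the last sentence: if $\sigma$ satisfies $\Phi=\bigwedge\Phi_u$, then $\sigma$ satisfies every $\Phi_u$, so by the inductive hypothesis every recursive call returns $\alpha_i=0$ with probability $1$, hence $\eta=\sum\alpha_i/l=0$ with probability $1$; note the early-exit lines cannot harm this since Line~\ref{line:GenEstQueryOne} returns $1-\sigma(\kappa_r)=0$ for a satisfied variable and a satisfiable $\wedge$-formula is not caught by Line~\ref{line:GenEstEpsilon} changing the answer (it returns $0$ anyway).

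The main obstacle is getting the constant bookkeeping exactly right: one must verify that the chosen sample size $l$ really does yield additive accuracy $\epsilon(4k/\epsilon)^{-k}/8$ at confidence $1-\delta/2$ after paying for the union bound, i.e.\ that $1000\epsilon^{-2k-2}(4k)^{2k}$ dominates the $2(4k/\epsilon)^{2k}\epsilon^{-2}\log(1/\delta/2)$-type quantity coming out of Hoeffding, and simultaneously that the recursion's per-call confidence $\delta\epsilon(4k/\epsilon)^{-k}/16$ survives a union bound over $l$ calls — these two constraints push in opposite directions on $l$, and the definition of $l$ in Line~\ref{line:GenEstSetSelect} is calibrated to satisfy both at once. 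Everything else is either a direct appeal to Observation~\ref{ob:gen-and} or the induction hypothesis.
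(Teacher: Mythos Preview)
Your overall decomposition is right, and the completeness argument is fine, but the union-bound step for the recursion error does not go through. You assert that the per-call confidence $\delta\epsilon(4k/\epsilon)^{-k}/16$ ``beats $1/l$'', but it does not: with $l=\lceil 1000\epsilon^{-2k-2}(4k)^{2k}\log(1/\delta)\rceil$ we have
\[
l\cdot \frac{\delta\epsilon(4k/\epsilon)^{-k}}{16}
\;=\;\Theta\!\left(\epsilon^{-k-1}(4k)^{k}\,\delta\log(1/\delta)\right),
\]
which for small $\epsilon$ is vastly larger than $\delta/2$. More to the point, no choice of $l$ can satisfy both of your constraints simultaneously: the Hoeffding bound improves as $l$ grows, while a union bound over $l$ calls with a \emph{fixed} per-call failure probability only gets worse. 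So the ``calibration'' you invoke in the last paragraph cannot exist.

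The paper sidesteps this by \emph{not} requiring all recursive calls to succeed. Instead it applies Markov's inequality to the \emph{fraction} of failed calls: since each call fails with probability at most $\delta\epsilon(4k/\epsilon)^{-k}/16$, with probability at least $1-\delta/2$ only an $O(\epsilon(4k/\epsilon)^{-k})$ fraction of the $\alpha_i$ are bad. Each bad $\alpha_i$ still lies in $[0,1]$, so its contribution to the average error is at most $1$ times its (tiny) share, giving an additional additive error of $O(\epsilon(4k/\epsilon)^{-k})$. The good $\alpha_i$ contribute error $\epsilon(1-(4k/\epsilon)^{-k}/8)$ each. Summing the recursion-side deviations with the $\epsilon(4k/\epsilon)^{-k}/16$ Chernoff sampling error then gives exactly $\epsilon$. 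This is the missing idea; once you replace your union bound by this Markov argument, the rest of your plan is correct.
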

\begin{proof}
First note that if we sample a vertex $w$ according to the distribution of Line \ref{line:GenEstSetSelect} and then take the true farness $\farness{\sigma}{\SAT(\Phi_w)}{}$, then the expectation (but not the value) of this equals $\farness{\sigma}{\SAT(\Phi)}{}$. This is because to make $\sigma$ evaluate to $1$ at the root, we need to make all its children evaluate to $1$, an operation whose adjusted cost is given by the weighted sum of farnesses that corresponds to the expectation above.

Thus, denoting by $X_i$ the random variable whose value is $\farness{\sigma}{\SAT(\Phi_{w_i})}{}$ where $w_i$ is the vertex picked in the $i$th iteration, we have $\mathrm{E}[X_i]=\farness{\sigma}{\SAT(\Phi)}{}$. By a Chernoff type bound, with probability at least $1-\delta/2$, the average $X$ of $X_1,\ldots,X_l$ is no more than $\epsilon^{k+1}(4k)^{-k}/16$ away from $\mathrm{E}[X_i]$ and hence satisfies:
$$\farness{\sigma}{\SAT(\Phi)}{}-\epsilon^{k+1}(4k)^{-k}/16 < X < \farness{\sigma}{\SAT(\Phi)}{}+\epsilon^{k+1}(4k)^{-k}/16$$

Then note that by the Markov inequality, the assertion of the lemma means that with probability at least $1-\delta/2$, all calls done in Line \ref{line:GenEstRecursive2} but at most $\epsilon(4k/\epsilon)^{-k}/16$ of them return a value $\eta_w$ so that $\sigma$ is $(\eta_w+\epsilon(1-(4k/\epsilon)^{-k}/8))$-close and $(\eta_w-\epsilon(1-(4k/\epsilon)^{-k}/8))$-far from $\Phi_w$.

When this happens, at least $(1-\epsilon(4k/\epsilon)^{-k}/16)$ of the answers $\alpha_i$ of the calls are up to $\epsilon(1-(4k/\epsilon)^{-k}/16))$ away from each corresponding $X_i$, and at most $\epsilon(4k/\epsilon)^{-k}/16$ of the answers $\alpha_i$ are up to $1$ away from each corresponding $X_i$. Summing up these deviations, the final answer average $\eta$ satisfies
$$X-\epsilon(1-(4k/\epsilon)^{-k}/4)-\epsilon(4k/\epsilon)^{-k}/16 < \eta < X+\epsilon(1-(4k/\epsilon)^{-k}/4)+\epsilon(4k/\epsilon)^{-k}/16$$

With probability at least $1-\delta$ both of the above events occur, and summing up the two inequalities we obtain the required bound
$$\farness{\sigma}{\SAT(\Phi)}{}-\epsilon \leq \eta < \farness{\sigma}{\SAT(\Phi)}{}+\epsilon$$
\end{proof}

\section{Quasi-polynomial Upper Bound for Basic-Formulas}\label{sec:QuasiPoly}
Let $\Phi = (V,E,r,X,\kappa,B)$ be a basic formula and $\sigma$ be an assignment to $\Phi$.

The main idea of the algorithm is to randomly choose a full root to leaf path, and recurs over all the children of ``$\vee$'' vertices on this path that go outside of it, if they are not too many. The main technical part is in proving that if $\sigma$ is indeed $\epsilon$-far from satisfying $\Phi$, then many of these paths have few such children (few enough to recurs over all of them), where additionally the distance of $\sigma$ from satisfying the corresponding sub-formulas is significantly larger.
An interesting combinatorial corollary of this is that formulas, for which there are not a lot of leaves whose corresponding paths have few such children, do not admit $\epsilon$-far assignments at all.

\subsection{Critical and Important}

To understand the intuition behind the following definitions, it is useful to first consider what happens if
we could locate a vertex that is ``$(\epsilon,\sigma)$-critical'' in the sense that is defined next.

\begin{definition}~\label{def:important-critical} {\bf [~{\em $(\epsilon,\sigma)$-important, $(\epsilon,\sigma)$-critical}~]}
A vertex $v\in V$ is {\em $(\epsilon,\sigma)$-important} if  $\sigma \notin \SAT(\Phi)$, and 
for every $u$ that is either $v$ or an ancestor of $v$, we have that
\begin{itemize}
\item $\farness{\sigma}{\SAT(\Phi_u)}{}\geq (\localdist{\epsilon})(1+\localdist{\epsilon})^{\lfloor \depth{\Phi}{u}/3\rfloor}$
\item If $\kappa_u \equiv \vee$ and $u\neq v$ then $\LC{u}$ is either $v$ or an ancestor of $v$.
\end{itemize}
An {\em $(\epsilon,\sigma)$-critical} vertex $v$ is an $(\epsilon,\sigma)$-important vertex $v$ for which $\kappa_v\in X$.
\end{definition}

Note that such a vertex is never too deep,
since $\farness{\sigma}{\SAT(\Phi_u)}{}$ is always at most $1$. The following observation follows from Definition~\ref{def:important-critical}.

\begin{observation}~\label{prop:important-notdeep}
If $v$ is $(\epsilon,\sigma)$-important, then $\depth{\Phi}{v} \leq 4\mdepth{\epsilon}$.
\end{observation}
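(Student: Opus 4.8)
The plan is to use the first bullet of Definition~\ref{def:important-critical} together with the trivial fact that $\farness{\sigma}{\SAT(\Phi_u)}{}\leq 1$ for every $u$. If $v$ is $(\epsilon,\sigma)$-important, then taking $u=v$ in that bullet gives
$$(\localdist{\epsilon})(1+\localdist{\epsilon})^{\lfloor \depth{\Phi}{v}/3\rfloor}\leq \farness{\sigma}{\SAT(\Phi_v)}{}\leq 1.$$
So the whole argument is to extract an upper bound on $\depth{\Phi}{v}$ from the inequality $(\localdist{\epsilon})(1+\localdist{\epsilon})^{\lfloor \depth{\Phi}{v}/3\rfloor}\leq 1$, i.e.\ $(1+\localdist{\epsilon})^{\lfloor \depth{\Phi}{v}/3\rfloor}\leq \invlocaldist{\epsilon}$ (using $\localdist{\epsilon}=2\epsilon/3$, so $(\localdist{\epsilon})^{-1}=\invlocaldist{\epsilon}=3\epsilon^{-1}/2$).

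The key step is to take logarithms: $\lfloor\depth{\Phi}{v}/3\rfloor\leq \log(\invlocaldist{\epsilon})/\log(1+\localdist{\epsilon})$. Using the standard estimate $\log(1+x)\geq x/2$ for $0<x\leq 1$ (valid here since $\localdist{\epsilon}=2\epsilon/3$ and we may assume $\epsilon\leq 1$, so $\localdist{\epsilon}\leq 2/3<1$), one gets $\lfloor\depth{\Phi}{v}/3\rfloor\leq 2\log(\invlocaldist{\epsilon})/(\localdist{\epsilon})= 3\epsilon^{-1}\log(3\epsilon^{-1}/2)$. Crudely bounding $3\epsilon^{-1}/2\le 2\epsilon^{-1}$ inside the logarithm and absorbing constants, this is at most $\tfrac{4}{3}\cdot \mdepth{\epsilon}=\tfrac43\epsilon^{-1}\log(2\epsilon^{-1})$; multiplying by $3$ to pass from $\lfloor\depth{\Phi}{v}/3\rfloor$ to $\depth{\Phi}{v}$ (and noting $\depth{\Phi}{v}\leq 3\lfloor\depth{\Phi}{v}/3\rfloor+2$) yields $\depth{\Phi}{v}\leq 4\mdepth{\epsilon}$ after the constant slack is accounted for.

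This is entirely a routine calculation, so there is no real obstacle; the only mild point of care is making the constants line up so that the final bound is exactly $4\mdepth{\epsilon}$ rather than something slightly larger — this is handled by the generous rounding in passing from $\log(1+x)\ge x/2$ and from $3\epsilon^{-1}/2$ to $2\epsilon^{-1}$, and by the fact that we may assume $\epsilon\leq 1$ (otherwise $\SAT(\Phi)$ is trivially $\epsilon$-close to everything and no important vertex exists, or the statement is vacuous). Note that the second bullet of Definition~\ref{def:important-critical} is not needed for this particular observation.
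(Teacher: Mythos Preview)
Your approach is exactly the paper's: the paper simply remarks that the observation follows from Definition~\ref{def:important-critical} together with $\farness{\sigma}{\SAT(\Phi_u)}{}\leq 1$, and you correctly isolate the inequality $(2\epsilon/3)(1+2\epsilon/3)^{\lfloor d/3\rfloor}\leq 1$ as the only thing to analyze.

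However, there is an arithmetic slip in your constant-chasing. From $\log(1+x)\geq x/2$ you correctly derive
\[
\lfloor d/3\rfloor \;\leq\; \frac{2\log(3\epsilon^{-1}/2)}{2\epsilon/3} \;=\; 3\epsilon^{-1}\log(3\epsilon^{-1}/2),
\]
but you then assert this is at most $\tfrac{4}{3}\epsilon^{-1}\log(2\epsilon^{-1})$. That step is false: bounding $3\epsilon^{-1}/2\leq 2\epsilon^{-1}$ inside the logarithm only turns the right-hand side into $3\epsilon^{-1}\log(2\epsilon^{-1})$, and there is no way to ``absorb'' the leading $3$ down to $4/3$. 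With the crude estimate $\log(1+x)\geq x/2$ you end up with $d\leq 9\epsilon^{-1}\log(2\epsilon^{-1})+2$, not $4\epsilon^{-1}\log(2\epsilon^{-1})$. To recover the stated constant $4$ you need a tighter lower bound on $\log(1+2\epsilon/3)$ than $x/2$ (for instance, working with base-$2$ logarithms and using $\log_2(1+x)\geq x$ for $0<x\leq 1$, which gives $\lfloor d/3\rfloor\leq (3/(2\epsilon))\log_2(2\epsilon^{-1})$ and hence $d\leq \tfrac{9}{2}\epsilon^{-1}\log_2(2\epsilon^{-1})+2$; or simply noting that for the purposes of the paper any constant multiple of $\epsilon^{-1}\log(2\epsilon^{-1})$ suffices). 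The idea is right, but the sentence claiming the bound is $\tfrac{4}{3}\cdot\epsilon^{-1}\log(2\epsilon^{-1})$ does not hold as written.
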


A hypothetical oracle that provides a critical vertex can be used as follows.
If $v$ is the vertex returned by such an oracle, then for every ancestor $u$ of
$v$, such that $\kappa_u = \vee$, and every $w\in \CH{v}$ that is not an ancestor of $v$, a number of recursive calls with $\Phi_w$ and distance parameter significantly larger than $\epsilon$ are used.
The following Lemma implies that if for each of these vertices one of the recursive calls returned $0$, then we know that $\sigma\not\in \SAT(\Phi)$.

\begin{definition}[Special relatives]
The set of special relatives of $v\in V$ is the set $T$ of every  
$u$ that is not an ancestor of $v$ or $v$ itself but is a child of an ancestor $w$ of $v$, where $\kappa_w \equiv \vee$.  
\end{definition}

\begin{lemma}~\label{prop:witness}
If $\sigma\not\in \SAT(\Phi_u)$ for every $u\in T\cup\{v\}$, then $\sigma\not\in \SAT(\Phi)$.
\end{lemma}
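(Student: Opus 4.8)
The plan is to prove the contrapositive: assume $\sigma \in \SAT(\Phi)$, and exhibit some $u \in T \cup \{v\}$ with $\sigma \in \SAT(\Phi_u)$. So suppose $\sigma$ satisfies $\Phi$, i.e. $\sigma(r) = 1$. I will walk down the root-to-$v$ path $r = w_0, w_1, \ldots, w_m = v$ and argue that at every step the partial evaluation stays $1$ along the path, unless some special relative (or $v$ itself) is already satisfied. The key structural facts I will use are that $\Phi$ is \emph{basic} (so every internal gate is $\vee$ or $\wedge$, no $\vee$ is a child of a $\vee$, no $\wedge$ is a child of a $\wedge$) and the second bullet of $(\epsilon,\sigma)$-importance, which guarantees that whenever $w_i$ is a $\vee$-ancestor of $v$, its heaviest child $\LC{w_i}$ is on the path, i.e. $\LC{w_i} = w_{i+1}$.

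First I would observe that each $w_i$ on the path, being an internal node (for $i < m$), evaluates to $1$ under $\sigma$ by downward induction \emph{from} the assumption that the root evaluates to $1$ — wait, that is backwards; rather, I would argue \emph{top-down}: $\sigma(w_0) = \sigma(r) = 1$, and inductively, if $\sigma(w_i) = 1$, consider the two cases. If $\kappa_{w_i} \equiv \wedge$, then \emph{all} children of $w_i$ evaluate to $1$; in particular $\sigma(w_{i+1}) = 1$, and moreover every special relative of $v$ that happens to be a child of $w_i$ — but wait, special relatives are children of $\vee$-ancestors only, so a $\wedge$-node contributes nothing to $T$; we simply get $\sigma(w_{i+1})=1$ and continue. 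If $\kappa_{w_i} \equiv \vee$, then at least one child evaluates to $1$. If that child is $w_{i+1}$, we continue. If it is some other child $w'$, then $w'$ is a special relative of $v$ (it is a child of a $\vee$-ancestor $w_i$ and is not on the path), and $\sigma \in \SAT(\Phi_{w'})$, giving us the desired $u = w' \in T$, and we are done. The only remaining worry is that $w_i$ is a $\vee$-node but $\sigma(w_{i+1}) = 0$ while some \emph{non-path} child evaluates to $1$ — but that is exactly the case just handled, producing a satisfied special relative. Hence either the induction carries through all the way to $w_m = v$, yielding $\sigma(v) = 1$ and thus $\sigma \in \SAT(\Phi_v)$ with $v \in T \cup \{v\}$, or it breaks at some $\vee$-node producing a satisfied special relative.

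Actually the second bullet of importance ($\LC{w_i}$ on the path when $\kappa_{w_i}\equiv\vee$) is not even strictly needed for this direction — it is used elsewhere to control distances — so I would note that the lemma holds using only that $\Phi$ is read-once and that $T$ collects \emph{all} off-path children of \emph{all} $\vee$-ancestors. The main (and only mildly delicate) point is the bookkeeping: making sure that when $\sigma(w_i)=1$ at a $\vee$-node but the path-child is $0$, the witnessing child is genuinely a special relative — which is immediate from the definition since it is an off-path child of the $\vee$-ancestor $w_i$ of $v$. I expect no real obstacle here; the statement is essentially a structural unwinding of the definition of evaluation together with the definition of special relatives, and the proof is a short top-down induction along the path from the root to $v$.
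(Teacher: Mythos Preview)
Your proposal is correct and is essentially the same argument as the paper's, just run in the opposite direction: the paper proves the statement directly by induction on $\depth{\Phi}{v}$, showing that the parent $w$ of $v$ is unsatisfied (using that $\CH{w}\setminus\{v\}\subseteq T$ when $\kappa_w\equiv\vee$) and then applying the induction hypothesis to $w$, whereas you prove the contrapositive by walking top-down from $r$ to $v$. You are also right that the heaviest-child clause of importance is irrelevant here; the paper's proof does not use it either.
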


\begin{proof}
If $\depth{\Phi}{v} = 0$ then $\sigma\not\in \SAT(\Phi_v)$ implies
$\sigma\not\in \SAT(\Phi)$.
Assume by induction that the lemma holds for any
formula $\Phi'= (V',E',r',X',\kappa')$, assignment $\sigma'$ to $\Phi'$
and vertex $u\in V'$ such that $0\leq \depth{\Phi'}{u} < \depth{\Phi}{v}$.
Let $w$ be the parent of $v$. 
Observe that the special relatives of $w$ are a subset of the special relatives of
$v$ and hence by the induction assumption we only need to prove that
 $\sigma\not\in \SAT(\Phi_w)$ in order to infer that $\sigma\not\in \SAT(\Phi)$.

If $\kappa_w \equiv \wedge$, then $\sigma\not\in \SAT(\Phi_v)$ implies that
$\sigma\not\in \SAT(\Phi_w)$.
If $\kappa_w \equiv \vee$, then $\sigma\not\in \SAT(\Phi_v)$
and $\sigma\not\in \SAT(\Phi_u)$ for every $u\in T$ implies that 
$\sigma\not\in \SAT(\Phi_w)$, since we have that $\CH{w}\setminus\{v\}\subseteq T$.
\end{proof}

The following lemma states that if $\sigma$ is \ef from $\SAT(\Phi)$, then
 $(\epsilon,\sigma)$-critical vertices are abundant, and so we can locate one of them
by merely sampling a sufficiently large (linear in $1/\epsilon$) number of vertices. 

The main part of the proof that this holds is in showing that if $\sigma$ is only $\localdist{\epsilon}$-far from $\SAT(\Phi)$, then there exists an  $(\epsilon,\sigma)$-critical vertex for $\sigma$. We first show
that this is sufficient to show the claimed abundance of $(\epsilon,\sigma)$-critical vertices,
and then state and prove the required lemma.

\begin{lemma}~\label{lem:many-critical}
If $\sigma$ is $\epsilon$-far from $\SAT(\Phi)$,
then $|\{v | v \mbox{~is~}(\epsilon,\sigma)\mbox{-critical}\}|\geq \hitprob{\epsilon|\Phi|}$.
\end{lemma}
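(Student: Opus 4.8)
The plan is to reduce \Cref{lem:many-critical} to the auxiliary claim flagged in the text: if $\sigma$ is $(\localdist{\epsilon})$-far from $\SAT(\Phi)$, then $\Phi$ has an $(\epsilon,\sigma)$-critical vertex. Granting that claim, I would argue as follows. Suppose $\sigma$ is $\epsilon$-far from $\SAT(\Phi)$. The idea is that critical vertices are not just present but carry a definite fraction of the total variable weight, so that a uniformly random variable is critical with probability at least $\epsilon/4$. To see this, I would run a ``peeling'' argument down the tree: start at $r$, and repeatedly descend, at each $\vee$-vertex moving to its heaviest child $\LC{u}$, and at each $\wedge$-vertex moving to whichever child keeps $\sigma$ far enough. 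The point of the auxiliary claim is precisely that along such a path the farness parameter only has to shrink from $\epsilon$ to $\localdist{\epsilon}$ before we are guaranteed to hit a critical variable, and the definition of $(\epsilon,\sigma)$-important builds in exactly the geometric slack $(1+\localdist{\epsilon})^{\lfloor \depth{\Phi}{u}/3\rfloor}$ that makes this descent self-sustaining for the $O(\mdepth{\epsilon})$ levels allowed by \Cref{prop:important-notdeep}.

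For the weight bound, I would localize the auxiliary claim: apply it not to $\Phi$ itself but to every subformula $\Phi_w$ rooted at a vertex $w$ that is reachable by only passing through $\wedge$-gates and heavy-child edges of $\vee$-gates from $r$ (i.e.\ $w$ lies ``on the spine''). At each $\wedge$-vertex on the spine, \Cref{ob:gen-and} (with a suitable choice of $\alpha$) hands us a set of children carrying a $\Theta(\epsilon)$-fraction of the weight, each still far enough that the auxiliary claim applies to it; at each $\vee$-vertex only the heaviest child stays on the spine, but \Cref{ob:gen-or} guarantees every child — in particular the heavy one — inherits farness at least $\slightlybig{\epsilon}>\epsilon$, so the spine descent loses essentially nothing. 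Summing the weights of the critical variables produced in each of these subtrees, and using that the spine has depth $O(\mdepth{\epsilon})$ while the per-level weight loss at $\vee$-gates is at most a constant factor, yields a total critical-variable weight of at least $\epsilon/4\cdot|\Phi|$, which is exactly the claimed $\hitprob{\epsilon|\Phi|}$.

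I would organize the writeup so that \Cref{lem:many-critical} cites the localized auxiliary lemma (stated and proved immediately afterward) as a black box, and then the argument here is just: enumerate the spine, invoke the auxiliary lemma on the relevant subformulas, and add up weights with the bookkeeping from \Cref{ob:gen-and} and \Cref{ob:gen-or}. The one delicate point is making sure the two branching behaviors compose: a $\wedge$-gate multiplies the number of relevant subtrees but each loses only a constant factor in its farness parameter, whereas a $\vee$-gate keeps a single subtree but also barely moves the parameter, so after the allowed number of levels the accumulated parameter is still below $1$ and the auxiliary claim is still applicable at the leaves of the spine.

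The main obstacle I anticipate is the auxiliary claim itself — showing that $(\localdist{\epsilon})$-farness forces an $(\epsilon,\sigma)$-critical vertex to exist. This is an induction on formula depth where one must simultaneously maintain both bullet conditions of \Cref{def:important-critical}: the quantitative farness lower bound with its growing geometric factor, and the structural condition that at every $\vee$-ancestor the chosen path goes through the heaviest child. The interaction between these is where the constant $\localdist{\epsilon}=2\epsilon/3$ and the exponent base $1+2\epsilon/3$ are calibrated: descending one level past a $\vee$-gate to its heaviest child costs a factor governed by that child's weight (at least $1/2$ when it is the unique heavy one but possibly only $\Theta(\epsilon)$ in general), and the geometric growth $(1+\localdist{\epsilon})$ per three levels must exactly compensate. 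Getting those inequalities to close — in particular checking that the depth never exceeds the bound in \Cref{prop:important-notdeep} before a variable leaf is reached — is the technical heart, and I expect it to be stated as its own lemma with a careful case analysis on $\kappa_u\in\{\wedge,\vee\}$ at each step of the descent.
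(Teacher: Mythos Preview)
You have correctly identified the auxiliary lemma (the paper's Lemma~\ref{lem:critical-exists}: no $(\epsilon,\sigma)$-critical vertex implies $\sigma$ is $(\localdist{\epsilon})$-close), and you are right that its proof is the technical heart. But your reduction from it to Lemma~\ref{lem:many-critical} is both far more complicated than what the paper does and has real gaps.

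The paper's reduction is three lines. Assume fewer than $\epsilon|\Phi|/4$ critical vertices. Let $\sigma'$ agree with $\sigma$ except that every $(\epsilon,\sigma)$-critical variable is set to $1$. Since $\Phi$ is basic (hence monotone), flipping inputs to $1$ can only decrease every farness $\farness{\cdot}{\SAT(\Phi_u)}{}$, so no vertex becomes important for $\sigma'$ that was not already important for $\sigma$; and each formerly critical leaf now has farness $0$. Thus $\sigma'$ has no $(\epsilon,\sigma')$-critical vertex, so by the auxiliary lemma $\farness{\sigma'}{\SAT(\Phi)}{}<2\epsilon/3$. Then $\farness{\sigma}{\SAT(\Phi)}{}\leq \farness{\sigma}{\sigma'}{}+\farness{\sigma'}{\SAT(\Phi)}{}<\epsilon/4+2\epsilon/3<\epsilon$, contradicting $\epsilon$-farness.

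Your spine/peeling approach, by contrast, runs into two problems. First, the auxiliary lemma applied to a subformula $\Phi_w$ yields a vertex that is $(\epsilon,\sigma)$-critical \emph{relative to $\Phi_w$}: the depth counter and the ancestor conditions in Definition~\ref{def:important-critical} are reset at $w$. You would still have to verify the importance conditions along the path from $r$ down to $w$, and your sketch does not do this (in particular, at $\wedge$-vertices you branch off the spine, so the ``$\LC{u}$'' clause for the $\vee$-ancestors above $w$ is not automatically respected for every branch). Second, each invocation of the auxiliary lemma hands you \emph{one} leaf, contributing $1$ to the count; your talk of ``summing the weights of the critical variables produced in each of these subtrees'' conflates the size of a subtree with the number of critical leaves you actually extract from it. To reach a count of $\epsilon|\Phi|/4$ this way you would need $\epsilon|\Phi|/4$ disjoint subformulas, which the bounded-depth spine does not supply. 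The flip-and-triangle-inequality argument sidesteps all of this.
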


\begin{proof}
Set $\CTL{\epsilon}{\sigma} = \{v | v \mbox{~is~}(\epsilon,\sigma)\mbox{-critical}\}$ and assume on the
contrary that $|\CTL{\epsilon}{\sigma}|<\hitprob{\epsilon|\Phi|}$.
Set $\sigma'$ to be an assignment to $X$ so that for every $s\in V$
where $\kappa_s \in X$, we have that
$\sigma'(\kappa_s) = 1$ if $\kappa_s\in \CTL{\epsilon}{\sigma}$ and otherwise
$\sigma'(x) =\sigma(x)$.
Thus $\CTL{\epsilon}{\sigma'}=\emptyset$.
By the triangle inequality we have that 
$$\farness{\sigma}{\SAT(\Phi)}{}- \farness{\sigma'}{\SAT(\Phi)}{} \leq \farness{\sigma'}{\sigma}{}.$$
Finally, since $\CTL{\epsilon}{\sigma'}=\emptyset$,  Lemma~\ref{lem:critical-exists}, which we prove below, asserts that $\farness{\sigma'}{\SAT(\Phi)}{}<\localdist{\epsilon}$ and we reach a contradiction.
\end{proof}

\begin{lemma}~\label{lem:critical-exists}
If there is no $(\epsilon,\sigma)$-critical vertex, then
 $\sigma$ is $\localdist{\epsilon}$-close to $\SAT(\Phi)$.
\end{lemma}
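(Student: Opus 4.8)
The plan is to prove the contrapositive-flavored statement directly: assuming no $(\epsilon,\sigma)$-critical vertex exists, I will construct an assignment $\sigma'\in\SAT(\Phi)$ with $\farness{\sigma}{\sigma'}{}<\localdist{\epsilon}$. The natural approach is a top-down recursive construction of a satisfying assignment, following the structure of the tree. At a $\wedge$-vertex every child subformula must be satisfied, so we must recurse into all children; at a $\vee$-vertex it suffices to satisfy a single child, and the obvious choice is the heaviest child $\LC{v}$, since fixing the other (lighter) children costs at most half their combined weight; at a variable leaf we simply set the variable to whatever value the construction demands. The key quantitative engine is the recursive farness bound built into the definition of $(\epsilon,\sigma)$-importance: a vertex survives as ``important'' only while $\farness{\sigma}{\SAT(\Phi_u)}{}\geq(\localdist{\epsilon})(1+\localdist{\epsilon})^{\lfloor\depth{\Phi}{u}/3\rfloor}$ and, for $\vee$-vertices, only along the heavy-child spine. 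So the strategy is: starting from $r$, walk down; the hypothesis that no critical vertex exists means that this "importance spine" cannot reach a variable leaf, hence must terminate at some vertex $u$ where one of the two importance conditions fails.

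The main steps, in order: (1) Set up the recursion — define a procedure that, given a vertex $u$ and a target bit, produces an assignment to $X_u$ making $\Phi_u$ evaluate to that bit, and track the cost (number of changed variables relative to $\sigma$) as a fraction of $|\Phi_u|$. (2) At a $\vee$-vertex, recurse only into $\LC{u}$ and set all variables in the other children arbitrarily to satisfy them if needed — but actually we need them to evaluate to $1$, which for a $\vee$ means at least one child is $1$; since we pick $\LC{u}$, the other children can be left as $\sigma$ dictates as long as $\LC{u}$ evaluates to $1$, so the only cost is inside $\LC{u}$. Wait — more carefully, for a $\vee$ we only need one satisfied child, so cost is exactly the cost within $\LC{u}$; for a $\wedge$ we need all children satisfied, so cost is the weighted sum of the children's costs. (3) Bound the cost: if $u$ is on the importance spine but its child $v$ (the one we recurse into) is not $(\epsilon,\sigma)$-important, then either $\farness{\sigma}{\SAT(\Phi_v)}{}<(\localdist{\epsilon})(1+\localdist{\epsilon})^{\lfloor\depth{\Phi}{v}/3\rfloor}$, giving us a good bound on the local cost, or the $\vee$-heavy-child condition fails — but we always pick $\LC{u}$ precisely so that this never fails by our choice. (4) Assemble: the depth is at most $4\mdepth{\epsilon}$ by Observation~\ref{prop:important-notdeep}, so the accumulated multiplicative factor $(1+\localdist{\epsilon})^{\lfloor\depth{\Phi}{u}/3\rfloor}$ stays bounded, and a geometric-series / induction-on-depth argument shows the total relative cost is below $\localdist{\epsilon}$.

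The hard part will be step (3)–(4): carefully controlling how the farness thresholds compose down the tree so that the telescoping works out to exactly the claimed bound $\localdist{\epsilon}$ rather than something slightly larger. In particular, at $\wedge$-vertices the cost is a convex combination of children's costs (weighted by $|\Phi_v|/|\Phi_u|$), so the induction hypothesis must be stated as a bound on relative cost that is uniform over subformulas and compatible with the depth-dependent threshold $(\localdist{\epsilon})(1+\localdist{\epsilon})^{\lfloor\depth{\Phi}{u}/3\rfloor}$; one needs the factor $(1+\localdist{\epsilon})$ per three levels to exactly absorb the slack introduced when we hit a non-important vertex and pay its (slightly larger) farness as cost. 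I expect the proof to proceed by downward induction, maintaining the invariant that at every vertex $u$ reached in the construction, the relative cost of the constructed partial assignment on $\Phi_u$ is at most $(\localdist{\epsilon})(1+\localdist{\epsilon})^{\lfloor\depth{\Phi}{u}/3\rfloor}$ — wait, that bound grows with depth, which is backwards; rather the invariant should be that cost within $\Phi_u$ is bounded by $\farness{\sigma}{\SAT(\Phi_u)}{}$ plus a controlled error, and the thresholds are arranged so the root gets $\localdist{\epsilon}$. The delicate bookkeeping of these constants is where the real work lies; everything else is a routine structural recursion over the $k$-basic (here: And/Or only) tree.
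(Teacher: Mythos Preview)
Your approach—explicitly constructing a close satisfying assignment $\sigma'$ by top-down recursion—is genuinely different from the paper's, but the gap you yourself flag (``that bound grows with depth, which is backwards'') is real and, as outlined, unresolved. When your recursion terminates at a non-important vertex $u$ at depth $d$, all you learn is $\farness{\sigma}{\SAT(\Phi_u)}{}<(\localdist{\epsilon})(1+\localdist{\epsilon})^{\lfloor d/3\rfloor}$, which for large $d$ can be close to $1$; there is no mechanism in your outline for turning such bounds back into $<\localdist{\epsilon}$ at the root. Worse, at $\wedge$-vertices your cost is \emph{exactly} the weighted sum of the children's farnesses, i.e., exactly $\farness{\sigma}{\SAT(\Phi_v)}{}$ itself, so the construction is vacuous there; and at $\vee$-vertices your choice of $\LC{v}$ is in general suboptimal, so your constructed cost is \emph{at least} the true farness of $\Phi_v$. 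You are thus attempting to upper-bound something no smaller than the quantity you actually need to bound, with termination thresholds that only get looser as you descend.

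The paper instead proves the contrapositive with no explicit construction: assume $\sigma$ is $\localdist{\epsilon}$-far, so $r$ is $(\epsilon,\sigma)$-important; take a \emph{deepest} important vertex $v$; show $v$ must be a leaf (hence critical). The entire content is that a non-leaf important vertex always has an important child. If $\kappa_v\equiv\vee$, Observation~\ref{ob:or} gives the required amplification for $\LC{v}$. If $\kappa_v\equiv\wedge$, Observation~\ref{ob:gen-and} supplies a child $u$ with farness at least that of $v$, and the extra $(1+\localdist{\epsilon})$ factor needed to clear the next threshold comes from the \emph{parent} $w$ of $v$, which (since $\Phi$ is basic) is a $\vee$-vertex, so Observation~\ref{ob:or} already forced $\farness{\sigma}{\SAT(\Phi_v)}{}$ one step ahead of its own threshold. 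This maximal-depth argument is about ten lines and completely sidesteps your bookkeeping problem: the growing threshold is not an obstacle but the mechanism, encoding the guaranteed farness amplification down any heavy-child path so that importance can only terminate at a variable.
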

\begin{proof}
We shall show that if $\sigma$ is $\localdist{\epsilon}$-far from $\SAT(\Phi)$,
then there exists an $(\epsilon,\sigma)$-critical vertex.
Assume that  $\sigma$ is $\localdist{\epsilon}$-far from $\SAT(\Phi)$.
This implies that $r$ is an $(\epsilon,\sigma)$-important vertex.
Hence an $(\epsilon,\sigma)$-important vertex exists.
Let $v$ be an $(\epsilon,\sigma)$-important vertex
 such that $\depth{\Phi}{v}$ is maximal.
Consequently, none of the vertices in $\CH{v}$ is $(\epsilon,\sigma)$-important. 
We next prove that $v$ is $(\epsilon,\sigma)$-critical. 

Assume on the contrary that $v$ is not $(\epsilon,\sigma)$-critical.
Consequently $\kappa_v\not\in X$ and hence to get a contradiction it is sufficient to show that there exists an  $(\epsilon,\sigma)$-important vertex in $\CH{v}$.
If $\kappa_v \equiv \vee$, then by Observation~\ref{ob:or} we get that 
$$\farness{\sigma}{\SAT(\Phi_{\LC{v}})}{} \geq  (\localdist{\epsilon})(1+\localdist{\epsilon})^{\lfloor \depth{\Phi}{\LC{v}}/3\rfloor},$$
and hence $\LC{v}$ is $(\epsilon,\sigma)$-important.

Assume that $\kappa_v \equiv \wedge$.
Let $u$ be such that $\farness{\sigma}{\SAT(\Phi_{u})}{} \geq \farness{\sigma}{\SAT(\Phi_{v})}{}$.
Observation~\ref{ob:gen-and} asserts that such a vertex exists.
We assume that $\depth{\Phi}{u} > 2$, since otherwise it cannot be the case that
$\farness{\sigma}{\SAT(\Phi_{u})}{} < (\localdist{\epsilon})(1+\localdist{\epsilon})^0$.
Let $w\in V$ be the parent of $v$. Since $w$ is an ancestor of $v$ it is $(\epsilon,\sigma)$-important, and hence $\farness{\sigma}{\SAT(\Phi_{w})}{}  \geq (\localdist{\epsilon})(1+\localdist{\epsilon})^{\lfloor \depth{\Phi}{w}/3\rfloor}$. Since $\Phi$ is basic we have that $\kappa_w \equiv \vee$. Thus by Observation~\ref{ob:or} we get that
$$\farness{\sigma}{\SAT(\Phi_{v})}{} \geq  (\localdist{\epsilon})(1+\localdist{\epsilon})^{1+\lfloor \depth{\Phi}{w}/3\rfloor}.$$ Finally since $\farness{\sigma}{\SAT(\Phi_{u})}{} \geq \farness{\sigma}{\SAT(\Phi_{v})}{}$ and additionally we have $\depth{\Phi}{u} = \depth{\Phi}{w}+2$ we get that $$\farness{\sigma}{\SAT(\Phi_{u})}{}  \geq (\localdist{\epsilon})(1+\localdist{\epsilon})^{\lfloor \depth{\Phi}{u}/3\rfloor}.$$
\end{proof}

\subsection{Algorithm}
This algorithm detects far inputs with probability $\Theta(\epsilon)$, but this can be amplified to $2/3$ using iterated applications.
\begin{algorithm}[H]
  \caption{}\label{alg:MonQuasi}
  \textbf{Input:} read-once basic formula $\Phi = (V,E,r,X,\kappa)$, a parameter $\epsilon>0$, oracle to $\sigma$ .\\
  \textbf{Output:} $z\in \{0,1\}$.
  \begin{algorithmic}[1]
\STATE {\bf if } {$\epsilon > 1$} {\bf then return} $1$\label{line:QuasiEpsilon}\label{line:QuasiAVariable}
\STATE {\bf if }  {$\kappa_r\in X$}  {\bf then return}  $\sigma(\kappa_r)$\label{line:QuasiQueryOne}
\STATE\label{line:QuasiPick} Pick $s$ uniformly at random from all $v$ such that $\kappa_v\in X$
\STATE\label{line:ancestors} $A \longleftarrow$ all ancestors $v$ of $s$ such that
$\kappa_v \equiv \vee$~\label{line:QuasiA}
\STATE\label{line:initialize} $R \longleftarrow \left(\bigcup_{v\in A} \CH{v}\right)\setminus \{w\mid w \mbox{ is an ancestor of }s\}$~\label{line:QuasiR}
\STATE {\bf if }  {$|R| > 3\numrel{\epsilon}$} {\bf then return} $1$\label{line:QuasiTooMany}
\FORALL {$u \in R$} \label{line:QuasiCalls}
\STATE $y_u \longleftarrow 1$
\STATE {\bf for} {$i = 1 \mbox{ to } \orconst{\epsilon}$} {\bf do} $y_u \longleftarrow y_u \wedge \mbox{Algorithm~\ref{alg:MonQuasi}}(\Phi_{u},\sigma,\twicelocaldist{\epsilon})$ \label{line:QuasiFor}\label{line:QuasiRecursiveCalls} 
\ENDFOR
\RETURN $\sigma(\kappa_s) \vee \bigvee_{u\in R} y_u$\label{line:QuasiQueryTwo}
 \end{algorithmic}
\end{algorithm}

We can now proceed to prove the correctness of Algorithm~\ref{alg:MonQuasi}. Algorithm~\ref{alg:MonQuasi} is clearly non-adaptive. We next prove that it always returns ``$1$'' for
an assignment that satisfies the formula, and returns ``$0$'' with probability linear in $\epsilon$
for an assignment that is $\epsilon$-far from satisfying the formula. Using $O(1/\epsilon)$ independent
iterations amplifies the later probability to $2/3$.

\begin{lemma}~\label{prop:query-complexity}
For $\epsilon > 0$, Algorithm~\ref{alg:MonQuasi} halts
after using at most $\qcomplexity{\epsilon}$ queries, when called with $\Phi$, $\epsilon$ and oracle access to $\sigma$.
\end{lemma}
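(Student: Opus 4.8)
The plan is to bound the query complexity by analyzing the recursion tree of Algorithm~\ref{alg:MonQuasi}, showing that it has bounded depth and bounded branching, and that only a bounded amount of work is done per node. First I would observe that the only queries ever made are the single queries in Line~\ref{line:QuasiQueryOne} (and the use of $\sigma(\kappa_s)$ in Line~\ref{line:QuasiQueryTwo}, which costs at most one query per instantiation), so the total number of queries is at most the number of instantiations of the algorithm, which is the size of the recursion tree.

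Next I would bound the depth of recursion. Each recursive call in Line~\ref{line:QuasiRecursiveCalls} is made with farness parameter $\twicelocaldist{\epsilon} = 4\epsilon/3$, i.e.\ the parameter grows by a constant factor $4/3 > 1$ at each level. Once the parameter exceeds $1$, Line~\ref{line:QuasiEpsilon} causes an immediate return with no further recursion. Hence the recursion depth is $O(\log(1/\epsilon))$; more precisely, starting from $\epsilon$ it takes at most $\log_{4/3}(1/\epsilon) = O(\log(1/\epsilon))$ levels to exceed $1$. I would state this bound explicitly with the constants the paper uses so that it lines up with the $\qcomplexity{\epsilon} = \epsilon^{-16+16\log\epsilon}$ target (note $\epsilon^{16\log\epsilon} = \exp(16\log^2(1/\epsilon))$ is quasi-polynomial, consistent with $O(\log(1/\epsilon))$ recursion depth and a per-level branching factor that is polynomial in $1/\epsilon$).

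Then I would bound the branching factor at each node. In any single instantiation that actually recurses, we pass the test of Line~\ref{line:QuasiTooMany}, so $|R| \le 3\numrel{\epsilon} = 3\epsilon^{-2}\log(2/\epsilon)$, and for each $u \in R$ we make $\orconst{\epsilon} = \lceil 20\epsilon^{-1}\log(\epsilon^{-1})\rceil$ recursive calls in the loop of Line~\ref{line:QuasiFor}. So the number of children of a node in the recursion tree is at most $|R| \cdot \orconst{\epsilon} = O(\epsilon^{-3}\log^2(1/\epsilon))$, which is $\poly(1/\epsilon)$. Multiplying the branching factor raised to the recursion depth gives $\left(\poly(1/\epsilon)\right)^{O(\log(1/\epsilon))} = \exp\!\left(O(\log^2(1/\epsilon))\right)$, and I would carry out the arithmetic carefully to show this is at most $\qcomplexity{\epsilon}$. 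One subtlety to handle: the farness parameter in recursive calls grows, so the $|R|$ and $\orconst{\epsilon}$ bounds should be applied with the (worst, i.e.\ smallest) value $\epsilon$ at the top, which only makes the per-node bounds larger — this is the conservative direction, so it is fine.

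The main obstacle I expect is purely the bookkeeping of constants: making the product of the depth bound (roughly $\log_{4/3}(1/\epsilon)$), the branching bound (a specific polynomial in $1/\epsilon$ coming from $3\numrel{\epsilon}$ times $\orconst{\epsilon}$), and the one-query-per-node bound collapse exactly into the claimed $\qcomplexity{\epsilon} = \epsilon^{\qpower{\epsilon}}$ rather than some slightly worse quasi-polynomial. There is no real conceptual difficulty — it is a geometric-series / exponent-chasing argument — but one has to be careful that the base of the logarithm in the depth bound and the exponent of the polynomial branching combine to give the stated $-16 + 16\log\epsilon$ in the exponent. I would also note explicitly that the base cases (Line~\ref{line:QuasiEpsilon} when $\epsilon > 1$, Line~\ref{line:QuasiQueryOne} when $\kappa_r \in X$, and Line~\ref{line:QuasiTooMany} when $|R|$ is too large) all terminate with at most one query and no recursion, so the recursion tree is genuinely finite and the above counting is valid.
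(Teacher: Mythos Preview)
Your proposal is essentially the same approach as the paper's proof: both observe that at most one query is made per instantiation, both bound the branching at each node by $|R|\cdot\orconst{\epsilon}\leq 3\numrel{\epsilon}\cdot\orconst{\epsilon}$ via Line~\ref{line:QuasiTooMany}, and both use the fact that the farness parameter grows by a factor $4/3$ at every recursion level to bound the recursion depth. The only cosmetic difference is that the paper packages this as an induction on $\epsilon$ (verifying the single recurrence $1+3\numrel{\epsilon}\cdot\orconst{\epsilon}\cdot\qcomplexity{(\twicelocaldist{\epsilon})}\leq\qcomplexity{\epsilon}$), whereas you unroll the recursion tree explicitly as (branching)$^{\text{depth}}$; the inductive formulation makes the constant-chasing you flag as the ``main obstacle'' somewhat cleaner, but the content is the same.
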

\begin{proof}
The proof is formulated as an inductive argument over the value of the (real) farness parameter $\epsilon$. However, it is formulated in a way that it can be viewed as an inductive argument over the integer valued $\lceil\log(\alpha\epsilon^{-1})\rceil$, for an appropriate global constant $\alpha$.

If $\epsilon > 1$, then the condition in Line~\ref{line:QuasiEpsilon} is satisfied,
and there are no queries or recursive calls.
Hence we assume that $\epsilon \leq 1$.
Observe that in a specific instantiation at most one query is used, since 
a query is only made on Line~\ref{line:QuasiQueryOne} or on
Line~\ref{line:QuasiQueryTwo}, and always as part of a ``return'' command.
Hence the number of queries is upper bounded by the number of calls to Algorithm~\ref{alg:MonQuasi} (initial and recursive).
We shall show that the number of these calls is at most $\qcomplexity{\epsilon}$.

Assume by induction that for some $\eta\leq 1$, for every $\eta\leq \eta' \leq 1$, every formula $\Phi'$ and assignment $\sigma'$ to $\Phi'$, on call to Algorithm~\ref{alg:MonQuasi} with $\Phi'$, $\eta'$ and an oracle to $\sigma'$,  at most $\qcomplexity{\eta'}$ calls to Algorithm~\ref{alg:MonQuasi} are made (including recursive ones). 

Assume that $\epsilon > 3\eta/4$.
If $\kappa_r\in X$, then the condition on Line~\ref{line:QuasiAVariable} is satisfied and hence there are no recursive calls. Thus Algorithm~\ref{alg:MonQuasi} is called only once and $1 \leq \qcomplexity{\epsilon}$. 

Assume that $\kappa_r\not\in X$.
Note that every recursive call is done by Line~\ref{line:QuasiRecursiveCalls}.
By Line~\ref{line:QuasiCalls} and  Line~\ref{line:QuasiFor} at most
$|R|\cdot\orconst{\epsilon}$ recursive calls are done.
The condition on Line~\ref{line:QuasiTooMany} ensures that $|R|\cdot\orconst{\epsilon}\leq 3\numrel{\epsilon}\cdot\orconst{\epsilon}$.
According to Line~\ref{line:QuasiRecursiveCalls} each one of these recursive calls is done with distance parameter $4\epsilon/3> \eta$.
Thus by the induction assumption the number of calls to Algorithm~\ref{alg:MonQuasi} is at most $$1+3\numrel{\epsilon}\cdot\orconst{\epsilon}\cdot\qcomplexity{(\twicelocaldist{\epsilon})}.$$
This is less than $\qcomplexity{\epsilon}$.
\end{proof}

The following theorem will be immediate from Lemma \ref{prop:query-complexity} above when coupled
with Lemma \ref{prop:completeness} and Lemma \ref{lem:QuasiSoundness} below.

\begin{theorem}\label{thm:read-once}
Let $\epsilon > 0$.
When Algorithm~\ref{alg:MonQuasi} is called with
$\Phi$, $\epsilon$ and an oracle to $\sigma$, it
uses at most  $\qcomplexity{\epsilon}$ queries; 
if $\sigma\in \SAT(\Phi)$ then it always returns $1$, and if  
$\sigma$ is $\epsilon$-far from $\SAT(\Phi)$
then it returns $0$ with probability at least $\hhitprob{\epsilon}$. 
\end{theorem}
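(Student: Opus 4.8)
The plan is to prove Theorem~\ref{thm:read-once} by assembling the three components in the order suggested: the query complexity bound (already established in Lemma~\ref{prop:query-complexity}), a completeness lemma, and a soundness lemma. Since the query bound is done, I would first dispatch completeness: prove that if $\sigma \in \SAT(\Phi)$ then Algorithm~\ref{alg:MonQuasi} returns $1$ with probability $1$. This should go by induction on $\depth{}{\Phi}$. The base cases are Line~\ref{line:QuasiEpsilon} (returns $1$) and Line~\ref{line:QuasiQueryOne} (where $\kappa_r \in X$ and a satisfying assignment forces $\sigma(\kappa_r) = 1$). For the inductive step, note that the return value on Line~\ref{line:QuasiQueryTwo} is $\sigma(\kappa_s) \vee \bigvee_{u \in R} y_u$. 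I would argue that when $\sigma \in \SAT(\Phi)$, either $s$ itself satisfies its subformula, or — crucially — for the chosen leaf $s$ and the set $R$ of ``outside'' children of $\vee$-ancestors, the subformula $\Phi_s$ together with $\{\Phi_u : u \in R\}$ partitions the witnessing structure so that $\sigma$ satisfies each $\Phi_u$ for $u$ that is genuinely needed; the $y_u$ are conjunctions of recursive calls with a larger distance parameter $\twicelocaldist{\epsilon}$, and by the induction hypothesis these all return $1$ when $\sigma$ satisfies $\Phi_u$. The subtle point is that $R$ might contain children of $\vee$-gates that $\sigma$ does \emph{not} satisfy (because $\sigma$ satisfied a \emph{different} child, namely the ancestor of $s$), so I must be careful: completeness cannot rely on all $y_u$ being $1$. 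Instead, I should trace the satisfying evaluation down from $r$: at each $\wedge$-ancestor of $s$ all children evaluate to $1$; at each $\vee$-ancestor of $s$, \emph{some} child evaluates to $1$ — and I need the returned disjunction to catch it. This forces the argument that the ancestor-of-$s$ child at each $\vee$-node is handled inductively through $\sigma(\kappa_s)$ or through a deeper $y_u$, which is exactly what the nested structure of $R$ (collecting over \emph{all} $\vee$-ancestors) is designed to guarantee. I'd make this precise by induction showing $\sigma(\kappa_s) \vee \bigvee_{u \in R} y_u = 1$ whenever $\sigma(\Phi) = 1$, peeling off one ancestor at a time as in the proof of Lemma~\ref{prop:witness}.

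Next I would prove soundness: if $\sigma$ is $\epsilon$-far from $\SAT(\Phi)$, then Algorithm~\ref{alg:MonQuasi} returns $0$ with probability at least $\hhitprob{\epsilon} = \epsilon/8$. This is the heart of the matter. The key tool is Lemma~\ref{lem:many-critical}: there are at least $\hitprob{\epsilon|\Phi|} = \epsilon|\Phi|/4$ vertices that are $(\epsilon,\sigma)$-critical. Since $s$ on Line~\ref{line:QuasiPick} is picked uniformly among the $|\Phi|$ variable-leaves, with probability at least $\epsilon/4$ the chosen $s$ is $(\epsilon,\sigma)$-critical. I would then argue: conditioned on $s$ being critical, (i) the test on Line~\ref{line:QuasiTooMany} does not fire, i.e. $|R| \le 3\numrel{\epsilon}$, and (ii) the returned value is $0$ with good probability. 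For (i): if $s$ is $(\epsilon,\sigma)$-important then by Observation~\ref{prop:important-notdeep} it is not too deep ($\depth{\Phi}{s} \le 4\mdepth{\epsilon}$), and the $\vee$-ancestors of $s$ each contribute only their non-heavy-child mass to $R$; using the ``heaviest child'' condition in the definition of importance (for each $\vee$-ancestor $u$, $\LC{u}$ is on the path to $s$) together with Observation~\ref{ob:or} (children of $\vee$ each have weight $\ge \epsilon$, and non-$\LC{}$ children have weight $\le 1/2$), a counting argument bounds the number of such non-path children by roughly $\mdepth{\epsilon}/\epsilon = \numrel{\epsilon}$, so with the constant $3$ the condition on Line~\ref{line:QuasiTooMany} is satisfied. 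For (ii): since $s$ is critical, $\sigma(\kappa_s) = 0$ (a critical vertex is a variable that must be $0$ — this needs checking from the fact that if $\sigma(\kappa_s)=1$ we could flip it and reduce farness, contradicting $\farness{\sigma}{\SAT(\Phi_s)}{} \ge \localdist{\epsilon} > 0$). Moreover each $u \in R$ is a special relative of $s$, and by the importance of $s$'s ancestors combined with Observation~\ref{ob:or}, each such $\Phi_u$ is $(\twicelocaldist{\epsilon})$-far-ish from being satisfied — precisely, far enough that the recursive call on Line~\ref{line:QuasiRecursiveCalls} with parameter $\twicelocaldist{\epsilon}$ has, by the inductive soundness hypothesis, probability at least $\hhitprob{\twicelocaldist{\epsilon}} = \epsilon/6$ of returning $0$ each. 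Running $\orconst{\epsilon} = \lceil 5\log(\epsilon^{-1})\rceil$ independent trials for each $u$ and taking the $\wedge$ on Line~\ref{line:QuasiFor} amplifies the probability that $y_u = 0$ to at least $1 - \epsilon^{O(1)}$; by a union bound over the at most $3\numrel{\epsilon}$ elements of $R$, with probability close to $1$ \emph{some} $y_u = 0$, and combined with $\sigma(\kappa_s) = 0$ the return value on Line~\ref{line:QuasiQueryTwo} is $0$. Wait — I need to be careful: the return is a disjunction, so I actually need \emph{all} $y_u = 0$, not just one, and $\sigma(\kappa_s)=0$. So the union bound must ensure every $y_u$ collapses to $0$ simultaneously. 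This is exactly why the inner repetition count $\orconst{\epsilon}$ is logarithmic in $\epsilon^{-1}$: it drives each failure probability below $\epsilon / |R| \cdot \text{const}$, so the union bound over $R$ still leaves, say, probability $\ge 1/2$ that all $y_u = 0$. Multiplying the $\epsilon/4$ (probability $s$ is critical) by this $\ge 1/2$ gives the claimed $\epsilon/8$.

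The main obstacle I anticipate is step (i) of soundness — bounding $|R|$ when $s$ is critical — and more generally making the inductive bookkeeping of the farness parameters line up. The definition of $(\epsilon,\sigma)$-important builds in a geometric growth factor $(1+\localdist{\epsilon})^{\lfloor \depth{\Phi}{u}/3 \rfloor}$ in the farness along the path to $s$, and I must verify that this growth rate is exactly what makes a recursive call with the \emph{inflated} parameter $\twicelocaldist{\epsilon} = 4\epsilon/3$ still see an input that is far from \emph{its} subformula — i.e. the special relatives $\Phi_u$ inherit enough farness from Observation~\ref{ob:or} applied at the $\vee$-ancestor. The inductive soundness hypothesis is applied at parameter $4\epsilon/3$, so I must check $4\epsilon/3 \le 1$ is not required (the algorithm handles $\epsilon > 1$ trivially, returning $1$, which is the \emph{wrong} answer for a far input — but this is fine because a far input cannot exist when $\epsilon > 1$ would be needed... actually I need to ensure the recursive parameter stays $\le 1$ in the regime where farness is still meaningful, or equivalently that Observation~\ref{prop:important-notdeep}'s depth bound keeps the recursion from ever inflating past $1$ while the input is still far). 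I would resolve this by noting that each $\Phi_u$ for $u \in R$ is $\frac43\epsilon$-far from $\SAT(\Phi_u)$ by the importance chain, and recursively the depth bound Observation~\ref{prop:important-notdeep} combined with the geometric growth guarantees the parameter never needs to exceed $1$ before the recursion bottoms out at a critical variable. Assembling all of this — the probability $s$ is critical, the deterministic bound on $|R|$, the amplified per-child failure probability, and the union bound — yields the theorem; I would then state explicitly that $O(1/\epsilon)$ independent repetitions of Algorithm~\ref{alg:MonQuasi} amplify the detection probability from $\epsilon/8$ to $2/3$, establishing Corollary~\ref{cor:read-once}.
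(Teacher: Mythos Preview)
Your proposal is correct and follows essentially the same three-part decomposition as the paper (query bound via Lemma~\ref{prop:query-complexity}, then completeness, then soundness), invoking the same key ingredients: Lemma~\ref{lem:many-critical} for the density of critical vertices, Observation~\ref{ob:nottoomany} for the $|R|$ bound, Observation~\ref{ob:or} for the farness of special relatives, and the amplification-plus-union-bound calculation.

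Two small refinements relative to what the paper actually does. First, for completeness the paper argues the contrapositive --- if the algorithm outputs $0$ then $\sigma(\kappa_s)=0$ and every $u\in R$ has $\sigma\notin\SAT(\Phi_u)$ by induction, whence $\sigma\notin\SAT(\Phi)$ by a direct appeal to Lemma~\ref{prop:witness}. This sidesteps the case analysis you sketch (tracing which disjunct fires) and is shorter. Second, the concern you raise about the recursive parameter exceeding $1$ is handled in the paper not by a depth argument but by a separate base case: when $\epsilon>3/4$, any $(\epsilon,\sigma)$-critical $s$ has all ancestors labelled $\wedge$ (Observation~\ref{ob:or} forbids $\vee$ once farness exceeds $1/2$), so $R=\emptyset$ and the return is simply $\sigma(\kappa_s)=0$. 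This cleanly grounds the induction without tracking parameter growth. Also note that the inner repetition count is $\orconst{\epsilon}=\lceil 20\epsilon^{-1}\log\epsilon^{-1}\rceil$, not $\lceil 5\log\epsilon^{-1}\rceil$; the extra $\epsilon^{-1}$ factor is needed because each recursive call succeeds only with probability $\Theta(\epsilon)$.
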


Theorem~\ref{thm:read-once} does not imply that Algorithm~\ref{alg:MonQuasi} is an $\epsilon$-test for $\SAT(\Phi)$.
However it does imply that in order to get an $\epsilon$-test for $\SAT(\Phi)$
 it is sufficient to do the following. 
Call Algorithm~\ref{alg:MonQuasi} repeatedly $\orconst{\epsilon}$ times,
 return $0$ if any of the calls returned $0$, and otherwise return $1$.
This only increases the query complexity to the value in the following corollary.

\begin{corollary}\label{cor:read-once}
There exists an \et for $\Phi$, that uses at most $\rqcomplexity{\epsilon}$ queries.
\end{corollary}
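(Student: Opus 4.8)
The plan is to derive Corollary~\ref{cor:read-once} from Theorem~\ref{thm:read-once} by a standard probability-amplification argument, combined with the query-complexity bookkeeping already set up in Lemma~\ref{prop:query-complexity}. The key observation is that Theorem~\ref{thm:read-once} gives us an algorithm that never rejects a satisfying assignment and rejects an $\epsilon$-far assignment with probability at least $\hhitprob{\epsilon}=\epsilon/8$; running it independently several times and taking the ``or'' of the rejections boosts this one-sided detection probability to $2/3$ while preserving perfect completeness.

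First I would define the new test $\A'$: on input $\Phi$, $\epsilon$ and oracle access to $\sigma$, run Algorithm~\ref{alg:MonQuasi} independently $\orconst{\epsilon}=\lceil 5\log(\epsilon^{-1})\rceil$ times; if any run returns $0$, output $0$, otherwise output $1$. Completeness is immediate: if $\sigma\in\SAT(\Phi)$ then by Theorem~\ref{thm:read-once} every individual run returns $1$ with probability $1$, so $\A'$ returns $1$ with probability $1$ — in particular $\A'$ has one-sided error. For soundness, if $\sigma$ is $\epsilon$-far from $\SAT(\Phi)$ then each run independently returns $0$ with probability at least $\epsilon/8$, so the probability that all $\orconst{\epsilon}$ runs return $1$ is at most $(1-\epsilon/8)^{\orconst{\epsilon}}$. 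Here I would just check that this quantity is at most $1/3$: since $(1-\epsilon/8)^{8/\epsilon}\le e^{-1}$ and $\orconst{\epsilon}\ge 5\log(\epsilon^{-1})$, the exponent is comfortably large enough (one can be generous, as $\orconst{\epsilon}\ge 8/\epsilon$ fails for large $\epsilon$, but for $\epsilon>1$ the whole question is vacuous since Algorithm~\ref{alg:MonQuasi} returns $1$ and there are no $\epsilon$-far inputs, and for $\epsilon\le 1$ one has $5\log(\epsilon^{-1})$ — actually here I would simply note that repeating $\Theta(1/\epsilon)$ times suffices and that the stated $\orconst{\epsilon}$ is what the subsequent query bound is tuned to; the constant can be adjusted without affecting the asymptotics). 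The honest version of this step is: repeating $c/\epsilon$ times for a suitable constant $c$ drives the error below $1/3$, and I would state the corollary with whatever repetition count makes both the probability bound and the query bound come out clean.

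For the query complexity, each of the $\orconst{\epsilon}$ runs uses at most $\qcomplexity{\epsilon}$ queries by Lemma~\ref{prop:query-complexity}, and the runs are performed sequentially with no shared state, so the total is at most $\orconst{\epsilon}\cdot\qcomplexity{\epsilon}$, which is exactly $\rqcomplexity{\epsilon}$ by the definition of the macro $\rqcomplexity{\cdot}$ (recall $\krqcomplexity{\epsilon}$-style definitions expand the ``r'' prefix as multiplication by the $\orconst{\cdot}$ factor, and here $\rqcomplexity{\epsilon}$ is defined as $\epsilon^{\rpower{\epsilon}}$, which absorbs the logarithmic repetition factor into the exponent since $\orconst{\epsilon}=\lceil 5\log\epsilon^{-1}\rceil\le \epsilon^{-4}$ say, and $-20+16\log\epsilon = (-16+16\log\epsilon) + (-4)$). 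I would spell out this last arithmetic identification — that $\orconst{\epsilon}\cdot\qcomplexity{\epsilon}\le\rqcomplexity{\epsilon}$ — since it is the only nonroutine computation, though it is a one-line comparison of exponents.

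The main obstacle, such as it is, is purely cosmetic: making sure the chosen repetition count simultaneously (a) reduces the failure probability from $1-\epsilon/8$ per trial down to at most $1/3$, which genuinely requires $\Omega(1/\epsilon)$ repetitions, not merely $O(\log\epsilon^{-1})$, and (b) still fits inside the advertised query bound $\rqcomplexity{\epsilon}=\epsilon^{\rpower{\epsilon}}$. Since $\epsilon^{-16+16\log\epsilon}$ already grows super-polynomially in $1/\epsilon$ as $\epsilon\to 0$, multiplying by any fixed polynomial in $1/\epsilon$ (in particular by $O(1/\epsilon)$) is absorbed by the change from $\qpower{\epsilon}$ to $\rpower{\epsilon}$; so there is no real tension, and the proof is a short paragraph invoking Theorem~\ref{thm:read-once}, the independence of the runs, a Chernoff-free union-bound-style estimate on $(1-\epsilon/8)^{O(1/\epsilon)}$, and Lemma~\ref{prop:query-complexity} for the per-run query count. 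I would write it essentially as: ``Immediate from Theorem~\ref{thm:read-once}: run Algorithm~\ref{alg:MonQuasi} the stated number of times, reject if any run rejects; completeness and one-sidedness are preserved, soundness follows since $(1-\epsilon/8)^{\Theta(1/\epsilon)}<1/3$, and the query bound multiplies accordingly.''
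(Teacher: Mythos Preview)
Your approach is correct and is exactly what the paper does: repeat Algorithm~\ref{alg:MonQuasi} independently and reject if any run rejects. One clarification: you initially expand $\orconst{\epsilon}$ as $\lceil 5\log(\epsilon^{-1})\rceil$, but in the paper's macros $\orconst{\epsilon}=\lceil 20\,\epsilon^{-1}\log(\epsilon^{-1})\rceil$; you yourself notice in your ``obstacle'' paragraph that $\Omega(1/\epsilon)$ repetitions are genuinely required to drive $(1-\epsilon/8)^{m}$ below $1/3$, so this is only a bookkeeping slip, and with the correct repetition count the soundness and the query bound $\orconst{\epsilon}\cdot\qcomplexity{\epsilon}\le\epsilon^{-4}\cdot\qcomplexity{\epsilon}=\rqcomplexity{\epsilon}$ both go through as you describe.
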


\begin{lemma}~\label{prop:completeness}
Let $\epsilon > 0$ and $\sigma\in \SAT(\Phi)$. Algorithm~\ref{alg:MonQuasi}
returns $1$ when called with $\Phi$, $\epsilon$ and an oracle to $\sigma$.
\end{lemma}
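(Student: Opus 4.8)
The plan is to prove Lemma~\ref{prop:completeness} by induction on the depth of the formula $\Phi$, mirroring the recursive structure of Algorithm~\ref{alg:MonQuasi}. I would first dispense with the trivial cases: if $\epsilon > 1$ the algorithm returns $1$ on Line~\ref{line:QuasiEpsilon}, and if $\kappa_r \in X$ then since $\sigma \in \SAT(\Phi)$ we have $\sigma(\kappa_r) = 1$, so Line~\ref{line:QuasiQueryOne} returns $1$. Also, if at any point the algorithm returns $1$ on Line~\ref{line:QuasiTooMany} (because $|R|$ is too large), we are done immediately, so we may assume the algorithm proceeds to the loop on Line~\ref{line:QuasiCalls} and reaches the return on Line~\ref{line:QuasiQueryTwo}.

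The heart of the argument is the following: since $\sigma \in \SAT(\Phi)$, I claim that for the randomly chosen leaf $s$, either $\sigma(\kappa_s) = 1$, or there is some $u \in R$ with $\sigma \in \SAT(\Phi_u)$. To see this, consider the chosen leaf $s$ and walk up the path from $s$ to the root $r$. Since $\Phi$ is basic, every internal vertex is $\wedge$ or $\vee$. Suppose $\sigma(\kappa_s) = 0$, i.e.\ $\sigma \notin \SAT(\Phi_s)$. If every $\vee$-ancestor $v$ of $s$ had the property that all of $\CH{v} \setminus \{\text{ancestor-of-}s\}$ fail to satisfy $\Phi_u$ under $\sigma$, then walking up from $s$ one shows by the same induction as in Lemma~\ref{prop:witness} that $\sigma \notin \SAT(\Phi)$ — at an $\wedge$-vertex a failing child makes the parent fail, and at a $\vee$-vertex all children failing makes the parent fail — contradicting $\sigma \in \SAT(\Phi)$. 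Hence there must be some $\vee$-ancestor $v$ of $s$ and some child $u \in \CH{v}$ that is not an ancestor of $s$ with $\sigma \in \SAT(\Phi_u)$; this $u$ lies in $R$ by the definition on Line~\ref{line:QuasiR}. (In fact it is cleanest to phrase this as: the set $R \cup \{s\}$ is exactly a ``cut'' that witnesses satisfaction, so $\Phi$ being satisfied forces one of them to be satisfied — essentially the contrapositive of Lemma~\ref{prop:witness} with the roles of $0$ and $1$ swapped.)

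Given such a $u \in R$ with $\sigma \in \SAT(\Phi_u)$, I apply the induction hypothesis: each of the $\orconst{\epsilon}$ recursive calls $\mbox{Algorithm~\ref{alg:MonQuasi}}(\Phi_u, \sigma, \twicelocaldist{\epsilon})$ returns $1$ (the induction is on depth, and $\depth{}{\Phi_u} < \depth{}{\Phi}$; the larger distance parameter $4\epsilon/3$ is harmless for completeness since the algorithm with a larger $\epsilon$ is only more likely to return $1$, and indeed returns $1$ with probability $1$ on satisfying assignments by the induction hypothesis itself). Therefore $y_u = 1$, and so $\bigvee_{u' \in R} y_{u'} = 1$, whence Line~\ref{line:QuasiQueryTwo} returns $1$. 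In the remaining case $\sigma(\kappa_s) = 1$, the return value is $1$ directly. Either way the algorithm returns $1$ with probability $1$, completing the induction.

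The main obstacle I anticipate is getting the combinatorial ``cut'' claim exactly right — namely that $R \cup \{s\}$ always contains a satisfied subformula when $\sigma \in \SAT(\Phi)$. One has to be careful that $R$ is defined using \emph{all} $\vee$-ancestors of $s$ (not just the parent), and that the induction unwinding the path from $s$ to $r$ correctly handles the interleaving of $\wedge$ and $\vee$ vertices; this is the dual of Lemma~\ref{prop:witness} and should be stated and used carefully. A minor secondary point is making explicit that completeness of the recursive calls does not depend on the distance parameter at all (the $1$-sided nature from Lemma~\ref{prop:completeness} applied inductively handles this uniformly for every $\epsilon'$), so no quantitative estimate on $\twicelocaldist{\epsilon}$ versus $\epsilon$ is needed here.
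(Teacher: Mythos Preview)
Your proposal is correct and takes essentially the same approach as the paper: the paper argues the contrapositive (if the algorithm returns $0$ then $\sigma\notin\SAT(\Phi)$) and invokes Lemma~\ref{prop:witness} directly, whereas you argue the forward direction and invoke the contrapositive of Lemma~\ref{prop:witness}, which is logically the same argument. Your identification of $R\cup\{s\}$ with the set $T\cup\{v\}$ of Lemma~\ref{prop:witness} (for $v=s$) is exactly what the paper uses, and the induction on depth is identical.
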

\begin{proof}
To prove the lemma we shall show that if Algorithm~\ref{alg:MonQuasi}
returns $0$, when called with $\Phi$, $\epsilon$ and oracle access to $\sigma$, then $\sigma \not\in \SAT(\Phi)$.
If $\depth{}{\Phi} = 0$ then the condition in Line~\ref{line:QuasiAVariable} is satisfied and $\sigma(\kappa_r)$ is returned.
Hence $\sigma(\kappa_r) = 0$ and therefore $\sigma \not\in \SAT(\Phi)$.
Assume that for every $\epsilon'>0$, $\Phi'$ where $\depth{}{\Phi'} < \depth{}{\Phi}$, and assignment $\sigma'$ to $\Phi'$, if Algorithm~\ref{alg:MonQuasi}
returns $0$, when called with $\Phi'$, $\epsilon'$ and oracle access to $\sigma'$, then $\sigma' \not\in \SAT(\Phi)$.

Observe that the only other way a $0$ can be returned is through Line~\ref{line:QuasiQueryTwo}, if it is reached.
Let $R$ be the set of vertices on which there was a recursive call
in Line~\ref{line:QuasiRecursiveCalls}
 and $\kappa_s$ the variable whose value is queried on Line~\ref{line:QuasiQueryTwo}.
According to  Line~\ref{line:QuasiQueryTwo} a $0$
is returned if and only if $\sigma(\kappa_s) = 0$ and for every 
$u \in R$ there was at least one recursive call with $\Phi_u$ and distance parameter $\twicelocaldist{\epsilon}$ that returned a $0$.
By the induction assumption this implies that $\sigma\not\in \SAT(\Phi_u)$
for every $u \in R$.
Note that the set $R$ satisfies the exact same conditions that the set $T$ of special relatives satisfies in  Lemma~\ref{prop:witness}.
Hence, Lemma~\ref{prop:witness} asserts that $\sigma\not\in \SAT(\Phi)$.
\end{proof}

We now turn to proving soundness. This depends on first noting that the algorithm will indeed check the paths leading
to critical vertices.

\begin{observation}\label{ob:nottoomany}
If the vertex $s$ picked in Line \ref{line:QuasiPick} is $(\epsilon,\sigma)$-critical, then it will not trigger the condition of Line \ref{line:QuasiTooMany}.
\end{observation}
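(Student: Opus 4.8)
The statement to prove is that if the randomly picked variable-vertex $s$ is $(\epsilon,\sigma)$-critical, then $|R| \leq 3\numrel{\epsilon}$, so the condition on Line~\ref{line:QuasiTooMany} is not triggered. The set $R$ is, by construction on Line~\ref{line:QuasiR}, exactly the set of special relatives of $s$: all children $w$ of some $\vee$-ancestor $v$ of $s$, where $w$ is not itself an ancestor of $s$ (nor $s$). The plan is to bound $|R|$ by bounding, for each $\vee$-ancestor $v$ of $s$, the number of its children that lie in $R$, and then summing over the $\vee$-ancestors of $s$.

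\textbf{Key steps.} First I would use the critical/important structure. Since $s$ is $(\epsilon,\sigma)$-critical, every ancestor $u$ of $s$ is $(\epsilon,\sigma)$-important, so $\depth{\Phi}{u}\leq 4\mdepth{\epsilon}$ by Observation~\ref{prop:important-notdeep}; in particular $s$ itself has depth at most $4\mdepth{\epsilon}$, hence there are at most $4\mdepth{\epsilon}$ vertices on the root-to-$s$ path, and at most that many $\vee$-ancestors of $s$. Second, for a fixed $\vee$-ancestor $v$ of $s$, let $v'$ be the child of $v$ that is an ancestor of $s$ (or $s$ itself). Since $v$ is $(\epsilon,\sigma)$-important and $\kappa_v\equiv\vee$, the second bullet of Definition~\ref{def:important-critical} forces $\LC{v}$ to be an ancestor of $s$ or $s$ itself, i.e.\ $v' = \LC{v}$. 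Now apply Observation~\ref{ob:or} to $v$ with farness value $(\localdist{\epsilon})(1+\localdist{\epsilon})^{\lfloor \depth{\Phi}{v}/3\rfloor}$ (which is at least $\localdist{\epsilon}$): every child $u$ of $v$ has weight at least this farness value, and in particular the children of $v$ other than $\LC{v}=v'$ — which are exactly the children of $v$ that go into $R$ — each have weight at least $\localdist{\epsilon}$. Hence $v$ contributes at most $1/(\localdist{\epsilon}) = \invlocaldist{\epsilon}$ children to $R$. Third, multiply: $|R|\leq 4\mdepth{\epsilon}\cdot\invlocaldist{\epsilon}$, and check that this is at most $3\numrel{\epsilon}$; plugging in the macro expansions, $4\mdepth{\epsilon}\cdot\invlocaldist{\epsilon} = 4\cdot\epsilon^{-1}\log(2\epsilon^{-1})\cdot\tfrac32\epsilon^{-1} = 6\epsilon^{-2}\log(2\epsilon^{-1})$, while $3\numrel{\epsilon} = 3\epsilon^{-2}\log(2\epsilon^{-1})$ — so a slightly more careful accounting is needed.

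\textbf{Main obstacle.} The delicate point is the constant: the naive product of ``depth bound'' times ``children-per-level bound'' overshoots $3\numrel{\epsilon}$ by a factor of $2$, so I must sharpen one of the two estimates. The natural fix is to observe that the farness lower bound for the children of a $\vee$-ancestor $v$ grows geometrically with $\depth{\Phi}{v}$: a child $u$ of $v$ in $R$ has weight at least $(\localdist{\epsilon})(1+\localdist{\epsilon})^{\lfloor \depth{\Phi}{v}/3\rfloor}$, so the number of children $v$ contributes to $R$ is at most $\invlocaldist{\epsilon}(1+\localdist{\epsilon})^{-\lfloor\depth{\Phi}{v}/3\rfloor}$. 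Summing this geometric-type series over all $\vee$-ancestors $v$ of $s$ (grouping by depth, at most one $\vee$-ancestor every level, and the exponent increasing by one every three levels) yields a bound dominated by its first few terms, giving a total of at most $\invlocaldist{\epsilon}\cdot O(\log(2\epsilon^{-1})\cdot\epsilon^{-1})$ but now with a small enough leading constant — or alternatively one simply notes that the depth is bounded so tightly by Observation~\ref{prop:important-notdeep} that the product already lands below $3\numrel{\epsilon}$ after the geometric damping is taken into account. Either way, once the counting is arranged so that the geometric factor is not thrown away, the inequality $|R|\leq 3\numrel{\epsilon}$ follows and the condition of Line~\ref{line:QuasiTooMany} is not triggered.
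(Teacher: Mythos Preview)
Your overall plan---bound the number of $\vee$-ancestors using Observation~\ref{prop:important-notdeep}, bound the children per $\vee$-ancestor using Observation~\ref{ob:or}, and multiply---is exactly the paper's approach. You also correctly spot the factor-of-$2$ shortfall in the naive product.

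However, your proposed fix (exploiting the geometric decay of the per-level child bound and summing the resulting series) is more complicated than necessary, and you never actually verify that the constants close up. The paper resolves the constant in a much more direct way, via two small refinements you overlooked:
\begin{itemize}
\item Since $\Phi$ is \emph{basic}, no $\vee$ is the child of a $\vee$. Hence among the at most $4\mdepth{\epsilon}$ ancestors of $s$, at most half (plus one) can be $\vee$-gates, so $|A|\leq 2\mdepth{\epsilon}+1$ rather than $4\mdepth{\epsilon}$.
\item For each $\vee$-ancestor $v$, the child of $v$ lying on the root-to-$s$ path is \emph{not} in $R$, so $v$ contributes at most $|\CH{v}|-1\leq \invlocaldist{\epsilon}-1$ vertices to $R$, not $\invlocaldist{\epsilon}$.
\end{itemize}
With these two observations the product becomes $(\invlocaldist{\epsilon}-1)(2\mdepth{\epsilon}+1)$, and a one-line check shows this is at most $3\numrel{\epsilon}$ for all $\epsilon\leq 1$. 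No geometric summation is needed; the alternation of $\vee$ and $\wedge$ in basic formulas is the missing ingredient that buys you the factor of~$2$.
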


\begin{proof}
Definition \ref{def:important-critical} in particular implies that for every $u\in A$ (as per Line \ref{line:ancestors}) we have $|\CH{u}|\leq (3/2\epsilon)(1+2\epsilon/3)^{-\lfloor \depth{\Phi}{u}/3\rfloor}\leq 3/2\epsilon$, as otherwise $\sigma$ will be too close to satisfying $\Phi_u$. Also, from Observation \ref{prop:important-notdeep} we know that $\depth{\Phi}{s}\leq 4\mdepth{\epsilon}$ and so $|A|\leq 2\mdepth{\epsilon}+1$.

The two together give us the bound $|R|\leq(3/2\epsilon-1)(2\mdepth{\epsilon}+1)\leq 3\epsilon^{-2}\log (2\epsilon^{-1})$, and so the condition in Line \ref{line:QuasiPick} is not triggered.
\end{proof}

\begin{lemma}~\label{lem:QuasiSoundness}
Let $\sigma$ be $\epsilon$-far from $\SAT(\Phi)$.
If Algorithm~\ref{alg:MonQuasi} is called with $\epsilon$, $\Phi$ and an oracle to $\sigma$, then it returns $0$ with probability at least $\hhitprob{\epsilon}$.
\end{lemma}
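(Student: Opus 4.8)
The plan is to argue by structural induction on the depth of $\Phi$, just as in the completeness proof, but now tracking probabilities. The base case is immediate: if $\depth{}{\Phi}=0$ then $\kappa_r\in X$, and $\sigma$ being $\epsilon$-far from $\SAT(\Phi)$ (with $\epsilon\leq 1$) forces $\sigma(\kappa_r)=0$, so Line~\ref{line:QuasiQueryOne} returns $0$ with probability $1\geq \hhitprob{\epsilon}$. For the inductive step, the key engine is Lemma~\ref{lem:many-critical}: since $\sigma$ is $\epsilon$-far, at least $\hitprob{\epsilon|\Phi|}$ of the variable-leaves are $(\epsilon,\sigma)$-critical. Therefore the vertex $s$ picked uniformly at random in Line~\ref{line:QuasiPick} is $(\epsilon,\sigma)$-critical with probability at least $\epsilon/4$ (the number of variable-leaves is exactly $|\Phi|$). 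I would condition on this event for the rest of the argument.

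Next I would show that conditioned on $s$ being critical, the algorithm returns $0$ with probability at least, say, $1/2$, which combined with the $\epsilon/4$ gives the claimed $\hhitprob{\epsilon}=\epsilon/8$. First, by Observation~\ref{ob:nottoomany}, a critical $s$ does not trigger Line~\ref{line:QuasiTooMany}, so the algorithm proceeds to the recursive calls over $R$ and then to Line~\ref{line:QuasiQueryTwo}. Since $s$ is critical, $\kappa_s\in X$ and $s$ is $(\epsilon,\sigma)$-important; in particular $\farness{\sigma}{\SAT(\Phi_s)}{}\geq \localdist{\epsilon}>0$, and as $\Phi_s$ is just a variable this means $\sigma(\kappa_s)=0$, so the first disjunct in Line~\ref{line:QuasiQueryTwo} vanishes. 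Now observe that the set $R$ computed in Line~\ref{line:QuasiR} is precisely the set $T$ of special relatives of $s$. I would then show that for each $u\in R$, the subformula $\Phi_u$ is $\twicelocaldist{\epsilon}$-far from being satisfied: indeed, $u$ is a child, other than (the ancestor-of-$s$ child) of some $\vee$-ancestor $w$ of $s$, and since $w$ is $(\epsilon,\sigma)$-important we have $\farness{\sigma}{\SAT(\Phi_w)}{}\geq(\localdist{\epsilon})(1+\localdist{\epsilon})^{\lfloor\depth{\Phi}{w}/3\rfloor}\geq\localdist{\epsilon}$; by Observation~\ref{ob:or} every such sibling $u$ with $u\neq\LC{w}$ satisfies $\farness{\sigma}{\SAT(\Phi_u)}{}\geq 2\cdot(\localdist{\epsilon})=\twicelocaldist{\epsilon}$ — and by the importance of $w$, $\LC{w}$ is an ancestor of $s$ (or $s$ itself), hence is not in $R$, so every $u\in R$ is such a non-heaviest sibling. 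Thus each $\Phi_u$ is $\twicelocaldist{\epsilon}$-far, which is the distance parameter used in the recursive calls.

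Now I apply the induction hypothesis: each single recursive call $\mbox{Algorithm~\ref{alg:MonQuasi}}(\Phi_u,\sigma,\twicelocaldist{\epsilon})$ returns $0$ with probability at least $\hhitprob{(\twicelocaldist{\epsilon})}=\epsilon/6$, so $y_u$ — an AND of $\orconst{\epsilon}=\lceil 5\log(\epsilon^{-1})\rceil$ independent such calls — equals $1$ only if all of them return $1$, which happens with probability at most $(1-\epsilon/6)^{\orconst{\epsilon}}$. A routine computation (with the constant $5$ chosen for exactly this purpose) bounds this by something like $\epsilon^{5/6}/(2\cdot 3\numrel{\epsilon})$ or more crudely by $\epsilon^2/(6|R|_{\max})$; the point is that $\orconst{\epsilon}$ is large enough that a union bound over the at most $3\numrel{\epsilon}$ vertices of $R$ leaves probability at least $1/2$ (in fact we can get as close to $1$ as we like) that $y_u=0$ for at least one $u\in R$. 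When that happens, Line~\ref{line:QuasiQueryTwo} returns $\sigma(\kappa_s)\vee\bigvee_u y_u=0\vee 0=0$. Multiplying the two probabilities gives return value $0$ with probability at least $(\epsilon/4)\cdot(1/2)=\hhitprob{\epsilon}$.

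The main obstacle I anticipate is the bookkeeping in the union bound step: one must verify that the amplification constant $\orconst{\epsilon}=\lceil 5\log(\epsilon^{-1})\rceil$ is genuinely enough to beat $|R|\leq 3\numrel{\epsilon}=3\epsilon^{-2}\log(2\epsilon^{-1})$ failure events when each event is suppressed only to probability $(1-\epsilon/6)^{\orconst{\epsilon}}\approx\epsilon^{5/6}$ — this is quite tight and is exactly where the logarithmic factor in the query complexity comes from, so the constants have to be handled with care rather than waved through. A secondary subtlety is making sure the induction is set up over all $\epsilon'>0$ and all formulas of smaller depth simultaneously (so that the recursive calls, which use the larger parameter $\twicelocaldist{\epsilon}$, are covered), and that when $\twicelocaldist{\epsilon}>1$ the recursive call trivially returns $1$ — but in that regime $\Phi_u$ cannot be $\twicelocaldist{\epsilon}$-far from $\SAT$, so this case simply does not arise under our conditioning.
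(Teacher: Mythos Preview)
Your overall approach is exactly the paper's: condition on $s$ being $(\epsilon,\sigma)$-critical (probability $\geq\epsilon/4$ by Lemma~\ref{lem:many-critical}), use Observation~\ref{ob:nottoomany} to pass Line~\ref{line:QuasiTooMany}, use importance plus Observation~\ref{ob:or} to show every $u\in R$ is $4\epsilon/3$-far, apply induction to each recursive call, amplify, and union-bound over $R$. The paper also separates out the case $\epsilon>3/4$ explicitly (there all ancestors are $\wedge$ so $R=\emptyset$), but your treatment of that edge case is fine.

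There is, however, a concrete numerical gap in the amplification step, and it is exactly the step you flagged as the obstacle. You have misread the repetition count: in Algorithm~\ref{alg:MonQuasi} the inner loop runs $\lceil 20\,\epsilon^{-1}\log(\epsilon^{-1})\rceil$ times, not $\lceil 5\log(\epsilon^{-1})\rceil$ times. With only $5\log(1/\epsilon)$ repetitions and per-call rejection probability $\epsilon/6$, one gets $(1-\epsilon/6)^{5\log(1/\epsilon)}\approx e^{-(5/6)\epsilon\log(1/\epsilon)}\to 1$ as $\epsilon\to 0$, so the bound is vacuous; and even your stated estimate $\epsilon^{5/6}$ (which already presupposes an $\epsilon^{-1}$ in the exponent that your expression lacks) cannot beat a union bound over $|R|\le 3\epsilon^{-2}\log(2/\epsilon)$ events, since $\epsilon^{5/6}\cdot\epsilon^{-2}\to\infty$. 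With the correct count $20\epsilon^{-1}\log(1/\epsilon)$ one gets $(1-\epsilon/6)^{20\epsilon^{-1}\log(1/\epsilon)}\le \epsilon^{10/3}$, and then $3\epsilon^{-2}\log(2/\epsilon)\cdot\epsilon^{10/3}=3\epsilon^{4/3}\log(2/\epsilon)\le 1/2$ for small $\epsilon$, which is what the paper uses.

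One small slip of the pen: for Line~\ref{line:QuasiQueryTwo} to return $0$ you need $y_u=0$ for \emph{every} $u\in R$, not ``at least one''; your own computation (a union bound over all of $R$) and your final line $0\vee\bigvee_u y_u=0$ already reflect the correct quantifier.
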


\begin{proof}
The base case, $\kappa_r\in X$, is handled correctly by Line \ref{line:QuasiAVariable}.
Assume next that $\epsilon> 3/4$.
Assume that the vertex $s$ selected in  Line~\ref{line:QuasiPick}  is $(\epsilon,\sigma)$-critical.
Hence by definition
$\sigma$ is more than $1/2$-far from $\SAT(\Phi_u)$ for every ancestor $u$ of $s$.
Thus by Observation~\ref{ob:or} we have that $\kappa_u\equiv\wedge$ for every
ancestor $u$ of $s$.
Consequently, by Line~\ref{line:QuasiQueryOne} and Line~\ref{line:QuasiQueryTwo} the value returned 
will be $\sigma(\kappa_s)$, and $\sigma(\kappa_s) = 0$ because $s$ is $(\epsilon,\sigma)$-critical.
By Lemma~\ref{lem:many-critical}, with probability at least $3/16$ the vertex $s$ selected in Line~\ref{line:QuasiPick} is $(\epsilon,\sigma)$-critical.
Thus, $0$ is returned with probability at least $3/16$,
which is greater than $\hhitprob{\epsilon}$ when $3/4<\epsilon\leq 1$.

For all other $\epsilon$ we proceed by induction over the depth. Assume that for any formula $\Phi'$ such that $\depth{}{\Phi'}<\depth{}{\Phi}$ and any assignment $\sigma'$ to $\Phi'$ that is $\eta$-far from $\SAT(\Phi')$ (for any $\eta$), Algorithm \ref{alg:MonQuasi} returns $0$ with probability at least $\hhitprob{\eta}$. Given this we prove that $0$ is returned with probability at least $\hhitprob{\epsilon}$ for $\Phi$ and $\sigma$.

Assume first that the vertex $s$ selected in  Line~\ref{line:QuasiPick} is $(\epsilon,\sigma)$-critical.
Let $A,R$ be the sets from Line~\ref{line:QuasiR} and Line~\ref{line:QuasiA}.
Since $s$ is $(\epsilon,\sigma)$-critical, by definition for every  $u\in A$ we have that $\sigma$ is $\localdist{\epsilon}$-far from $\SAT(\Phi_u)$.
Also, because $s$ is $(\epsilon,\sigma)$-critical, by definition for every  $u\in A$ and $w\in \CH{u}\cap R$ we have that $w\neq \LC{u}$, and therefore
 by Observation~\ref{ob:or} we have that $\sigma$ 
is $\twicelocaldist{\epsilon}$-far from $\SAT(\Phi_w)$ for every $w\in R$.
By the induction assumption, for every $w\in R$,
with probability at least $1-\hhitprob{(\twicelocaldist{\epsilon})}$
Algorithm~\ref{alg:MonQuasi} returns $0$ when called with $\twicelocaldist{\epsilon}$, $\Phi_w$ and an oracle to $\sigma$.
Hence, for every $w\in R$, the probability that on $\orconst{\epsilon}$ such independent calls to Algorithm~\ref{alg:MonQuasi}
the value $0$ was never returned is at
most $(1-\hhitprob{(\twicelocaldist{\epsilon})})^{\orconst{\epsilon}}$.
This is less than $(\numrel{\epsilon})/6$.
Observation \ref{ob:nottoomany} ensures that $|R| \leq 3\numrel{\epsilon}$, and in particular the condition in Line \ref{line:QuasiTooMany} is not invoked and the calls in Line \ref{line:QuasiRecursiveCalls} indeed take place.
By the union bound over the vertices of $R$, with probability at least $1/2$, for every $u\in R$ at least one of calls to Algorithm~\ref{alg:MonQuasi} with $\twicelocaldist{\epsilon}$, $\Phi_u$ and an oracle to $\sigma$ returned the value $0$.
This means that for every $u\in R$, $y_u$ in Line~\ref{line:QuasiRecursiveCalls} was set to $0$ and remained $0$.
Consequently this is the value returned in Line~\ref{line:QuasiQueryTwo}

Finally, since $\sigma$ is $\epsilon$-far from $\SAT(\Phi)$,
by Lemma~\ref{lem:many-critical} the vertex $s$ selected in  Line~\ref{line:QuasiPick} is $(\epsilon,\sigma)$-critical
with probability at least $\hitprob{\epsilon}$.
Therefore $0$ is returned with probability at least $\hhitprob{\epsilon}$.
\end{proof}

\section{The Computational Complexity of the Testers and Estimator}\label{sec:running}

There are two parts to analyzing the computational complexity of a test for a massively parametrized property. The first part is the running time of the preprocessing phase, which reads the entire parameter part of the input, in our case the formula, but has no access yet to the tested part, in our case the assignment. This part is subject to traditional running time and working space definitions, and ideally should have a running time that is quasi-linear or at least polynomial in the size of its input (the ``massive parameter''). The second part is the testing part, which ideally should take a time that is logarithmic in the input size for every query it makes (as a very basic example, even a tester that just makes uniformly random queries over the input would require such a time to draw the necessary $\log (n)$ random coins for each query).

In our case, the preprocessing part would need to take a k-ary formula and convert it to the basic form corresponding to the algorithm that we run. We may assume that the formula is represented as a graph with additional information stored in the vertices.

Constructing the basic form by itself can be done very efficiently (and also have an output size linear in its input size). For example, if the input formula has only ``$\wedge$'' and ``$\vee$'' gates, then a Depth First Search over the input would do nicely, where the output would follow this traversal, but create a new child gate in the output only when it is different than its parent (otherwise it would continue traversing the input while remaining in the same output node). With more general monotone gates, a first pass would convert them to unforceable gates by ``splitting off'' forceful children as in the proof of Lemma \ref{lem:make-kx-basic}. It is not hard to efficiently handle ``$\neg$'' gates using De-Morgan's law too.

Aside from the basic form of the formula, the preprocessing part should construct several additional data structures to make the second part (the test itself) as efficient as possible.

For Algorithm~\ref{alg:MosGen}, we would need to quickly pick a child of a vertex with probability proportional to its sub-formula size, and know who are the light children as well as what is the relative size of the smallest child. This mainly requires storing the size of every sub-formula for every vertex of the tree, as well as sorting the children of each vertex by their sizes and storing the value of the corresponding ``$\ell$''. Algorithm~\ref{alg:GenEst} requires very much the same additional data as Algorithm~\ref{alg:MosGen}. This information can be stored in the vertices of the graph while performing a depth-first traversal of it, starting at the root, requiring a time linear in the size of the basic formula.

For Algorithm~\ref{alg:MonQuasi}, we would need to navigate the tree both downwards and upwards (for finding the ancestors of a vertex), as well as the ability to pick a vertex corresponding to a variable at random, which in itself does not require special preprocessing but does require generating a list of all such vertices. Constructing the set of ancestors is simply following the path from the vertex to the root, requiring time linear in the depth of the vertex in the tree.

The only part in the algorithms above that depends on $\epsilon$ is designating the light children, but this can also be done ``for all $\epsilon$'' at a low cost by storing the range of $\epsilon$ for every positive $\ell$. Since $\ell$ is always an integer no larger than $k+1$, this requires an array of such size in every vertex.

Let us turn to analyzing the running time complexity of the second part, namely the testing algorithm. Once the above preprocessing is performed, the time per instantiation (and thus per query) of the algorithm will be very small (where we charge the time it takes to calculate a recursive call to the recursive instantiation). In Algorithm~\ref{alg:MosGen}, the cost in every instantiation is at most the cost of selecting a child vertex at random for each iteration of the loop in line~\ref{line:MGIsNegloop} and a cost linear in $k$. This would make it a cost logarithmic in the input size per query (multiplied by the time it takes to write and read an address) -- where the log incurrence is in fact only when we need to randomly choose a child according to its weight. The case of Algorithm~\ref{alg:GenEst} is similar, except that there is also a cost for every term in the $\mDNF$, of which there are at most $2^k$.

For Algorithm~\ref{alg:MonQuasi}, every instantiation requires iterating over all the ancestors of one vertex picked at random. This requires time linear in the depth of the formula and logarithmic in the input size per query.

\section{The Untestable Formulas}\label{sec:untest}

We describe here a read-once formula over an alphabet with $4$ values, defining a property that cannot be $1/4$-tested using a constant number of queries. The formula will have a very simple structure, with only one gate type. Then, building on this construction, we describe a read-once formula over an alphabet with $5$-values that cannot be $1/12$-tested, which satisfies an additional monotonicity condition: All gates as well as the acceptance condition are additionally monotone with respect to a fixed ordering of the alphabet.

\subsection{The $4$-valued formula}

For convenience we denote our alphabet by $\Sigma=\{0,1,P,F\}$. An input is said to be {\em accepted} by the formula if, after performing the calculations in the gates, the value received at the root of the tree is not ``$F$''. We restrict the input variables to $\{0,1\}$, although it is easy to see that the following argument holds also if we allow other values to the input variables (and also if we change the acceptance condition to the value at the root having to be ``$P$'').

\begin{definition}
The {\em balancing gate} is the gate that receives two inputs from $\Sigma$ and outputs the following.
\begin{itemize}
\item For $(0,0)$ the output is $0$ and for $(1,1)$ the output is $1$.
\item For $(1,0)$ and $(0,1)$ the output is $P$.
\item For $(P,P)$ the output is $P$,
\item For anything else the output is $F$.
\end{itemize}

For a fixed $h>0$, the {\em balancing formula} of height $h$ is the formula defined by the following.
\begin{itemize}
\item The tree is the full balanced binary tree of height $h$ with variables at the leaves, and hence there are $2^h$ variables.
\item All gates are set to the balancing gate.
\item The formula accepts if the value output at the root is not ``$F$''.
\end{itemize}
\end{definition}

We denote the variables of the formula in their order by $x_0,\ldots,x_{2^h-1}$. The following is easy.

\begin{lemma}\label{lem:baliff}
An assignment $a_0\in\{0,1\},\ldots,a_{2^h-1}\in\{0,1\}$ to $x_0,\ldots,x_{2^h-1}$ is accepted by the formula if and only if for every $0<k\leq h$ and every $0\leq i<2^{h-k}$, the number of $1$ values in $a_{i2^k},\ldots,a_{(i+1)2^k-1}$ is either $0$, $2^k$ or $2^{k-1}$.
\end{lemma}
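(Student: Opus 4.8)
\textbf{Proof plan for Lemma \ref{lem:baliff}.}

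The plan is to prove this by induction on $h$, analyzing the recursive structure of the balanced binary tree. The key observation is that the balancing gate, when fed two inputs each coming from $\{0,1,P\}$, outputs a value in $\{0,1,P\}$ precisely when the pair is one of $(0,0)\mapsto 0$, $(1,1)\mapsto 1$, $(0,1)\mapsto P$, $(1,0)\mapsto P$, $(P,P)\mapsto P$, and outputs $F$ otherwise (in particular whenever exactly one input is $P$). So I would first establish the following strengthened statement by induction: for any subtree rooted at a vertex $v$ at height $k$ (spanning variables $a_{i2^k},\ldots,a_{(i+1)2^k-1}$), the value $\sigma(v)$ is not $F$ if and only if the number of $1$'s among those $2^k$ values is $0$, $2^{k-1}$, or $2^k$; moreover, when $\sigma(v)\neq F$, the value is $0$ if the count is $0$, it is $1$ if the count is $2^k$, and it is $P$ if the count is $2^{k-1}$. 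This strengthening (tracking the actual value, not just non-$F$-ness) is what makes the induction go through.

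For the base case $k=0$ (a leaf), the value is $a_i\in\{0,1\}$, which is never $F$, and indeed the count of $1$'s is $0$ or $1=2^0$, matching value $0$ or $1$ respectively; the case count $=2^{-1}$ is vacuous. For the inductive step, consider a vertex $v$ at height $k\geq 1$ with children $v_L, v_R$ at height $k-1$ spanning the left and right halves of the block, with $1$-counts $c_L$ and $c_R$ respectively (so the total count is $c=c_L+c_R$). If either child evaluates to $F$, then $v$ evaluates to $F$; by the induction hypothesis this happens unless $c_L\in\{0,2^{k-2},2^{k-1}\}$ and $c_R\in\{0,2^{k-2},2^{k-1}\}$. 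Assuming both children are non-$F$, I would enumerate the $3\times 3$ cases for $(\sigma(v_L),\sigma(v_R))$, translating via the induction hypothesis into the corresponding $(c_L,c_R)$ pairs, applying the balancing gate, and checking in each case that the output is non-$F$ exactly when $c\in\{0,2^{k-1},2^k\}$ and that the output value matches the claim. For instance $(0,0)$ corresponds to $c_L=c_R=0$, giving output $0$ and $c=0$; $(P,P)$ corresponds to $c_L=c_R=2^{k-2}$, giving output $P$ and $c=2^{k-1}$; $(1,1)$ corresponds to $c_L=c_R=2^{k-1}$, giving output $1$ and $c=2^k$; $(0,1)$ and $(1,0)$ correspond to $\{c_L,c_R\}=\{0,2^{k-1}\}$, giving output $P$ and $c=2^{k-1}$; and the remaining cases $(0,P)$, $(P,0)$, $(1,P)$, $(P,1)$ give output $F$, with $c$ equal to $2^{k-2}$ or $2^{k-1}+2^{k-2}$, neither of which is in $\{0,2^{k-1},2^k\}$ when $k\geq 2$ (and for $k=1$ the value $P$ cannot occur at a child, so these cases are vacuous). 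Finally, applying the strengthened claim at the root ($k=h$) and noting that acceptance means the root value is not $F$, together with the equivalence of "the condition holds for all $0<k\leq h$ and all blocks" with "the condition holds at every internal vertex" (which is exactly what the nested induction verifies), yields the lemma.

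The main obstacle I anticipate is purely bookkeeping: being careful about the edge cases $k=1$ (where children are leaves and can only take values $0$ or $1$, never $P$, so several gate-input combinations are impossible) and $k=2$ (the smallest case where $2^{k-2}=1$ is a genuine intermediate count), and making sure the three-way case split is exhaustive and that "non-$F$ at $v$" is correctly shown to force "non-$F$ at both children" (so that one does not need to separately handle a child being $F$ while $v$ is non-$F$ — which the gate table rules out since any input pair containing $F$ produces $F$). None of this is deep, but the statement "for all $k$ and all blocks" versus "at the root only" needs the strengthened inductive hypothesis to bridge it cleanly, so I would make sure to phrase the induction as a statement about arbitrary subtrees from the outset.
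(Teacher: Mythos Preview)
Your proposal is correct and follows essentially the same approach as the paper: both prove a strengthened inductive claim that, for each vertex $v$ at height $k$, the value $\sigma(v)$ equals $0$, $1$, or $P$ according as the $1$-count among its descendant leaves is $0$, $2^k$, or $2^{k-1}$ (and is $F$ otherwise), then enumerate the $3\times 3$ cases for the children's values. The only minor differences are that the paper starts the induction at $k=1$ rather than $k=0$, and that you are somewhat more explicit about the edge cases and about why non-$F$ at $v$ forces non-$F$ at both children.
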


\begin{proof}
Denote the number of $1$ values in variables descending from a gate $u$ by $\mathrm{num}_1(u)$. Let us prove by induction on $k$ that:
\begin{itemize}
\item $\mathrm{num}_1(v) =0$ if and only if the value of $v$ is $0$,
\item $\mathrm{num}_1(v) =2^k$ if and only if the value of $v$ is $1$,
\item and $\mathrm{num}_1(v) =2^{k-1}$ if and only if the value of $v$ is $P$.
\end{itemize} 

For $k=1$ we have the two inputs of $v$, and by the definition of the balancing gate the claim follows. 

For $k>1$, we have $2^k$ variables which are all descendants of the same gate $v$. By the induction hypothesis, for both children of $v$, denoted $u,w$, we have that $\mathrm{num}_1(u),\mathrm{num}_1(w)\in \{0,2^{k-2},2^{k-1}\}$ and that this determines their value (unless at least one of them already evaluates to $F$, in which case both the entire formula is not satisfied, and by induction there is an interval without the correct number of $1$ values). If $\mathrm{num}_1(w) =\mathrm{num}_1(u) =0$ then they both evaluate to $0$ and so does $v$. Similarly, if $\mathrm{num}_1(w) =\mathrm{num}_1(u) =2^{k-1}$ then both evaluate to $1$ and so does $v$. If $\mathrm{num}_1(u)=2^{k-1}$ and $\mathrm{num}_1(w)=0$, then $u$ evaluates to $1$ and $w$ to $0$ and indeed $v$ evaluates to $P$ (and similarly for the symmetric case). If  $\mathrm{num}_1(u)= \mathrm{num}_1(w)=2^{k-2}$, then both evaluate to $P$ and so does $v$. The remaining case is $\mathrm{num}_1(u) \in \{0,2^{k-1}\}$ and $\mathrm{num}_1(w)=2^{k-2}$ (and the symmetric case), by the induction hypothesis and the definition of the balancing gate this implies that $v$ evaluates to $F$ and the formula is unsatisfied.\end{proof}

In other words, every ``binary search interval'' is either all $0$, or all $1$, or has the same number of $0$ and $1$. This will allow us to easily prove that certain inputs are far from satisfying the property.

\subsection{Two distributions}

We now define two distributions, one over satisfying inputs and the other over far inputs.

\begin{definition}
The distribution $D_Y$ is defined by the following process.
\begin{itemize}
\item Uniformly pick $2\leq k\leq h$.
\item For every $0\leq i<2^{h-k}$, independently pick either $(y_{i,0},y_{i,1})=(0,1)$ or $(y_{i,0},y_{i,1})=(1,0)$ (each with probability $1/2$).
\item For every $0\leq i<2^{h-k}$, set $$x_{i2^k}=\cdots=x_{i2^k+2^{k-1}-1}=y_{i,0};\qquad x_{i2^k+2^{k-1}}=\cdots=x_{(i+1)2^k-1}=y_{i,1}.$$
\end{itemize}
\end{definition}

\begin{definition}
The distribution $D_N$ is defined by the following process.
\begin{itemize}
\item Uniformly pick $2\leq k\leq h$.
\item For every $0\leq i<2^{h-k}$, independently choose $(z_{i,0},z_{i,1},z_{i,2},z_{i,3})$ to have either one $1$ and three $0$ or one $0$ and three $1$ (each of the $8$ possibilities with probability $1/8$).
\item For every $0\leq i<2^{h-k}$, set $$x_{i2^k}=\cdots=x_{i2^k+2^{k-2}-1}=z_{i,0};\qquad x_{i2^k+2^{k-2}}=\cdots=x_{i2^k+2^{k-1}-1}=z_{i,1};$$ $$x_{i2^k+2^{k-1}}=\cdots=x_{i2^k+2^{k-1}+2^{k-2}-1}=z_{i,2};\qquad x_{i2^k+2^{k-1}+2^{k-2}}=\cdots=x_{(i+1)2^k-1}=z_{i,3}.$$
\end{itemize}
\end{definition}

It is easier to illustrate this by considering the calculation that results from the distributions. In both distributions we can think of a randomly selected level $k$ (counted from the bottom, where the leaf level $0$ and the level above it $1$ are never selected). In $D_Y$, the output of all gates at or above level $k$ is ``$P$'', while the inputs to every gate at level $k$ will be either $(0,1)$ or $(1,0)$, chosen uniformly at random.

In $D_N$ all gates at level $k$ will output ``$F$'' (note however that we cannot query a gate output directly); looking two levels below, every gate as above holds the result from a quadruple chosen uniformly from the $8$ choices described in the definition of $D_N$ (the quadruple $(z_{i,0},z_{i,1},z_{i,2},z_{i,3})$).  At level $k-2$ or lower the gate outputs are $0$ and $1$ and their distribution resembles very much the distribution as in the case for $D_Y$ -- as long as we cannot ``focus'' on the transition level $k$. This is formalized in terms of lowest common ancestors below.

\begin{lemma}\label{lem:missdiff}
Let $Q\subset\{1,\ldots,2^h\}$ be a set of queries, and let $H\subset\{0,\ldots,h\}$ be the set of levels containing lowest common ancestors of subsets of $Q$. Conditioned on neither $k$ nor $k-1$ being in $H$, both $D_Y$ and $D_N$ induce exactly the same distribution over the outcome of querying $Q$.
\end{lemma}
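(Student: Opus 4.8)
The plan is to show that, conditioned on the level $k$ chosen in the two distributions being fixed and equal (so we may compare $D_Y$ and $D_N$ "level by level"), and under the hypothesis that neither $k$ nor $k-1$ appears in $H$, the two induced distributions on the vector $(x_q)_{q\in Q}$ coincide. Since $k$ is picked uniformly from $\{2,\ldots,h\}$ in both processes, it suffices to prove this conditional equality for each fixed admissible value of $k$; the unconditional statement then follows by averaging. So fix $k$ with $k\notin H$ and $k-1\notin H$, and fix the partition of $[2^h]$ into the $2^{h-k}$ blocks $B_i=\{i2^k,\ldots,(i+1)2^k-1\}$ of size $2^k$ used by both distributions at this level.

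The key observation is that the hypothesis $k,k-1\notin H$ forces every query in $Q$ to land in a single one of the four sub-blocks of size $2^{k-2}$ within its block $B_i$, \emph{and} no two queries from distinct size-$2^{k-2}$ sub-blocks of the same $B_i$ can both be present — because their lowest common ancestor would sit at level $k-1$ or level $k$. More precisely: if two queries lie in the same size-$2^{k-1}$ half of some $B_i$ but in different size-$2^{k-2}$ quarters, their LCA is at level $k-1$; if they lie in different halves of the same $B_i$, their LCA is at level $k$. Hence for each block $B_i$ that $Q$ meets, all of $Q\cap B_i$ lies inside exactly one size-$2^{k-2}$ sub-block, on which $D_Y$ assigns the constant value $y_{i,\lfloor\cdot\rfloor}$ and $D_N$ assigns the constant value $z_{i,j}$ for the appropriate $j\in\{0,1,2,3\}$. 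So on the coordinates actually queried, both distributions reduce to: for each touched block $B_i$, output a single bit (repeated across $Q\cap B_i$), with these bits independent across distinct blocks $i$.

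It remains to check that the single bit handed to a touched block has the same marginal distribution under $D_Y$ and under $D_N$ — namely uniform on $\{0,1\}$. Under $D_Y$, the relevant quarter is either $y_{i,0}$ or $y_{i,1}$, and since $(y_{i,0},y_{i,1})$ is uniform on $\{(0,1),(1,0)\}$, each of $y_{i,0}$ and $y_{i,1}$ is individually uniform on $\{0,1\}$. Under $D_N$, the relevant quarter value is some $z_{i,j}$, and by the definition of $D_N$ the quadruple $(z_{i,0},z_{i,1},z_{i,2},z_{i,3})$ is uniform over the eight tuples with exactly one $1$ or exactly three $1$'s; by symmetry each individual $z_{i,j}$ is uniform on $\{0,1\}$ (four of the eight tuples have $z_{i,j}=1$). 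Since in each distribution the bits of distinct touched blocks are independent, the joint distribution of $(x_q)_{q\in Q}$ is, in both cases, a product of independent uniform bits, one per touched block, with all queries in a common block receiving the same bit. These agree, which proves the claim for fixed $k$, and averaging over $k$ completes the proof.

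The main obstacle is the combinatorial bookkeeping in the middle paragraph: one must argue carefully that "no LCA at level $k$ or $k-1$" is exactly the condition guaranteeing that, within each block $B_i$, the queried coordinates all collapse to a single constant sub-block on which $D_Y$ and $D_N$ each behave like one fair coin. Everything downstream (independence across blocks, uniformity of each coin) is immediate from the definitions; the one genuine point to get right is the tree-geometry statement relating "LCA at level $j$" to "queries split across the size-$2^j$ sub-blocks."
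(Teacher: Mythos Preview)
Your proof is correct and follows essentially the same approach as the paper: fix $k$, use the hypothesis $k,k-1\notin H$ to argue that all queries in any single block $B_i$ lie in one size-$2^{k-2}$ quarter (the paper phrases this dually as ``representatives have LCA above level $k$, hence lie in distinct blocks''), and then observe that each such quarter's bit is marginally uniform and independent across blocks under both $D_Y$ and $D_N$. Your version is slightly more explicit in checking the marginal uniformity of $z_{i,j}$, but the argument is the same.
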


\begin{proof}
Let us condition the two distributions on a specific value of $k$ satisfying the above. For two queries $q,q'\in Q$ whose lowest common ancestor is on a level below $k-1$, with probability $1$ they will receive the exact same value (this holds for both $D_N$ and $D_Y$). The reason is clear from the construction -- their values will come from the same $y_{i,j}$ or $z_{i,j}$.

Now let $Q'$ contain one representative from every set of queries in $Q$ that must receive the same value by the above argument. For any $q,q'\in Q'$, their lowest common ancestor is on a level above $k$. For $D_Y$ it means that $x_q$ takes its value from some $y_{i,j}$ and $x_{q'}$ takes its value from some $y_{i',j'}$ where $i\neq i'$. Because each pair $(y_{i,0},y_{i,1})$ is chosen independently from all others, this means that the outcome of the queries in $Q'$ is uniformly distributed among the $2^{|Q'|}$ possibilities. The same argument (with $z_{i,j}$ and $z_{i',j'}$ instead of $y_{i,j}$ and $y_{i',j'}$) holds for $D_N$. Hence the distribution of outcomes over $Q'$ is the same for both distributions, and by extension this holds over $Q$.
\end{proof}

On the other hand, the two distributions are very different with respect to satisfying the formula.

\begin{lemma}\label{lem:satnfar}
An input chosen according to $D_Y$ always satisfies the balancing formula, while an input chosen according to $D_N$ is always $1/4$-far from satisfying it.
\end{lemma}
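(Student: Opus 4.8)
The plan is to handle the two halves separately, with the $D_Y$ half being a direct application of Lemma~\ref{lem:baliff} and the $D_N$ half being the combinatorial heart of the argument.

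For the $D_Y$ half, I would simply verify the characterization of Lemma~\ref{lem:baliff}. Fix the level $k$ chosen by the process and an interval of length $2^j$ for some $1\le j\le h$. If $j\ge k$, the interval is a union of blocks, each of which contributes exactly $2^{k-1}$ ones (since each pair $(y_{i,0},y_{i,1})$ is a permutation of $(0,1)$, and each $y_{i,\cdot}$ is copied $2^{k-1}$ times), so the count is $2^{j-1}$, which is the ``half'' case. If $j<k$, the interval sits inside a single half-block of constant value $y_{i,0}$ or $y_{i,1}$, so it is either all $0$ or all $1$. In every case the count is $0$, $2^j$, or $2^{j-1}$, so by Lemma~\ref{lem:baliff} the assignment is accepted; this holds for every choice made by the process, hence with probability $1$.

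For the $D_N$ half I want to show that every assignment in the support of $D_N$ is $1/4$-far from $\mathrm{SAT}(\Phi)$. Fix the chosen level $k$ and consider the $2^{h-k}$ blocks of length $2^k$. Within one such block the four quarters carry values $(z_{i,0},z_{i,1},z_{i,2},z_{i,3})$, which form either a permutation of $(0,0,0,1)$ or of $(1,1,1,0)$; in either case the number of ones in the whole block is $2^{k-2}$ or $3\cdot 2^{k-2}$, neither of which equals $0$, $2^{k-1}$, or $2^k$. So by Lemma~\ref{lem:baliff} the length-$2^k$ interval aligned with this block already witnesses non-acceptance. The key quantitative claim is then: to fix the block so that its length-$2^k$ interval has an allowed count ($0$, $2^{k-1}$, or $2^k$), one must change at least a $1/4$ fraction of the $2^k$ variables in it. Indeed, the current count of ones in the block is $2^{k-2}$ or $3\cdot 2^{k-2}$; the nearest allowed value is $0$ or $2^{k-1}$ (respectively $2^k$ or $2^{k-1}$), each at distance exactly $2^{k-2}=2^k/4$ from the current count. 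Changing $m$ variables shifts the count by at most $m$, so at least $2^k/4$ variables in the block must be altered. Since this lower bound holds simultaneously and disjointly for every one of the $2^{h-k}$ blocks, any assignment agreeing with $\Phi$ differs from the $D_N$-input in at least $2^{h-k}\cdot 2^k/4 = 2^h/4$ of the $2^h$ variables, i.e.\ the input is $1/4$-far. (One should note that reaching an allowed count per block is only a \emph{necessary} condition for acceptance, not a sufficient one, so this is a valid lower bound on the distance even though the corrected assignment need not itself be satisfying.)

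I expect the main subtlety to be exactly that last parenthetical point: making sure the argument is phrased as a lower bound on the number of changes forced by a single necessary condition (per-block counts), rather than accidentally claiming that correcting the per-block counts suffices to satisfy $\Phi$. Everything else is bookkeeping: the per-block distance computation $2^{k-2}=2^k/4$, and the observation that the per-block bounds add up because the blocks partition the variable set. No concentration or probabilistic argument is needed here — both statements in the lemma are ``always'', i.e.\ hold pointwise over the support of the respective distributions.
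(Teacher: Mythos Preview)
Your proposal is correct and follows essentially the same approach as the paper. The only cosmetic difference is in the $D_N$ half: the paper phrases the argument via averaging (if fewer than $1/4$ of the variables are changed overall, then some level-$k$ block has fewer than $1/4$ of its variables changed and hence still violates the count condition of Lemma~\ref{lem:baliff}), whereas you sum the per-block lower bounds directly; these are equivalent, and your parenthetical about the per-block condition being merely necessary is exactly the right caveat.
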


\begin{proof}
By Lemma \ref{lem:baliff}, the assignment constructed in $D_Y$ will always be satisfied. This is since for every vertex in a level lower than $k$, all of its descendant variables will be of the same value, and for every vertex in level $k$ or above exactly half of the variables will have each value.

Note that in an input constructed according to $D_N$, every vertex at level $k$ has one quarter of its descendant variables of one value, while the rest are of the other one. By averaging, if one were to change less than $1/4$ of the input values, we will have one vertex $v$ at level $k$ for which less than $1/4$ of the values of its descendant variables were changed. This means that $v$ cannot satisfy the requirements in Lemma \ref{lem:baliff} and therefor it, and hence the entire formula, evaluate to $F$.
\end{proof}

\subsection{Proving non-testability}

We use here the following common application of Yao's method (see e.g. \cite{Fischer01theart}).

\begin{lemma}\label{lem:adaptyao}
If $D_Y$ is a distribution over satisfying inputs and $D_N$ is a distribution over $\epsilon$-far ones, such that for any fixed set of queries $Q$ with $|Q|\leq l$ the probability distributions over the outcomes differ by less than $\frac13$ (in the variation distance norm) for $D_Y$ and $D_N$, then there is no non-adaptive $\epsilon$-test for the property that makes at most $l$ queries ($1$-sided or $2$-sided).
\end{lemma}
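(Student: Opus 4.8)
The plan is to invoke Yao's minimax principle in its standard ``distributional'' form for property testing lower bounds. Suppose for contradiction that a non-adaptive (possibly two-sided) $\epsilon$-test $\A$ for the property exists, making at most $l$ queries; by fixing its internal randomness optimally we obtain a \emph{deterministic} non-adaptive algorithm that queries some fixed set $Q$ with $|Q|\leq l$ and then outputs a decision based solely on the answer vector, succeeding with probability at least $2/3$ when the input is drawn from the mixture $\tfrac12 D_Y + \tfrac12 D_N$ (here I use that $D_Y$ is supported on satisfying inputs, on which $\A$ must accept w.p. $\geq 2/3$, and $D_N$ on $\epsilon$-far inputs, on which $\A$ must reject w.p. $\geq 2/3$). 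The standard averaging argument (see e.g. \cite{Fischer01theart}): averaging the success probability over the coins of $\A$, some fixing of the coins does at least as well as the average, which is $\geq 2/3$.

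Next I would reduce ``success of a deterministic non-adaptive algorithm on query set $Q$'' to total variation distance. A deterministic decision rule on the answers to $Q$ is just a subset $\mathcal{R}$ of possible answer vectors on which it outputs ``accept''. Its probability of being correct under the mixture is $\tfrac12\Pr_{D_Y}[\text{answers}\in\mathcal{R}] + \tfrac12\Pr_{D_N}[\text{answers}\notin\mathcal{R}] = \tfrac12 + \tfrac12(\Pr_{D_Y}[\mathcal{R}] - \Pr_{D_N}[\mathcal{R}])$, which is at most $\tfrac12 + \tfrac12\,\|D_Y|_Q - D_N|_Q\|_{\mathrm{TV}}$, where $D_Y|_Q, D_N|_Q$ denote the induced distributions on the answers to $Q$. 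By hypothesis this variation distance is less than $\tfrac13$, so the success probability of any deterministic non-adaptive algorithm with $|Q|\leq l$ queries is strictly less than $\tfrac12 + \tfrac16 = \tfrac23$, contradicting the bound derived in the previous paragraph. This establishes the lemma.

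The step requiring the most care is bookkeeping the direction of the inequalities and the role of the $1/3$ slack: the property-testing success threshold $2/3$ translates, via the mixture, into needing variation distance at least $1/3$ to distinguish, so an upper bound of $<1/3$ on the variation distance is exactly what rules out any tester. Everything else is routine: the fixing of randomness is a finite averaging argument, and the passage from ``outcome distributions'' to ``optimal decision rule'' is the elementary fact that the best distinguisher between two distributions achieves advantage equal to their total variation distance. I would also remark, in passing, that the lemma applies verbatim once Lemma~\ref{lem:missdiff} (together with a counting bound on how many levels can host lowest common ancestors of a small query set) is used to verify its hypothesis --- but that verification belongs to the application of the lemma rather than its proof.
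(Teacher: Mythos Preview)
Your argument is correct and is exactly the standard proof of this folklore application of Yao's principle. The paper itself does not prove the lemma: it simply states it as a ``common application of Yao's method'' with a reference to \cite{Fischer01theart}, so there is no paper proof to compare against beyond noting that your write-up is precisely the argument being cited.
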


This allows us to conclude the proof.

\begin{theorem}\label{thm:untest}
Testing for being a satisfying assignment of the balancing formula of height $h$ requires at least $\Omega(h)$ queries for a non-adaptive test and $\Omega(\log h)$ queries for a possibly adaptive one.
\end{theorem}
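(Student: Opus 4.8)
The plan is to derive both bounds from the machinery already assembled: Lemma~\ref{lem:satnfar} gives us the two distributions $D_Y$ (supported on satisfying assignments) and $D_N$ (supported on assignments that are $1/4$-far), and Lemma~\ref{lem:missdiff} tells us these distributions are indistinguishable to a query set $Q$ unless the ``transition level'' $k$ lands in (or just above) the set $H$ of levels of lowest common ancestors of subsets of $Q$. So the whole argument reduces to bounding $|H|$ in terms of $|Q|$, and then combining with the uniform choice of $k\in\{2,\ldots,h\}$ to show the probability that $k\in H$ or $k-1\in H$ is small.

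First I would treat the non-adaptive case. Fix any query set $Q$ with $|Q|=l$. The set of lowest common ancestors of all subsets of $Q$ has size at most $l-1$ (in a binary tree, the LCA-closure of $l$ leaves has at most $l-1$ internal nodes — this is the standard ``Steiner tree of $l$ leaves has $\le l-1$ branching vertices'' fact, provable by induction on $l$), hence $|H|\le l-1$. Therefore $H\cup(H+1)$ has at most $2(l-1)$ elements, and since $k$ is drawn uniformly from the $h-1$ values $\{2,\ldots,h\}$, the probability that $k\in H$ or $k-1\in H$ is at most $2(l-1)/(h-1)$. When this is less than $1/3$, Lemma~\ref{lem:missdiff} shows (after averaging over $k$) that the outcome distributions of $Q$ under $D_Y$ and $D_N$ differ by less than $1/3$ in variation distance, so by Lemma~\ref{lem:adaptyao} there is no non-adaptive $1/4$-test making $l$ queries. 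This forces $l=\Omega(h)$.

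For the adaptive case I would use the standard reduction: an adaptive algorithm making $l$ queries, run against any fixed input, follows one of at most $2^l$ query paths, so its behavior is determined by a decision tree of depth $l$; by a union-bound/hybrid argument (or directly: the collection of all leaves appearing anywhere in the decision tree has size at most $l\cdot 2^l$, but more carefully one argues level-by-level that the algorithm can ``reach'' at most $2^l$ distinct leaves of the formula) the effective number of queried variables is at most $2^l$, so the LCA-set $H$ has size at most $2^l$. Repeating the computation above, the algorithm can distinguish $D_Y$ from $D_N$ only if $2^{l+1}\ge (h-1)/3$, i.e.\ $l=\Omega(\log h)$. I would phrase this via the adaptive version of Yao's method, noting that the advantage of any adaptive algorithm is bounded by the maximum, over the $\le 2^l$ root-to-leaf paths in its decision tree, of the distinguishing advantage of the corresponding fixed query set, and each such set has size $\le l$ but their union across the tree still has LCA-closure of size $O(2^l)$.

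The main obstacle is the bookkeeping in the adaptive case: one must be careful that it is not the depth-$l$ bound on a single path but the total set of variables the adaptive tester could possibly read that controls $|H|$, and that this set has size $2^{O(l)}$ rather than something larger; getting the constant in the exponent right (and confirming that $2^{l+1} < (h-1)/3$ suffices, matching the claimed $\Omega(\log h)$) is the one place where care is needed. Everything else — the LCA-closure size bound, the uniform-$k$ averaging, and the invocation of Lemmas~\ref{lem:missdiff}, \ref{lem:satnfar}, and~\ref{lem:adaptyao} — is routine. I would also remark that the non-adaptive bound is a special case ($l$ paths collapse to one) but state it separately for clarity since the constants differ.
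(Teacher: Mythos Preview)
Your proposal is correct and follows essentially the same route as the paper: bound $|H|$ by $|Q|$ via the standard LCA-closure fact, average over the uniform choice of $k$, and invoke Lemmas~\ref{lem:missdiff}, \ref{lem:satnfar}, and~\ref{lem:adaptyao}; for the adaptive bound, convert to a non-adaptive tester at exponential cost in the number of queries.

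One caution on the adaptive paragraph: your sentence ``the advantage of any adaptive algorithm is bounded by the maximum, over the $\le 2^l$ root-to-leaf paths in its decision tree, of the distinguishing advantage of the corresponding fixed query set'' is not correct as stated --- an adaptive tester's advantage can exceed the best single-path advantage. Fortunately you do not need this claim: the argument you give just before it (query all $\le 2^l-1$ variables appearing anywhere in the decision tree, then simulate) is the right one and is exactly what the paper does in a single sentence. Drop the ``max over paths'' rephrasing and the surrounding hedging about what controls $|H|$; the simulation argument already settles it cleanly.
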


\begin{proof}
We note that for any set of queries $Q$, the size of the set of lowest common ancestors (outside $Q$ itself) is less than $Q$, and hence (in the notation of Lemma \ref{lem:missdiff}) we have $|H|\leq|Q|$. Now if $|Q|=o(h)$, then the event of Lemma \ref{lem:missdiff} happens with probability $1-o(1)$, and hence the variation distance between the two (unconditional) distributions over outcomes is $o(1)$. Together with Lemma \ref{lem:satnfar} this fulfills the conditions for Lemma \ref{lem:adaptyao} for concluding the proof.

For adaptive algorithms the bound follows by the standard procedure that makes an adaptive algorithm into a non-adaptive one at an exponential cost (by querying in advance the algorithm's entire decision tree given its internal coin tosses).
\end{proof}

\subsection{An untestable $5$-valued monotone formula}

While the lower bound given above uses a gate which is highly non-monotone, we can also give a similar construction where the alphabet is of size $5$ and the gates are monotone (that is, where increasing any input of the gate according to the order of the alphabet does not decrease its input).

Instead of just ``$\{1,\ldots,5\}$'' we denote our alphabet by $\Sigma=\{0,F_0,P,F_1,1\}$ in that order.
We will restrict the input variables to $\{0,1\}$, although it is not hard to generalize to the case where the input variables may take any value in the alphabet. At first we analyze a formula that has a non-monotone satisfying condition.

\begin{definition}
The {\em monotone balancing gate} is the gate that receives two inputs from $\Sigma$ and outputs the following.
\begin{itemize}
\item For $(0,0)$ the output is $0$ and for $(1,1)$ the output is $1$.
\item For $(1,0)$ and $(0,1)$ the output is $P$.
\item For $(P,P)$ the output is $P$.
\item For $(0,P)$ and $(P,0)$ the output is $F_0$.
\item For $(1,P)$ and $(P,1)$ the output is $F_1$.
\item For $(P,F_0)$, $(F_0,P)$, $(F_0,0)$, $(0,F_0)$ and $(F_0,F_0)$ the output is $F_0$.
\item For $(F_0,1)$ and $(1,F_0)$ the output is $F_1$.
\item For any pair of inputs containing $F_1$, the output is $F_1$.
\end{itemize}
For a fixed $h>0$, the {\em almost-monotone balancing formula} of height $h$ is the formula defined by the following.
\begin{itemize}
\item The tree is the full balanced binary tree of height $h$ with variables at the leaves, and hence there are $2^h$ variables.
\item All gates are set to the monotone balancing gate.
\item The formula accepts if the value output at the root is not ``$F_0$'' or ``$F_1$''.
\end{itemize}
\end{definition}

The following observation is easy by just running over all possible outcomes of the gate.

\begin{observation}\label{ob:isomon}
The monotone balancing gate in monotone. Additionally, if the values $F_0$ and $F_1$ are unified then the gate is still well-defined, and is isomorphic to the $4$-valued balancing gate.
\end{observation}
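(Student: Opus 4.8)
The statement is a pair of finite verifications, and the plan is simply to organize them cleanly. Throughout, fix the alphabet order $0 < F_0 < P < F_1 < 1$, and observe first that the monotone balancing gate $g$ is symmetric: every clause of its definition lists each pair together with its transpose, so $g(a,b)=g(b,a)$ for all $a,b\in\Sigma$. Hence to prove that $g$ is monotone (coordinate-wise, with respect to the alphabet order) it suffices to check monotonicity in the first coordinate, i.e.\ that for each fixed $b\in\Sigma$ the map $a\mapsto g(a,b)$ is non-decreasing. This is five chains of length five; the cleanest presentation is to tabulate $g$ as a $5\times5$ array with rows and columns indexed by $0,F_0,P,F_1,1$ in order, read off the entries directly from the eight defining clauses, and observe that every row and every column is non-decreasing. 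There are only $25$ ordered entries (really $15$ after using symmetry), so this is routine; the one point requiring care is to make sure the catch-all clause is applied exactly to the pairs not already covered by an earlier clause, so that the case analysis is genuinely exhaustive and the table is well defined.

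For the second assertion, let $\pi:\Sigma\to\{0,1,P,F\}$ be the map that fixes $0$, $1$ and $P$ and sends both $F_0$ and $F_1$ to $F$ (the alphabet of the $4$-valued balancing gate of Section~\ref{sec:untest}). The plan is first to check that $g$ is constant on $\pi$-fibres in each coordinate, i.e.\ that replacing an input $F_0$ by $F_1$ or vice versa never changes the $\pi$-class of the output: inspecting the clauses, any pair one of whose entries lies in $\{F_0,F_1\}$ has output in $\{F_0,F_1\}$, so $\pi(g(a,b))$ depends only on $(\pi(a),\pi(b))$. This shows $g$ descends to a well-defined binary operation $\bar g$ on $\{0,1,P,F\}$. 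Then one verifies $\bar g$ coincides with the $4$-valued balancing gate by running over its $16$ input pairs: $(0,0)\mapsto 0$, $(1,1)\mapsto 1$, $(0,1)=(1,0)\mapsto P$, $(P,P)\mapsto P$, and every remaining pair (those involving $F$, and the mixed pairs $(0,P),(P,0),(1,P),(P,1)$) maps to $F$, which is exactly the definition of the balancing gate. Finally, since the acceptance condition of the almost-monotone balancing formula is that the root value is not $F_0$ and not $F_1$, i.e.\ that its $\pi$-image is not $F$, the identification $\pi$ turns the almost-monotone balancing formula of height $h$ into the balancing formula of height $h$, so the isomorphism claim is the gate-level statement together with this observation about the acceptance sets.

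I do not expect a genuine obstacle here: both parts are bounded case checks, and essentially all the work is in being systematic. If anything is delicate it is bookkeeping --- keeping the alphabet order straight when reading the table, and confirming that the eight defining clauses partition all $25$ ordered pairs so that nothing is left unchecked. This is also why I would present the $5\times5$ table explicitly: it makes both the exhaustiveness of the case analysis and the monotonicity of every row and column visually immediate, and the unification is then just the observation that merging the $F_0$ and $F_1$ row/column of that table yields the $4\times4$ table of the balancing gate of Section~\ref{sec:untest}.
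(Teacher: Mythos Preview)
Your plan is exactly the paper's: the paper's entire ``proof'' is the remark that the observation is easy ``by just running over all possible outcomes of the gate,'' and you propose to organize that same finite check via symmetry plus a $5\times5$ table, then collapse $F_0,F_1$ for the second half. Your treatment of the unification part (defining $\pi$, checking the gate is constant on $\pi$-fibres, and matching against the $4$-valued table) is correct and complete.

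There is, however, a genuine obstacle in the first half that you will hit the moment you actually fill in the table. With the stated order $0<F_0<P<F_1<1$ and the gate as defined, row $0$ reads
\[
g(0,0)=0,\quad g(0,F_0)=F_0,\quad g(0,P)=F_0,\quad g(0,F_1)=F_1,\quad g(0,1)=P,
\]
and the last step drops from $F_1$ to $P$, a strict decrease since $P<F_1$. (Symmetrically, $g(F_1,0)=F_1>P=g(1,0)$.) So the sentence ``observe that every row and every column is non-decreasing'' is false; the gate as written is \emph{not} coordinate-wise monotone under the stated order. This is a slip in the paper's statement rather than in your method --- your case analysis is precisely the tool that exposes it --- but you should not assert the monotonicity claim without carrying out the check; when you do, flag the counterexample.
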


In particular, the above observation implies that the almost-monotone balancing formula has the same property testing lower bound as that of the balancing formula, using the same proof with the same distributions $D_Y$ and $D_N$. However, we would like a completely monotone formula. For that we use a monotone decreasing acceptance condition; we note that a formula with a monotone increasing acceptance condition can be obtained from it by just ``reversing'' the order over the alphabet.

\begin{definition}
The {\em monotone sub-balancing formula} is defined the same as the almost-monotone balancing formula, with the exception that the formula accepts if and only if the value output at the root is not $F_1$ or $1$.
\end{definition}

By Observation \ref{ob:isomon}, the distribution $D_Y$ is also supported by inputs satisfying the monotone sub-balancing formula. To analyze $D_N$, note the following.

\begin{lemma}\label{lem:subbal}
An assignment $a_0\in\{0,1\},\ldots,a_{2^h-1}\in\{0,1\}$ to $x_0,\ldots,x_{2^h-1}$ for which for some $0<k\leq h$ and some $0\leq i<2^{h-k}$, the number of $1$ values in $a_{i2^k},\ldots,a_{(i+1)2^k-1}$ is more than $2^{k-1}$ and less than $2^k$ cannot be accepted by the formula.
\end{lemma}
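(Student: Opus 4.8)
The statement to prove is Lemma~\ref{lem:subbal}: if some ``binary search interval'' of the assignment contains strictly between $2^{k-1}$ and $2^k$ ones, then the monotone sub-balancing formula rejects (i.e., the root evaluates to $F_1$ or $1$). The natural approach is to mimic the induction used in the proof of Lemma~\ref{lem:baliff}, but tracking more refined information, since now we must distinguish $F_0$ from $F_1$ and also record when a subtree ``overshoots'' half. Concretely, I would prove by induction on $k$ the following invariant for a gate $v$ at height $k$ with $\mathrm{num}_1(v)$ ones among its $2^k$ descendant leaves: if $\mathrm{num}_1(v)=0$ then $v$ evaluates to $0$; if $\mathrm{num}_1(v)=2^k$ then $v$ evaluates to $1$; if $\mathrm{num}_1(v)=2^{k-1}$ then $v$ evaluates to $P$; if $0<\mathrm{num}_1(v)<2^{k-1}$ then $v$ evaluates to $F_0$; and if $2^{k-1}<\mathrm{num}_1(v)<2^k$ then $v$ evaluates to $F_1$. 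Once this invariant is established, the lemma is immediate: the hypothesis says the interval corresponding to some height-$k$ vertex $v_0$ has $2^{k-1}<\mathrm{num}_1(v_0)<2^k$, so $v_0$ evaluates to $F_1$, and then I just need to check that $F_1$ propagates up to the root, which it does because any pair of inputs containing $F_1$ outputs $F_1$ — so the root is $F_1$ and the formula rejects.

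The induction step is where the bulk of the work lies, and it is a finite case analysis on the pair of values held by the two children $u,w$ of $v$ (each at height $k-1$). The base case $k=1$ is a direct reading of the gate table: $(0,0)\mapsto 0$, $(1,1)\mapsto 1$, $(0,1),(1,0)\mapsto P$, and there is no ``strictly between'' case when $k=1$ since $0<\mathrm{num}_1<2^0$ is empty. For $k>1$, by the induction hypothesis each child is in $\{0,F_0,P,F_1,1\}$ and its value is determined by its own count in $\{0,(0,2^{k-2}),2^{k-2},(2^{k-2},2^{k-1}),2^{k-1}\}$. I would then verify: if either child is already $F_1$, the output is $F_1$ and indeed the combined count exceeds $2^{k-1}$ (since one half alone already has $>2^{k-2}$ ones, and... actually one must be slightly careful here — see below); if either child is $F_0$ (and neither is $F_1$), the gate table gives $F_0$ unless the other child is $1$ in which case it gives $F_1$, and one checks the counts match ($F_0 + 0/F_0 = $ something in $(0,2^{k-1})$, while $F_0+1$ gives a count in $(2^{k-1},2^k)$); the ``clean'' cases $(0,0),(1,1),(P,P),(1,0),(0,1)$ reproduce exactly the Lemma~\ref{lem:baliff} analysis; and the remaining cases $(0,P),(P,0)\mapsto F_0$ with count $2^{k-2}\in(0,2^{k-1})$, and $(1,P),(P,1)\mapsto F_1$ with count $2^{k-2}+2^{k-1}\in(2^{k-1},2^k)$. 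Each line is a one-step check against the gate definition and an arithmetic inequality on powers of two.

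The one genuine subtlety — and what I expect to be the main obstacle — is that an $F_1$ value at a child no longer carries a precise count: once a subtree has ``gone bad'', I only know its count lies in an interval, not its exact value, so I cannot naively assert that the parent's count lands in $(2^{k-1},2^k)$. The fix is that I do not actually need the parent's count to be in that range in the $F_1$-propagation step; rather, the lemma only asserts non-acceptance, so it suffices to carry the weaker conclusion ``the value is $F_1$'' up the tree without re-deriving it from counts. In other words, I should phrase the invariant so that the five count-cases above are the hypotheses for a \emph{clean} subtree, and separately record the monotone-propagation fact that $F_0$ can only turn into $F_0$ or $F_1$ going up, and $F_1$ only into $F_1$; this is exactly Observation~\ref{ob:isomon} combined with a glance at the gate table. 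So the cleanest write-up proves the count-to-value invariant for the restricted set of counts $\{0,2^{k-1},2^k\}$ exactly as in Lemma~\ref{lem:baliff}, then observes that the first moment some interval has a count strictly between $2^{k-1}$ and $2^k$, the corresponding gate outputs $F_1$ (by the case analysis on its two children, each of which is still clean or has count in $\{0,2^{k-2},2^{k-1}\}$-land), and finally notes $F_1$ is absorbing. That keeps the argument short and sidesteps having to track imprecise counts through bad subtrees.
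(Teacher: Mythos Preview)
Your five-case invariant is false as stated. Take $k=3$ and the assignment $(1,1,1,0,0,0,0,0)$: the height-$3$ root has $\mathrm{num}_1=3\in(0,2^{k-1})$, so your invariant predicts $F_0$, but the actual evaluation is
\[
(1,1)\mapsto 1,\ (1,0)\mapsto P,\ (0,0)\mapsto 0,\ (0,0)\mapsto 0;\quad (1,P)\mapsto F_1,\ (0,0)\mapsto 0;\quad (F_1,0)\mapsto F_1.
\]
You do notice this (``one must be slightly careful here'') and pivot to the right idea: locate the minimal $k$ where an interval has count in $(2^{k-1},2^k)$, argue that gate is $F_1$, then use that $F_1$ is absorbing. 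That is exactly the paper's approach.

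However, your description of the pivot still contains a gap. You write that at this minimal $k$ the two children are ``still clean or [have] count in $\{0,2^{k-2},2^{k-1}\}$-land''. Minimality of $k$ only rules out child counts in $(2^{k-2},2^{k-1})$; a child can perfectly well have count in $(0,2^{k-2})$, and then it is \emph{not} clean (indeed, by the counterexample above, you cannot pin down whether such a child is $F_0$ or $F_1$). The arithmetic you need is: since the two child counts lie in $[0,2^{k-2}]\cup\{2^{k-1}\}$ and sum to something in $(2^{k-1},2^k)$, exactly one child has count $2^{k-1}$ (hence value $1$) and the other has count in $(0,2^{k-2}]$. The paper then finishes not by claiming the second child is clean, but by the weaker observation that it does not evaluate to $0$ or $1$ (its count is neither $0$ nor $2^{k-1}$), and a glance at the gate table shows $(1,P)$, $(1,F_0)$, $(1,F_1)$ all output $F_1$. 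With that one-line fix your final plan is correct and coincides with the paper's proof.
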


\begin{proof}
We set $u$ to be the gate whose descendant variables are exactly $a_{i2^k},\ldots,a_{(i+1)2^k-1}$. We first note that it is enough to proof that $u$ evaluates to $F_1$, because then by the definition of the gates the root will also evaluate to $F_1$. We then use induction over $k$, while referring to Observation \ref{ob:isomon} and the proof of Lemma \ref{lem:baliff}. The base case $k=1$ is true because then no assignment satisfies the conditions of the lemma.

If any of the two children of $u$ evaluates to $F_1$ the we are also done by the definition of the gate. The only other possible scenario (using induction) is when one of the children $v$ of $u$ must evaluate to $1$, and hence all of its $2^{k-1}$ descendant variables are $1$, while for the other child $w$ of $u$ some of the descendant variables are $0$ and some are $1$.  But this means that $w$ does not evaluate to either $0$ or $1$, which again means that $u$ evaluates to $F_1$.
\end{proof}

This yields the following.

\begin{lemma}\label{lem:subfar}
With probability $1-o(1)$, an input chosen according to $D_N$ will be $1/12$-far from satisfying the monotone sub-balancing formula.
\end{lemma}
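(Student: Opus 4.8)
The plan is to condition on the level $k$ chosen in the definition of $D_N$ and run a counting argument on the gates at that level, using Lemma~\ref{lem:subbal} as the only structural ingredient. Call a gate $v$ at level $k$ \emph{heavy} if the quadruple $(z_{i,0},z_{i,1},z_{i,2},z_{i,3})$ feeding the two grandchild-subtrees of $v$ was chosen with three $1$'s and one $0$, so that exactly $\tfrac34 2^k$ of the $2^k$ variables descending from $v$ are set to $1$. Since $2^{k-1}<\tfrac34 2^k<2^k$ for every $k$ in the relevant range, Lemma~\ref{lem:subbal} tells us that as long as \emph{some} heavy gate still has a count of $1$'s strictly between $2^{k-1}$ and $2^k$, the (modified) assignment is rejected. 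Hence, for a modified assignment to be accepted, the count of $1$'s in the subtree of every heavy gate $v$ must be brought to at most $2^{k-1}$ or exactly $2^k$; starting from $\tfrac34 2^k$, either target requires at least $\tfrac14 2^k$ coordinate flips inside the subtree of $v$. Because the subtrees of distinct level-$k$ gates are pairwise disjoint, if there are $m$ heavy gates then any accepting modification flips at least $m\cdot\tfrac14 2^k$ of the $2^h$ variables, so the input is at least $\tfrac{m}{4\cdot 2^{h-k}}$-far from satisfying the formula.

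It then remains to lower bound $m$. The $2^{h-k}$ level-$k$ gates are heavy independently, each with probability $1/2$ (four of the eight equiprobable quadruple patterns have three $1$'s), so $m$ is distributed as $\mathrm{Bin}(2^{h-k},1/2)$. Whenever $m\geq\tfrac13 2^{h-k}$ the farness is at least $\tfrac1{12}$, so it suffices to show that $m<\tfrac13 2^{h-k}$ occurs with probability $o(1)$ overall. Fix a slowly growing threshold $t=\lceil\log h\rceil$. For $k\leq h-t$ we have $2^{h-k}\geq 2^{t}$, and a Chernoff bound (lower tail at a constant multiplicative deviation) gives $\Pr[m<\tfrac13 2^{h-k}\mid k]\leq e^{-\Omega(2^{h-k})}\leq e^{-\Omega(2^{t})}$; summing over the at most $h$ values of $k$ still contributes only $o(1)$. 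For $k>h-t$, we simply use that $k$ is uniform in $\{2,\dots,h\}$, so $\Pr[k>h-t]=O(t/h)=o(1)$, and we claim nothing about farness there. A union bound over these two events yields the lemma.

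The only genuine subtlety, and the reason the constant drops from the $1/4$ of Lemma~\ref{lem:satnfar} to $1/12$, is that Lemma~\ref{lem:subbal} detects only ``too many $1$'s'' intervals and not ``too few'': this forces the averaging to run over the heavy gates alone, discarding the equally numerous ``light'' gates, and it also forces us to discard the topmost values of $k$, since for $k$ near $h$ there may be no heavy gate at all — indeed for $k=h$ with the light quadruple the root evaluates to $F_0$ and the input is honestly accepted. Arranging that both exceptional events (too few heavy gates, and $k$ too large) have probability $o(1)$ simultaneously is exactly what the uniform choice of $k$ and the choice $t=\lceil\log h\rceil$ accomplish, and getting this bookkeeping right is the main thing to watch; the rest of the argument is the short deterministic flip-counting of the first paragraph.
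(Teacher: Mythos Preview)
Your proof is correct and follows essentially the same approach as the paper's: show that with probability $1-o(1)$ at least a $1/3$ fraction of the level-$k$ quadruples are ``heavy'' (three $1$'s, one $0$), and then invoke Lemma~\ref{lem:subbal} together with the disjoint-subtree flip count to get farness at least $1/12$. The paper's proof is terser and states only that a large deviation inequality gives the $1/3$ fraction with probability $1-o(1)$; you have spelled out the part the paper leaves implicit, namely that for $k$ very close to $h$ the number of quadruples $2^{h-k}$ is too small for Chernoff, and this is handled by the uniform choice of $k$ making $\Pr[k>h-t]=o(1)$ for $t=\lceil\log h\rceil$.
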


\begin{proof}
This is almost immediate from Lemma \ref{lem:subbal}, as a large deviation inequality implies that with probability $1-o(1)$, more than $1/3$ of the quadruples $(z_{i,0},z_{i,1},z_{i,2},z_{i,3})$ as per the definition of $D_N$ will have three $1$'s and one $0$.
\end{proof}

Now we can prove a final lower bound.

\begin{theorem}\label{thm:untestmon}
Testing for being a satisfying assignment of the monotone sub-balancing formula of height $h$ requires at least $\Omega(h)$ queries for a non-adaptive test and $\Omega(\log h)$ queries for a possibly adaptive one.
\end{theorem}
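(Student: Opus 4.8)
The plan is to reproduce the argument of Theorem~\ref{thm:untest} almost verbatim, using the very same distributions $D_Y$ and $D_N$, and adjusting only for the two places where the monotone construction behaves slightly differently. First I would record that $D_Y$ is supported entirely on satisfying assignments of the monotone sub-balancing formula: this is exactly the content of the remark following Observation~\ref{ob:isomon}, since collapsing $F_0$ and $F_1$ turns the monotone balancing gate into the original balancing gate, under which $D_Y$-inputs evaluate to $P$ at every vertex at level $k$ or above, and $P$ is an accepting root value (it is neither $F_1$ nor $1$). Second, in place of the clean statement of Lemma~\ref{lem:satnfar} for $D_N$, I would invoke Lemma~\ref{lem:subfar}: with probability $1-o(1)$ a $D_N$-input is $1/12$-far from satisfying the monotone sub-balancing formula.

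The indistinguishability half carries over without change. The distributions $D_Y$ and $D_N$ are defined purely as distributions over the $\{0,1\}$-valued leaf variables $x_0,\dots,x_{2^h-1}$ and make no reference to the alphabet or to the gate semantics, so Lemma~\ref{lem:missdiff} applies word for word. Hence, exactly as in the proof of Theorem~\ref{thm:untest}, for any query set $Q$ with $|Q|=o(h)$ the set $H$ of levels of lowest common ancestors of subsets of $Q$ has $|H|\le |Q|=o(h)$, so neither $k$ nor $k-1$ lies in $H$ with probability $1-o(1)$, and therefore the two unconditional distributions over the answers to $Q$ are within variation distance $o(1)$.

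It remains to feed this into Yao's method, for which I would use a version of Lemma~\ref{lem:adaptyao} that tolerates $D_N$ being far only with high probability. Concretely, let $D_N'$ be $D_N$ conditioned on the event of Lemma~\ref{lem:subfar}; then $D_N'$ is supported on $1/12$-far inputs and differs from $D_N$ in variation distance $o(1)$, so the outcome distribution of any $o(h)$-query non-adaptive algorithm still agrees up to $o(1)$ between $D_Y$ and $D_N'$. Since such an algorithm must accept under $D_Y$ with probability at least $2/3$ and reject under $D_N'$ with probability at least $2/3$, we get $2/3\le 1/3+o(1)$, a contradiction once $h$ is large enough; this gives the $\Omega(h)$ non-adaptive bound. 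The adaptive bound follows by the standard reduction turning an adaptive algorithm making $q$ queries into a non-adaptive one making $2^{O(q)}$ queries (query the whole decision tree in advance), whence $q=\Omega(\log h)$.

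The argument is essentially routine given the lemmas already proved; the only point that needs a moment's care is the passage from ``$D_N$ is far with probability $1-o(1)$'' to a genuine lower bound, i.e.\ checking that the $o(1)$ slack in Lemma~\ref{lem:subfar} can simply be absorbed into the $o(1)$ variation-distance slack coming from Lemma~\ref{lem:missdiff}. I would also double-check, when citing Observation~\ref{ob:isomon}, that the sub-balancing acceptance condition is indeed met by the all-$P$ evaluation that $D_Y$ produces, which it is.
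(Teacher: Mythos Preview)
Your proposal is correct and follows essentially the same approach as the paper's proof, which simply says the argument of Theorem~\ref{thm:untest} goes through using the same $D_Y$ and $D_N$, invoking Observation~\ref{ob:isomon} for $D_Y$ and Lemma~\ref{lem:subfar} for $D_N$, and noting that the $o(1)$ probability of $D_N$ not producing a far input is absorbed into Yao's method. Your treatment is in fact more explicit than the paper's in handling the $o(1)$ slack via the conditioned distribution $D_N'$.
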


\begin{proof}
This follows exactly the proof of the lower bound for the balancing formula. Due to Observation \ref{ob:isomon} and Lemma \ref{lem:subfar} we can use the same $D_Y$ and $D_N$ (the $o(1)$ probability of $D_N$ not producing a far input makes no essential difference for the use of Yao's method).
\end{proof}

\bibliographystyle{plain}
\bibliography{TF}




\end{document}